\documentclass[draftcls,onecolumn]{IEEEtran}
\usepackage[T1]{fontenc}
\usepackage[utf8]{inputenc}
\usepackage{cite}
\usepackage{physics}
 \usepackage{tikz}
 \usepackage{pgfplots}
\pgfplotsset{compat=1.18}\usepackage{amsmath}
\usetikzlibrary{arrows.meta,positioning,calc}
\usepackage{amsmath,amssymb,amsthm,mathtools,bm}
\usepackage{bbm}
\usepackage{dsfont}
\usepackage[shortlabels]{enumitem}
\usepackage{xcolor}
\usepackage[hypertexnames=false,hyperfootnotes=false]{hyperref}
\hypersetup{
    colorlinks=true, 
    linkcolor= blue, 
    citecolor=blue 
}
\usepackage[nameinlink,noabbrev]{cleveref}
\usepackage{stmaryrd}
\usepackage{setspace}
\newtheorem{theorem}{Theorem}

\newtheorem{lemma}{Lemma}
\newtheorem{corollary}{Corollary}
\theoremstyle{definition}
\newtheorem{definition}{Definition}

\DeclareMathOperator{\Prb}{\mathbb{P}}

\setlist{nosep}

\allowdisplaybreaks

\usepackage{algorithm} \usepackage{algpseudocode}
\newcounter{algsubstate}

\newenvironment{algsubstates}
  {\setcounter{algsubstate}{0}
   \renewcommand{\State}{
     \refstepcounter{algsubstate}
     \Statex {\footnotesize\alph{algsubstate}:}\space}}
  {}
\newcommand\numberthis{\addtocounter{equation}{1}\tag{\theequation}}
\title{Quantum-State Oblivious Transfer over Quantum Erasure Channels}
\author{R\'{e}mi A. Chou
\thanks{R. A. Chou is with the Department of Computer Science and Engineering,
The University of Texas at Arlington, Arlington, TX 76019 USA
(e-mail: remi.chou@uta.edu).}
}
\begin{document}

\maketitle
\begin{abstract}
We study quantum-state oblivious transfer over the quantum erasure channel under information-theoretic security guarantees. Specifically, Alice holds two quantum systems, Bob wishes to recover one of them, and the parties have access to independent channel uses and authenticated public classical communication, with no initial shared entanglement. We consider both honest-but-curious and malicious behavior. We establish finite-blocklength converse and achievability bounds, which match in the honest-but-curious setting. The malicious setting construction combines interactive hashing with tests of whether Bob received selected channel outputs, and distributes each input across several qubits so that erasures conceal an entire input from a dishonest receiver, even when the inputs are entangled with an external reference. We also derive fixed-error asymptotic expansions and show that   both malicious and honest-but-curious settings have the same capacity.
\end{abstract}

\section{Introduction}

Consider an oblivious transfer setting, where Alice holds two messages, Bob chooses one of them, and the protocol must guarantee that Bob learns only his chosen message while Alice learns nothing about his choice. First introduced in \cite{rabin1981exchange,even1985randomized}, oblivious transfer is a fundamental primitive in cryptography because it is complete for secure two-party computation \cite{kilian1988founding}.

In the information-theoretic setting, oblivious transfer cannot be realized from noiseless communication alone, and additional resources such as noisy channels or correlated randomness are needed \cite{crepeau1988achieving,crepeau2004efficient,imai2006oblivious,ahlswede2013oblivious,nascimento2008oblivious}. In quantum settings, noiseless quantum communication is also insufficient, in general, to obtain oblivious transfer satisfying both correctness and the privacy requirements of Alice and Bob without additional assumptions \cite{lo1997insecurity,salvail2009power,buhrman2012complete}. This motivates us to study   noisy quantum channels as a resource for realizing oblivious transfer.

In this paper, we study a quantum-state version of oblivious transfer over the quantum erasure channel \cite{bennett1997capacities} in honest-but-curious and malicious settings. In the former, the parties follow the prescribed operations but may keep all private side information generated during the protocol.\footnote{Related
retained-view   formulations have been studied in
quantum two-party cryptography
\cite{salvail2009power}.} In the latter, either party may replace its prescribed operations by arbitrary quantum operations, and security is defined through an ideal operation that transfers only one  quantum input. Specifically, Alice holds two $m$-qubit systems $A_1$ and $A_2$, which may be arbitrarily entangled with an inaccessible reference system $R$. Bob has a choice $C\in\{1,2\}$ and wishes to recover $A_C$. The parties have access to $n$ independent uses of the quantum erasure channel   and authenticated public classical communication, but no initial shared entanglement. In each channel use, Alice prepares the quantum input, possibly as a function of her private registers and the public communication, and Bob receives the corresponding channel output. At the end of the protocol, Bob must recover the chosen quantum system $A_C$ exactly, including its correlations with the reference system $R$. At the same time, Alice must not learn Bob's choice $C$, and Bob must obtain no information about the non-chosen quantum system $A_{\bar C}$ beyond what is already contained in its correlations with $A_C$. More specifically, Bob’s choice privacy requires that the public communication must have the same distribution for both values of $C$, and, once the public communication is fixed, Alice’s final state must be identical for both choices.

Our main contribution is a  finite-blocklength converse for both settings, a matching finite-blocklength achievability bound in the honest-but-curious setting, and finite-blocklength bounds and a matching capacity result in the malicious setting. This characterization holds for arbitrary input states, including
states entangled with an external reference, while the privacy
guarantees account for all private quantum side information retained
by the parties. The main challenge in the achievability is to use Bob's knowledge of
which channel uses were erased to protect the non-chosen state without
revealing his choice through the public communication. To address this, the honest-but-curious achievability scheme uses teleportation over entanglement generated through the erasure channel \cite{bennett1993teleporting}. Specifically, Alice first sends halves of maximally entangled pairs through the channel. Bob selects unerased positions for the chosen state and erased positions for the non-chosen state, and publicly labels the two selected sets by the corresponding states. Alice teleports both quantum states using the corresponding positions. The intuition behind the correctness and privacy
guarantees is as follows.  Correctness follows because the positions assigned to the chosen state are unerased. Bob’s choice remains private because the public communication has the same distribution for either choice and, once it is fixed, Alice’s final state does not depend on which positions were erased. Privacy of the non-chosen state follows because the teleportation outcomes are uniform and Bob receives only erasure flags at the positions assigned to that state.

An additional challenge for the achievability in the  malicious setting is that Bob may falsely assign received positions to both states. Our construction combines interactive hashing and tests of his possession of selected outputs, following the approach used for classical erasure-channel oblivious transfer \cite{crepeau2006optimal,pinto2011achieving,dowsley2017oblivious}, with a random Clifford encoding that hides a quantum
input from Bob, up to a small error, when sufficiently
many of its encoded qubits are erased. To prove security against a dishonest Alice or Bob, we compare a real execution of the protocol with an ideal experiment
in which a trusted third party, who receives Alice's two
inputs and Bob's choice, delivers only the chosen input to Bob, without
revealing Bob's choice to Alice. In this ideal experiment, we reproduce the dishonest party's
final private information and the public communication
using only that party's initial information  and any
input received from the trusted third party.
We then show that the resulting joint final state
is close to that of the real execution. 

The intuition behind the converse is as follows. The first part is a Schmidt-rank growth argument across the bipartition between Bob's laboratory and the remaining systems. Since each unerased qubit can increase this Schmidt rank by at most a factor of two, exact recovery of an $m$-qubit state requires at least $m$ unerased channel uses. The second part compares an execution in which Bob requests the first state with one in which he requests the second. A purification argument then shows that, unless at least one channel use is erased per message qubit, Bob cannot be prevented from obtaining information about the state he did not request. Together with our achievability results, the converse establishes the capacity and shows that it is the same in the honest-but-curious and malicious settings.

\emph{Related work.} Our work differs from several related lines of work. Classical oblivious transfer over noisy channels concerns the transfer of classical strings \cite{crepeau1988achieving,imai2006oblivious,ahlswede2013oblivious,nascimento2008oblivious,shekofteh2026capacity,pinto2011achieving,dowsley2017oblivious,suda2025improved,tyagi2015converses}, whereas here the transferred objects are  quantum states that may be entangled with an external reference. This requires correctness and privacy conditions that preserve quantum correlations, not only classical distributions. Our setting is thus also distinct from quantum protocols for oblivious transfer of classical messages \cite{crepeau1988achieving,bennett1991practical,crepeau1994quantum}. This literature includes protocols assuming access to secure bit commitment \cite{yao1995security,mayers1996quantum}, quantitative analyses of imperfect and semi-honest quantum oblivious transfer \cite{chailloux2013lower,chailloux2016optimal,amiri2021imperfect}, and protocols based on bounded or noisy quantum storage assumptions \cite{damgaard2008cryptography,wehner2008cryptography,konig2012unconditional,lupo2023error,furrer2018continuous, erven2014experimental} or computational assumptions \cite{damgaard2009improving,diamanti2026practical,grilo2021oblivious,bartusek2021one}. For a review of quantum protocols for oblivious transfer of classical messages and their security assumptions, see \cite{santos2022quantum}. Beyond oblivious transfer of classical messages, oblivious transfer for quantum messages has also been studied
in two-server models
\cite{hayashi2024two,yang2025quantum,huang2026two}
and in an all-or-nothing formulation
\cite{cai2025counterfactual}. In contrast, we consider a single sender, a single receiver, and a noisy quantum channel as the physical resource, and our goal is a finite-blocklength and capacity characterization. Finally, our definition of security against a dishonest Alice or Bob is related to classical statistical-security conditions \cite{crepeau2008statistical}, quantum frameworks for security and composition \cite{fehr2009composing,unruh2010universally}, and secure two-party evaluation of quantum operations against active adversaries \cite{dupuis2012actively} through the shared requirement that a real execution be reproducible, up to a small error, in an ideal experiment.

The remainder of the paper is organized as follows. Section~\ref{sec:problem_statement} formalizes the problem statement. Section~\ref{secMR} presents our main results.  Sections~\ref{sec:ach} and~\ref{sec:converse} prove the achievability and converse, respectively, for the honest-but-curious setting. Section~\ref{sec:malicious_ach} proves the achievability  for the malicious setting. Finally, Section~\ref{sec:concl} provides concluding remarks.

\section{Notation}

For $n\in\mathbb N$, define $[n]\triangleq[1,n]\cap \mathbb N$.
For a finite set $\mathcal S$, let $|\mathcal S|$
denote its cardinality and $\binom{\mathcal S}{k}$
the collection of its $k$-element subsets.
For indexed variables or registers, define
$X^n\triangleq(X_i)_{i\in[n]}$ and
$X_{\mathcal S}\triangleq(X_i)_{i\in\mathcal S}$.
$\oplus$ denotes addition modulo two.
The logarithms $\log$ and $\ln$ are to base two
and base $e$, respectively.

For a quantum register $A$, let $\mathcal L(A)$
denote the space of linear operators on its
Hilbert space.
Let $I_A$ and $\operatorname{id}_A$ denote the
identity operator and identity channel on $A$,
respectively.
The map $\operatorname{id}_{A\to B}$ denotes
the identity channel with output register labeled $B$.
For an operator $X$, define its trace norm as
$\|X\|_1\triangleq\operatorname{Tr}\sqrt{X^\dagger X}$,
where $\dagger$ denotes the adjoint.
The superscript $T$ denotes the transpose in
the computational basis.
A quantum channel is a completely positive,
trace-preserving (CPTP) map.

For a state $\rho^{AB}$, 
define $\rho^A\triangleq\operatorname{Tr}_B[\rho^{AB}]$.
For a pure state $|\psi\rangle$, write
$\psi\triangleq|\psi\rangle\langle\psi|$.
States are positive semidefinite operators with
unit trace unless explicitly called subnormalized,
in which case their trace is at most one. For a  state $\rho^{ABD}$, define the
von Neumann entropy
$H(A)_\rho\triangleq-\operatorname{Tr}[\rho^A\log\rho^A]$,
the quantum mutual information
$I(A;B)_\rho\triangleq H(A)_\rho+H(B)_\rho-H(AB)_\rho$,
and the quantum conditional mutual information
$I(A;B\mid D)_\rho\triangleq
H(AD)_\rho+H(BD)_\rho-H(D)_\rho-H(ABD)_\rho$, where the state subscript is omitted when clear
from context.
\section{Problem Statement}
\label{sec:problem_statement}

Consider the qubit erasure channel $
\mathcal E_\epsilon(\rho)
=
(1-\epsilon)\rho+\epsilon |e\rangle\langle e|$, 
where $|e\rangle$ is an erasure flag orthogonal to the input
qubit space.  Alice holds two quantum systems and Bob wishes to
recover one of them, using $n$   uses of $\mathcal E_\epsilon$ and
authenticated noiseless public classical communication.
In honest executions, the remaining initial private registers
are independent of the inputs and in a product state across the
parties, i.e., no entanglement or secret correlated randomness is
initially shared.

We formalize such a quantum oblivious transfer
protocol in the following definition. In the honest-but-curious setting, both parties follow the
prescribed protocol. In the malicious setting, at most one party,
fixed before execution, may replace its operations arbitrarily.

\begin{definition} \label{def1}
An $(n,m)$ quantum oblivious transfer protocol
over $\mathcal E_\epsilon$ consists of:
\begin{itemize}
\item Two $m$-qubit message systems $A_1$ and $A_2$, held by Alice,
with joint pure state $\Psi^{A_1A_2R}$, where the reference $R$ is
inaccessible. We also define $\Psi^{A_cR}
\triangleq
\operatorname{Tr}_{A_{\bar c}}
\!\left[\Psi^{A_1A_2R}\right]$,
$c\in\{1,2\}$ 
with $\bar c\triangleq 3-c$.

\item Bob's choice $C$, uniformly distributed over $\{1,2\}$
and independent of $\Psi$, where $C=c$ means that Bob wishes
to obtain $A_c$.

\item $n$ independent uses of $\mathcal E_\epsilon$, each
modeled by the isometric extension $V_\epsilon:A\to BW$, $
V_\epsilon|\psi\rangle^A
=
\sqrt{1-\epsilon}\,|\psi\rangle^B|e\rangle^W
+
\sqrt{\epsilon}\,|e\rangle^B|\psi\rangle^W$,
where $B$ is Bob's output and $W$ is an inaccessible environment.
\end{itemize}

The protocol proceeds over $n$ rounds:
\begin{enumerate}
\item In round $i\in[n]$:
\begin{enumerate}
\item Alice and Bob exchange a finite sequence of authenticated
public classical messages. These messages may depend on their
current private registers and all earlier public messages.

\item Alice prepares one qubit $X_i$ from her current private
register and all public messages exchanged so far, and sends
$X_i$ through $\mathcal E_\epsilon$.

\item Bob receives the channel output $Y_i$, which is added
to his private register.
\end{enumerate}

\item After the $n$-th round, Alice and Bob may exchange a final
finite sequence of authenticated public classical messages.
Let $M_0$ denote the collection of all public messages.

\item Bob applies a final decoding operation depending on $C$.
If $C=c$, this operation produces an output register $B_c$,
a residual private register $E_c$, and a public flag
$F\in\{0,1\}$, where $F=1$ means that Bob aborts.
The overall public communication is
$M\triangleq(M_0,F)$.
\end{enumerate}
Let $\mathsf S$ be the classical register whose value is the
set of erased channel positions,
\begin{equation*}
\mathsf S
\triangleq
\left\{i\in[n]:Y_i=|e\rangle\right\},
\end{equation*}
which is retained by Bob as a subregister of $E_c$.
Let $Q_A$ denote Alice's final private register.
We assume no trusted deletion, i.e., every local operation is implemented isometrically,
with its environment, randomness, and measurement records
retained in $Q_A$ or $E_c$.
\hfill\(\diamond\) \end{definition}

For an honest execution with choice $c\in\{1,2\}$ and input
$\Psi^{A_1A_2R}$, let
$\Omega^{(c)}_{Q_A M B_cE_cRW^n}(\Psi)$ denote the final global
state, and define
\begin{equation*}
\omega^{(c)}_{Q_A M_0 F B_cE_cR}(\Psi)
\triangleq
\operatorname{Tr}_{W^n}
\!\left[
\Omega^{(c)}_{Q_A M B_cE_cRW^n}(\Psi)
\right].
\end{equation*}
For $f\in\{0,1\}$, define
\begin{equation*}
p_f^{(c)}(\Psi)
\triangleq
\operatorname{Tr}
\!\left[
\left(
I_{Q_A M_0 B_cE_cR}\otimes |f\rangle\!\langle f|^F
\right)
\omega^{(c)}_{Q_A M_0 F B_cE_cR}(\Psi)
\right].
\end{equation*}
For any subset of registers
$\mathcal S\subseteq\{Q_A,M_0,B_c,E_c,R\}$, and whenever
$p_f^{(c)}(\Psi)>0$, define
\begin{equation*}
\omega^{(c)}_{\mathcal S\,|\,F=f}(\Psi)
\triangleq
\frac{
\operatorname{Tr}_{\mathcal S^c F}
\!\left[
\left(
I_{Q_A M_0 B_cE_cR}\otimes |f\rangle\!\langle f|^F
\right)
\omega^{(c)}_{Q_A M_0 F B_cE_cR}(\Psi)
\right]
}{
p_f^{(c)}(\Psi)
},
\end{equation*}
where $\mathcal S^c$ denotes the complement of $\mathcal S$
inside $\{Q_A,M_0,B_c,E_c,R\}$.
Conditional states given $M_0=m_0$, $F=f$, and $\mathsf S=s$
are defined analogously whenever the corresponding joint
probability is positive.

\begin{definition}
\label{def:qot_delta_achievability}
Fix $n\in\mathbb N$, $m\in\mathbb Z_{\geq0}$,
$\delta\in[0,1[$.
The message size $m$ is $\delta$-achievable in the
honest-but-curious setting if an $(n,m)$ protocol satisfies
Conditions 1--4 below.

\noindent\emph{1. Abort probability.}
\begin{equation}
\sup_{\Psi,c}p_1^{(c)}(\Psi)\leq\delta,
\label{eq:qot_abort}
\end{equation}
where the supremum is over all $c\in\{1,2\}$ and all pure
states $\Psi^{A_1A_2R}$.

\medskip
\noindent\emph{2. Correctness.}
For every $c\in\{1,2\}$ and every input state
$\Psi^{A_1A_2R}$, if $p_0^{(c)}(\Psi)>0$, then
\begin{equation}
\omega^{(c)}_{B_cR\,|\,F=0}(\Psi)
=
(\operatorname{id}_{A_c\to B_c}\otimes\operatorname{id}_R)
\!\left(\Psi^{A_cR}\right).
\label{eq:qot_correctness}
\end{equation}

\medskip
\noindent\emph{3. Receiver's choice privacy.}
For every input $\Psi^{A_1A_2R}$, every $m_0$, and every
$f\in\{0,1\}$,
\begin{equation}
\Prb[M_0=m_0,F=f\mid C=1]
=
\Prb[M_0=m_0,F=f\mid C=2],
\label{eq:qot_receiver_privacy1}
\end{equation}
and there exists a state $\tau_{Q_A Rm_0f}(\Psi)$ such that,
for every $c\in\{1,2\}$ and every erased set $s$ for which
$\Prb[M_0=m_0,F=f,\mathsf S=s\mid C=c]>0$,
\begin{equation}
\omega^{(c)}_{Q_A R
\mid M_0=m_0,\,F=f,\,\mathsf S=s}(\Psi)
=
\tau_{Q_A Rm_0f}(\Psi).
\label{eq:qot_receiver_privacy2}
\end{equation}

\medskip
\noindent\emph{4. Non-chosen state privacy.}
For each $c\in\{1,2\}$, there exists a CPTP map $\mathsf{Sim}_{c,n}:
\mathcal L(A_c)\longrightarrow\mathcal L(B_cE_cM_0F)$, 
independent of $\Psi$, such that, for every input
$\Psi^{A_1A_2R}$,
\begin{equation}
\omega^{(c)}_{B_cE_cM_0FR}(\Psi)
=
(\mathsf{Sim}_{c,n}\otimes\operatorname{id}_R)
\!\left(\Psi^{A_cR}\right).
\label{eq:qot_global_destructive_privacy_general}
\end{equation}
\hfill\(\diamond\) \end{definition}
\medskip
\eqref{eq:qot_abort} ensures that the protocol aborts with probability at most $\delta$, uniformly over Alice's input and Bob's choice. \eqref{eq:qot_correctness}~requires Bob, upon success, to recover the chosen state exactly, including all its correlations with the reference. \eqref{eq:qot_receiver_privacy1} ensures that the public communication does not reveal Bob's choice and that, once this communication is fixed,  \eqref{eq:qot_receiver_privacy2}  ensures that Alice’s final state depends neither on his choice nor on which channel uses were erased, so that  privacy holds for each realization of the erasure channel, rather than only after averaging over its random erasures.  \eqref{eq:qot_global_destructive_privacy_general} requires Bob's complete view   to be reproducible from the chosen state alone, ensuring that he obtains no additional information about the non-chosen state.

\begin{definition}
\label{def:qot_delta_achievabilityb}
Fix $n\in\mathbb N$, $m\in\mathbb Z_{\geq0}$,
$\delta\in[0,1[$, and $\nu\in[0,1]$.
The message size $m$ is $\delta$-achievable in the malicious
setting if an $(n,m)$ protocol  satisfies Conditions 1--4 from Definition \ref{def:qot_delta_achievability} and Condition 5 below.

\medskip
\noindent\emph{5. Security against malicious users.}
In the protocol execution, let $Z$ and $V$ denote the corrupted party's initial and final
private registers, and let $M=(M_0,F)$ be the public communication.
We compare the real execution with an ideal experiment in which
a simulator $\mathsf{Sim}$ replaces the corrupted party, receives $Z$,
and interacts with a trusted genie as follows:

\begin{itemize}
\item \emph{Corrupted Bob.}
The initial state $\rho^{A_1A_2ZR}$ is arbitrary. Using $Z$, the simulator  chooses 
$C'\in\{1,2\}$. The genie transfers only $A_{C'}$ to the simulator, which  may abort before or after this transfer. The simulator generates $V,M$ using only $Z$ and,
if received, $A_{C'}$.  All undelivered inputs are discarded.

\item \emph{Corrupted Alice.}
The register $Z$ contains Alice's inputs, and
$\rho^{CZR}
=
\frac12\sum_{c=1}^2
|c\rangle\!\langle c|^C\otimes\rho_c^{ZR}$,
for arbitrary states $\rho_c^{ZR}$.
Without access to $C$, the simulator jointly generates two $m$-qubit inputs
$A'_1,A'_2$ and the outputs $V,M$ through a CPTP map
$\mathcal A:\mathcal L(Z)
\longrightarrow\mathcal L(A'_1A'_2VM)$.
The genie transfers $A'_C$ to Bob's output register
$B_{\mathrm{out}}$ if $F=0$  and a fixed orthogonal abort symbol $\bot$ if $F=1$.
All undelivered inputs are discarded.
\end{itemize}
To compare the real execution and the ideal experiment, let $B_{\mathrm{out}}$ denote
honest Bob's declared output in the real execution: $B_c$ on
success and $\bot$ on abort.
The comparison is on $HVMR$, where
$H\triangleq CB_{\mathrm{out}}$ when Alice is corrupted and
$H\triangleq\varnothing$ when Bob is corrupted.

For every strategy $\mathcal D$ of the corrupted party, there must exist a simulator
$\mathsf{Sim}_{\mathcal D}$ as above such that, for every initial
state $\rho$ specified above, the normalized final  states
$\rho_{\mathrm{real}}^{HVMR}(\rho,\mathcal D)$ and
$\rho_{\mathrm{ideal}}^{HVMR}(\rho,\mathsf{Sim}_{\mathcal D})$
satisfy
\begin{equation}
\frac12
\left\|
\rho_{\mathrm{real}}^{HVMR}(\rho,\mathcal D)
-
\rho_{\mathrm{ideal}}^{HVMR}(\rho,\mathsf{Sim}_{\mathcal D})
\right\|_1
\leq\nu.
\label{eq:qot_malicious_simulation}
\end{equation}
\hfill\(\diamond\) \end{definition}

\eqref{eq:qot_malicious_simulation} ensures that, even when
a party cheats, the protocol behaves, up to the prescribed
error, as if a trusted party delivered only the state selected
by Bob, did not reveal Bob's choice to Alice, and allowed either
party to abort.
\begin{definition}
\label{def:finite_length_message_size}
For $n\in\mathbb N$, $\delta\in[0,1[$, and $\nu\in[0,1]$,
define the honest-but-curious maximum message size
\begin{equation*}
M_{\mathrm{QOT}}(\epsilon,n,\delta)
\triangleq
\max\left\{
m\in\mathbb Z_{\geq0}:
m\text{ is $\delta$-achievable at blocklength $n$ over }
\mathcal E_\epsilon
\right\}.
\end{equation*}
Similarly, let
$M_{\mathrm{QOT}}^{\mathrm{mal}}(\epsilon,n,\delta,\nu)$
be the largest $(\delta,\nu)$-achievable message size in the
malicious setting.
\hfill\(\diamond\) \end{definition}

\begin{definition}
\label{def:asymptotic_capacity}
A rate $r\geq0$ is achievable in the honest-but-curious setting
if, for every $\delta\in]0,1[$ and all sufficiently large $n$,
the message size $\lfloor nr\rfloor$ is $\delta$-achievable
at blocklength $n$.
The supremum of all achievable rates is called the quantum
oblivious transfer capacity over $\mathcal E_\epsilon$,
and is denoted by $C_{\mathrm{QOT}}(\epsilon)$.

In the malicious setting, $r$ is achievable if, for every
$\delta,\nu\in]0,1[$ and all sufficiently large $n$, the
message size $\lfloor nr\rfloor$ is $(\delta,\nu)$-achievable.
The corresponding supremum is denoted by
$C_{\mathrm{QOT}}^{\mathrm{mal}}(\epsilon)$.
\hfill\(\diamond\) \end{definition}

\section{Main Results} \label{secMR}
We characterize the maximum message size, derive fixed-error asymptotic expansions, and determine the capacity in the honest-but-curious setting in Section~\ref{secA}. We then establish finite-blocklength bounds, derive corresponding asymptotic expansions, and determine the capacity in the malicious setting in Section~\ref{sec:malicious_results}.
\subsection{Honest-but-Curious Setting} \label{secA}

\begin{theorem}\label{thm:finite_length_achievability} Consider the qubit erasure channel $\mathcal E_\epsilon$ with $0<\epsilon<1$. Fix $n\in\mathbb N$, $0<\delta<1$. Let
$\operatorname{Bin}(n,\epsilon)$ denote a binomial random variable with
parameters $n$ and $\epsilon$. Then, we have
\begin{equation*}
\begin{aligned}
 M_{\mathrm{QOT}}(\epsilon,n,\delta)=\max\left\{m\in\{0,\ldots,\lfloor n/2\rfloor\}:\Prb\left[m\leq\operatorname{Bin}(n,\epsilon)\leq n-m\right]\geq1-\delta\right\}.
\end{aligned}
\end{equation*} \end{theorem}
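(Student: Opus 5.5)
The proof has two parts: an achievability bound, showing that every $m$ in the set is $\delta$-achievable, and a converse, showing that no larger $m$ is.

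\emph{Achievability.} Fix $m\le\lfloor n/2\rfloor$ with $\Prb[m\le \operatorname{Bin}(n,\epsilon)\le n-m]\ge 1-\delta$. The protocol is the teleportation scheme sketched in the introduction.
\begin{enumerate}
\item In each round $i$, Alice prepares a fresh EPR pair and sends one half as $X_i$.
\item Bob learns $\mathsf S$ from the flags. If $|\mathsf S|<m$ or $|\mathsf S|>n-m$, he aborts with $F=1$.
\item Otherwise he picks an $m$-subset $U$ of unerased positions and an $m$-subset $E$ of erased positions, each uniformly at random. He announces the ordered pair $(L_1,L_2)$ with $L_c=U$ and $L_{\bar c}=E$.
\item Alice teleports $A_1$ over $L_1$ and $A_2$ over $L_2$ and announces the Pauli outcomes.
\item Bob applies the Pauli corrections on $L_c$ to obtain $B_c$.
\end{enumerate}

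The abort probability is exactly $\Prb[\operatorname{Bin}\notin[m,n-m]]\le\delta$. Correctness holds because teleportation over the unerased pairs is exact.

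For choice privacy, the joint law of $(\mathsf S,L_1,L_2)$ must be symmetric in $c$. Conditioned on $(L_1,L_2)$, the set $L_1\cup L_2$ consists of $m$ erased and $m$ unerased positions. By exchangeability of i.i.d.\ erasures, $P(L_1,L_2\mid c)$ is symmetric under swapping $c$. I will verify this by a direct count: $\sum_k \binom{n}{k}\epsilon^k(1-\epsilon)^{n-k}\cdot \#\{\text{compatible }s\}/(\binom{k}{m}\binom{n-k}{m})$ is invariant.

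The teleportation outcomes are uniform and independent of the input. Alice's final register holds only her EPR halves. Given the public messages and $\mathsf S=s$, those halves are in a state fixed by the teleportation outcomes and independent of $c$ and $s$: measured halves are collapsed, unused halves are maximally mixed, and the global state with $R$ is a product. This gives \eqref{eq:qot_receiver_privacy2}.

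For non-chosen privacy, $\mathsf{Sim}_{c,n}$ samples $\mathsf S$, $L$, and uniform outcomes, teleports $A_c$ into the simulated $B_c$, and prepares the remaining erasure flags and Bob's unused maximally entangled halves. Bob's systems on $L_{\bar c}$ are only flags, so his view depends on $A_{\bar c}$ only through the outcomes, which are uniform and independent of it.

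\emph{Converse.} Take any $\delta$-achievable protocol.

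First, success forces $n-|\mathsf S|\ge m$. Choose $\Psi$ maximally entangling $A_c$ with $R$, and track the Schmidt rank across Bob versus the rest for each erasure pattern $s$. Conditioning on $s$ via the flag register $W$, classical communication does not increase Schmidt rank and each unerased qubit at most doubles it. Exact recovery needs rank $2^m$, so if $n-|s|<m$ then $F=1$ on that pattern.

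Second, privacy forces $|\mathsf S|\ge m$. Suppose a pattern $s$ with $|s|<m$ has nonzero success probability under $c=1$. Choice privacy \eqref{eq:qot_receiver_privacy2} makes Alice's state with $R$ identical under $c=1,2$ given $(m_0,f,s)$. Apply Uhlmann's theorem: the purifications, which are Bob's laboratory together with the environment $W$ (in this pattern only the $|s|$ erased qubits), are related by an isometry. So Bob, assisted by fewer than $m$ environment qubits, could produce $A_2$ as well as $A_1$. Take $A_1,A_2$ each maximally entangled with separate parts of $R$. Non-chosen privacy \eqref{eq:qot_global_destructive_privacy_general} gives $I(R_2;B_1E_1M)=0$, while after adding $W_s$ one can decode $A_2$ exactly, so $I(R_2;B_1E_1MW_s)=2m$. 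Since $W_s$ has fewer than $m$ qubits, adding it raises the mutual information by at most $2|s|<2m$, a contradiction.

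Hence success requires $m\le|\mathsf S|\le n-m$, so $1-\delta\le\Prb[m\le\operatorname{Bin}\le n-m]$, which also forces $m\le n/2$. The main obstacle is the second converse step: handling the per-pattern conditioning rigorously, including the fact that $W$ is correlated with $\mathsf S$.
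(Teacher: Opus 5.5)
Your proposal is correct and follows essentially the same route as the paper. It uses the same EPR-pair/teleportation scheme, in which Bob labels $m$ unerased and $m$ erased positions according to $C$. It also uses the same two-step converse: Schmidt rank grows by at most a factor of $2$ per unerased use, and then Uhlmann's theorem is combined with non-chosen privacy and the bound $I(R_2;W^n\mid B_1E_1)\le 2|s|$.

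The differences are cosmetic, plus a few small fixes:
\begin{itemize}
\item You check the symmetry of $(L_1,L_2)$ by counting compatible erasure patterns, where the paper uses a swapping permutation. Each pattern contributes $\epsilon^k(1-\epsilon)^{n-k}$, so the stray $\binom{n}{k}$ in your formula should be dropped; there are $\binom{n-2m}{k-m}$ such patterns for either choice.
\item In the converse, condition the $C=2$ branch on some erased set $s'$ of positive probability, which exists by \eqref{eq:qot_receiver_privacy1}, rather than on the same $s$.
\item Run both converse steps on the single input $\Phi^{A_1R_1}_{2^m}\otimes\Phi^{A_2R_2}_{2^m}$, as the paper does, so that they combine into $1-\delta\le\Prb[m\le \operatorname{Bin}(n,\epsilon)\le n-m]$.
\end{itemize}
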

\begin{proof}
     See Section~\ref{sec:ach} for the achievability and Section~\ref{sec:converse} for the  converse.
\end{proof}

In Theorem~\ref{thm:finite_length_achievability},
$\operatorname{Bin}(n,\epsilon)$ represents the number of erased channel
uses. The inequality
$
m\leq\operatorname{Bin}(n,\epsilon)
$
requires at least $m$ erased uses, while
$
\operatorname{Bin}(n,\epsilon)\leq n-m
$
requires at least $m$ unerased uses. The
achievability scheme succeeds whenever both requirements hold. The
converse shows that a secure protocol must abort if either requirement
fails: exact correctness and the absence of initial shared
entanglement require at least $m$ unerased uses, while correctness
together with the privacy conditions requires at least $m$ erased
uses.

From Theorem \ref{thm:finite_length_achievability}, we deduce the following corollary, which gives the honest-but-curious asymptotic expansion for any fixed abort probability.

\begin{corollary}
\label{cor:second_order_fixed_delta_achievability}
Consider the qubit erasure channel $\mathcal E_\epsilon$ with
$0<\epsilon<1$. Fix $0<\delta<1$, and let $\Phi$ denote the CDF of a
standard normal random variable. As $n\to\infty$, we have
\begin{equation*}
M_{\mathrm{QOT}}(\epsilon, n,\delta)
=
\begin{cases}
\displaystyle
n\epsilon
+
\sqrt{n\epsilon(1-\epsilon)}\,\Phi^{-1}(\delta)
+
O(1),
& 0<\epsilon<\frac12,\\[3ex]
\displaystyle
\frac n2
+
\frac{\sqrt n}{2}\,\Phi^{-1}\!\left(\frac{\delta}{2}\right)
+
O(1),
& \epsilon=\frac12,\\[3ex]
\displaystyle
n(1-\epsilon)
+
\sqrt{n\epsilon(1-\epsilon)}\,\Phi^{-1}(\delta)
+
O(1),
& \frac12<\epsilon<1 .
\end{cases}
\end{equation*}
\end{corollary}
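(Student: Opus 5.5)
We start from the exact characterization in Theorem~\ref{thm:finite_length_achievability}. Write $K\sim\operatorname{Bin}(n,\epsilon)$ and $g_n(m)\triangleq\Prb[m\le K\le n-m]$. Since $g_n$ is nonincreasing in $m$ on $\{0,\ldots,\lfloor n/2\rfloor\}$, $M_{\mathrm{QOT}}(\epsilon,n,\delta)$ is the largest $m$ with $g_n(m)\ge 1-\delta$. Equivalently, it is the largest $m$ with
\begin{equation*}
\Prb[K<m]+\Prb[K>n-m]\le\delta .
\end{equation*}
The plan is to evaluate these two tails with the Berry--Esseen theorem, which gives error $O(1/\sqrt n)$ uniformly in the threshold. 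Then we invert, using that a perturbation of size $O(1/\sqrt n)$ in probability changes the threshold by only $O(1)$. Set $\sigma_n=\sqrt{n\epsilon(1-\epsilon)}$. Berry--Esseen gives
\begin{equation*}
\Prb[K<m]=\Phi\bigl((m-n\epsilon)/\sigma_n\bigr)+O(n^{-1/2}),
\qquad
\Prb[K>n-m]=\Phi\bigl((m-n(1-\epsilon))/\sigma_n\bigr)+O(n^{-1/2}).
\end{equation*}

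\emph{Case $\epsilon<1/2$.} The binding constraint is the lower tail. Take the candidate $m=n\epsilon+\sigma_n\Phi^{-1}(\delta)+t$ with $|t|$ bounded. At this $m$, the upper-tail term has argument $\approx -(1-2\epsilon)n/\sigma_n\to-\infty$, so it is $O(e^{-cn})$ by Hoeffding. The constraint therefore reduces to $\Phi(\Phi^{-1}(\delta)+t/\sigma_n)+O(n^{-1/2})\le\delta$. Since $\Phi'$ is bounded below near $\Phi^{-1}(\delta)$, choosing $t=-T$ for a large constant $T$ makes the left side at most $\delta-cT/\sqrt n+O(n^{-1/2})<\delta$. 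This gives the lower bound on $M_{\mathrm{QOT}}$. Choosing $t=+T$ violates the constraint, and by monotonicity so does every larger $m$. This gives the upper bound. We also check that the candidate lies in $\{0,\ldots,\lfloor n/2\rfloor\}$, which holds for large $n$ since $\epsilon<1/2$. The case $\epsilon>1/2$ is symmetric: replace $K$ by $n-K\sim\operatorname{Bin}(n,1-\epsilon)$, so the upper tail binds and $n(1-\epsilon)$ replaces $n\epsilon$.

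\emph{Case $\epsilon=1/2$.} Here $\sigma_n=\sqrt n/2$ and both tails bind equally. By symmetry of $\operatorname{Bin}(n,1/2)$, $\Prb[K>n-m]=\Prb[K<m]$, so the condition becomes $2\Prb[K<m]\le\delta$, i.e.\ $\Prb[K<m]\le\delta/2$. The same Berry--Esseen inversion yields $m=n/2+\frac{\sqrt n}{2}\Phi^{-1}(\delta/2)+O(1)$. This candidate is below $\lfloor n/2\rfloor$ for large $n$ because $\Phi^{-1}(\delta/2)<0$.

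The main obstacle is the inversion step. The $O(n^{-1/2})$ Berry--Esseen error, and the lattice effects of rounding $m$ to an integer and of using strict versus non-strict inequalities, must translate into only an $O(1)$ shift in $m$. I would handle this with the mean-value bound $\Phi(x+h)-\Phi(x)\ge h\min_{|y-x|\le1}\Phi'(y)$ near $x=\Phi^{-1}(\delta)$ (or $\Phi^{-1}(\delta/2)$), applied with $h=\pm T/\sigma_n$. The lattice corrections are absorbed because Berry--Esseen is uniform over thresholds.
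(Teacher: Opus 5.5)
Your proposal is correct and follows essentially the same route as the paper's proof: Berry--Esseen on the lower tail at thresholds shifted by a constant $\pm K$ around $n\epsilon+\sqrt{n\epsilon(1-\epsilon)}\,\Phi^{-1}(\delta)$, an exponential bound to discard the opposite tail when $\epsilon<1/2$, the reflection $n-\operatorname{Bin}(n,\epsilon)\sim\operatorname{Bin}(n,1-\epsilon)$ for $\epsilon>1/2$, and the symmetry of $\operatorname{Bin}(n,1/2)$ to reduce the $\epsilon=1/2$ case to the $\delta/2$ quantile. Your explicit appeal to the monotonicity of $g_n$, to pass from failure at the upper candidate to failure at every larger $m$, is a step the paper uses only implicitly, and your mean-value bound on $\Phi$ is a slightly cleaner substitute for its first-order Taylor expansion.
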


\begin{proof} See Appendix \ref{app_sec_ac}. \end{proof}

Corollary \ref{cor:second_order_fixed_delta_achievability} yields  the quantum oblivious transfer
capacity over $\mathcal E_\epsilon$.

\begin{corollary} \label{cor:qot_capacity} Consider the qubit erasure channel $\mathcal E_\epsilon$ with $0<\epsilon<1$. Then, we have \begin{equation*} C_{\mathrm{QOT}}(\epsilon) = \min\{\epsilon,1-\epsilon\}. \end{equation*} \end{corollary}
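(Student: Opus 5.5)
The plan is to derive Corollary~\ref{cor:qot_capacity} directly from Corollary~\ref{cor:second_order_fixed_delta_achievability}, proving a lower bound and an upper bound on $C_{\mathrm{QOT}}(\epsilon)$ separately.

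\emph{Achievability.} Fix $r<\min\{\epsilon,1-\epsilon\}$ and $\delta\in]0,1[$. In each of the three regimes, Corollary~\ref{cor:second_order_fixed_delta_achievability} gives
\begin{equation*}
M_{\mathrm{QOT}}(\epsilon,n,\delta)\geq n\min\{\epsilon,1-\epsilon\}-O(\sqrt n).
\end{equation*}
For $\epsilon=\frac12$ this uses that $\Phi^{-1}(\delta/2)$ is a finite constant, and in the other cases the same holds for $\Phi^{-1}(\delta)$. Hence $M_{\mathrm{QOT}}(\epsilon,n,\delta)\geq\lfloor nr\rfloor$ for all sufficiently large $n$.

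To conclude that $\lfloor nr\rfloor$ itself is $\delta$-achievable, I need monotonicity in $m$: if $m'$ is $\delta$-achievable, then every $m\leq m'$ is as well. I would obtain this from the characterization in Theorem~\ref{thm:finite_length_achievability} rather than from the definition of achievability. The event $\{m\leq \operatorname{Bin}(n,\epsilon)\leq n-m\}$ only grows as $m$ decreases. Therefore the set of $m\in\{0,\dots,\lfloor n/2\rfloor\}$ satisfying the probability condition is downward closed, and $\lfloor nr\rfloor\leq n/2$ holds because $r<\frac12$.

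Since $r$ was arbitrary below $\min\{\epsilon,1-\epsilon\}$, this gives $C_{\mathrm{QOT}}(\epsilon)\geq\min\{\epsilon,1-\epsilon\}$.

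\emph{Converse.} Suppose $r$ is achievable. Fix any $\delta\in]0,1[$, say $\delta=\frac12$. Then $\lfloor nr\rfloor\leq M_{\mathrm{QOT}}(\epsilon,n,\delta)$ for all large $n$. Corollary~\ref{cor:second_order_fixed_delta_achievability} gives
\begin{equation*}
M_{\mathrm{QOT}}(\epsilon,n,\delta)\leq n\min\{\epsilon,1-\epsilon\}+O(\sqrt n).
\end{equation*}
Dividing by $n$ and letting $n\to\infty$ yields $r\leq\min\{\epsilon,1-\epsilon\}$. Taking the supremum over achievable $r$ gives the matching upper bound.

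The argument is routine. The only delicate point is the monotonicity step that turns the maximum message size into achievability of the specific size $\lfloor nr\rfloor$, and Theorem~\ref{thm:finite_length_achievability} resolves it as described above.
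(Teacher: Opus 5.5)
Your argument is correct and follows essentially the paper's route: the paper only asserts that Corollary~\ref{cor:second_order_fixed_delta_achievability} yields the capacity, and your split into lower and upper bounds, together with the monotonicity step, spells out that deduction. One small precision is that the \emph{statement} of Theorem~\ref{thm:finite_length_achievability} only equates two maxima, so the fact that every $m$ with $\Prb[m\leq\operatorname{Bin}(n,\epsilon)\leq n-m]\geq1-\delta$ is itself $\delta$-achievable should come from the achievability proof (Algorithm~\ref{alg:qot_achievability} and its analysis go through verbatim for any such $m$), not from the equality of maxima alone.
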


The capacity equals that of classical oblivious transfer over the binary erasure channel \cite{ahlswede2013oblivious}. In both settings, unerased uses enable recovery of the chosen message, while erased uses protect the non-chosen message, so the smaller fraction determines the capacity.

\subsection{Malicious Setting}
\label{sec:malicious_results}

The next theorem gives a finite-blocklength lower bound on the
maximum message size in the malicious setting.

\begin{theorem}
\label{thm:malicious_finite}
Consider the qubit erasure channel $\mathcal E_\epsilon$ with
$0<\epsilon<1$. Fix $n\in\mathbb N$, $0<\delta,\nu<1$.
Then, we have
\begin{equation*}
M_{\mathrm{QOT}}^{\mathrm{mal}}(\epsilon,n,\delta,\nu)
\geq
\max\left\{
0,\,
\left\lfloor
n\min\{\epsilon,1-\epsilon\}
-
16 \sqrt{
n\ln\!\left(\frac{256}{\min\{\delta,\nu\}}\right)
}
\right\rfloor
\right\}.
\end{equation*}
\end{theorem}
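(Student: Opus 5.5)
We prove the bound by exhibiting an explicit $(n,m)$ protocol with $m=\lfloor n\min\{\epsilon,1-\epsilon\}-16\sqrt{n\ln(256/\min\{\delta,\nu\})}\rfloor$ (the case $m\le 0$ is trivial) and checking Conditions 1--5. The protocol builds on the honest-but-curious teleportation scheme of Section~\ref{sec:ach}, hardened against cheating. Write $\kappa\triangleq\min\{\delta,\nu\}$ and $t\triangleq\sqrt{n\ln(256/\kappa)}$.

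The protocol runs as follows.
\begin{enumerate}
\item Alice sends halves of $n$ EPR pairs through $\mathcal E_\epsilon$.
\item Bob forms a \emph{good} set $G$ of $k\approx n\min\{\epsilon,1-\epsilon\}-\Theta(t)$ unerased positions and a \emph{bad} set of $k$ erased positions. He abandons the naive public labelling of the sets by states: instead he encodes the pair of sets as a string and runs interactive hashing with Alice, so that two candidate sets $S_1,S_2$ are output, one of which Bob controls and the other is essentially random.
\item Alice tests a random subset of each $S_j$ by asking Bob to return her halves' partners (e.g.\ by measuring in random Pauli bases and comparing), which certifies that he cannot hold too many unerased outputs in both sets.
\item Alice encodes each $A_j$ with a random Clifford into $\ell\approx k-O(t)$ qubits, where the redundancy exceeds $m$, and teleports the encoding on the untested positions of $S_{\sigma_j}$, with $\sigma$ announced by Bob. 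She then reveals the Clifford keys and teleportation corrections.
\end{enumerate}

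For the honest guarantees (Conditions 1--4), the abort event is the failure of the concentration $|\,|\mathsf S|-n\epsilon|\le t$ or of a test, and Hoeffding bounds each of these by $2e^{-t^2/n}$-type terms. The constants $256$ and $16$ are chosen so that, after a union bound over the four or so such events, the total abort probability is $\le\delta$. Correctness follows as in Section~\ref{sec:ach}, since the chosen positions are unerased and Bob can undo the Clifford. Choice privacy holds because the interactive hashing transcript and $\sigma$ are distributed identically for $C=1,2$, and Alice's state, conditioned on the transcript, is independent of $\mathsf S$, exactly as in the honest-but-curious proof. Non-chosen privacy follows from uniform teleportation outcomes.

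For Condition 5 with a corrupted Alice, the simulator runs her strategy against a dummy Bob holding maximally entangled halves for \emph{both} sets. It extracts $A'_1,A'_2$ by inverting her teleportation and Clifford on each set, which is possible because interactive hashing and the erasure channel hide $C$ perfectly. The real and ideal states then coincide up to the abort-distribution terms.

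With a corrupted Bob, the main obstacle is to show that in at least one of $S_1,S_2$ Bob lacks more than $\ell-m$ encoded qubits. Interactive hashing guarantees, with failure probability $\lesssim\kappa/8$, that one output set was not chosen by Bob. Combined with the test, which caps the number of positions he actually received (at most about $n(1-\epsilon)+t$ overall), this yields that set $\bar C'$ has at least $\ell-m+\Omega(t)$ effectively erased positions. The simulator defines $C'$ as the other index.

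I would then invoke a decoupling lemma for random Cliffords: if $\ell-m$ or more encoded qubits are erased, Bob's view, including the public keys, is $2^{-\Omega(\ell-m-m)}$-close to a state independent of $A_{\bar C'}R$, even with entanglement to $R$. This permits simulating $V,M$ from $A_{C'}$ alone. Summing the hashing, testing, concentration and decoupling errors, each bounded by $\kappa/4$ through the choice of $16t$, gives $\nu$. The delicate part is making the Clifford decoupling quantitative against adaptive, entangled Bob strategies, and in particular tracking the test's sampling error without trusted deletion. I would attack this first via a quantum sampling bound of Bouman--Fehr type applied to the tested positions.
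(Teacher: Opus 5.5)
\textbf{There is a genuine gap.} The way you combine interactive hashing and testing cannot do three things at once: let an honest Bob pass, keep the honest non-chosen block hidden, and force a cheating Bob to leave more than $m$ erasures in one block.

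\emph{If the block for $A_{\bar C}$ is the hashing output.} You say this output is ``essentially random''. In an honest run it then contains about a $(1-\epsilon)$-fraction of unerased positions. The Clifford keys and Pauli corrections are public, so an honest-but-curious Bob who keeps those outputs learns part of $A_{\bar C}$, for instance whenever the Clifford routes a data qubit to a received position. So the exact Condition~4 fails. Even approximate decoupling would limit $m$ to roughly $\epsilon$ times the block size.

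\emph{If instead $A_{\bar C}$ goes on Bob's bad set.} Then testing ``a random subset of each $S_j$'' makes an honest Bob fail on the erased positions, so Condition~1 fails.

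\emph{The test cannot bound Bob's holdings per block.} A possession test only certifies that Bob \emph{holds} outputs, never that he lacks them. The bound $|\mathcal G|\lesssim n(1-\epsilon)+t$ comes from the channel, not from the test, and it does not limit per-block holdings. For $\epsilon<1/3$ one has $2k<n(1-\epsilon)$, so both of your blocks can be fully received.

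\emph{The decoupling threshold is mis-calibrated.} The correct condition is that the block has $u>m$ erased qubits. The failure probability is then $4^{m-u}$ over the Clifford, with exact decoupling otherwise, for all inputs and references. Your version, ``$\ell-m$ erasures with error $2^{-\Omega(\ell-2m)}$'', is vacuous unless $\ell>2m$, and that would give only about half the claimed $m$.

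\textbf{The missing idea is the paper's list construction, in which the tests exhaust Bob's unerased budget.}

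\emph{Construction.} Bob announces $k=|\mathcal G|$, which Alice checks against $n(1-\epsilon)\pm r$. He takes $L_C$ as $t$ unerased indices. He pads $L_{\bar C}$ with \emph{all} remaining $k-t$ unerased indices, placed at positions $\mathcal T=\mathsf{Enc}_k(W)$, and fills the other $s=2t-k$ slots with erased indices. Hashing on $W$ gives $W_0,W_1$ with $W=W_D$. The bit $a=D\oplus(C-1)$ makes the test set in $L_{\bar C}$ exactly $\mathsf{Enc}_k(W_D)=\mathcal T$, i.e.\ Bob's padding. The test set in $L_C$ is $\mathsf{Enc}_k(W_{1-D})$, which lies in $\mathcal G$ anyway.

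\emph{Honest Bob.} He passes with certainty, and the untested block $Q_{\bar C}$ is entirely erased, which gives the exact Condition~4. Property~2 of hashing hides from Alice which list received $W_D$.

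\emph{Cheating Bob.}
\begin{itemize}
\item The tests remove $2(k-t)$ positions, while $|\mathcal G|\le k+2r$ and $2t\ge k+m+4r$. So the lists carry at least $m+2r$ erasures in total.
\item By \eqref{eq:mal_ih} and a hypergeometric tail bound, if both lists had at least $r$ erasures, some test set would contain at least $2r^2/n$ erased indices. Each of those is matched with probability $1/2$, so Bob is caught.
\item Hence one list has fewer than $r$ erasures, and the other list's untested block keeps at least $m+r^2/n$. This is exactly what Lemma~\ref{lem:mal_clifford} needs.
\end{itemize}

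\emph{Sampling.} A computational-basis test suffices. Tested pairs are discarded, and Alice's outcomes at erased tested positions are uniform and independent of Bob. So no Bouman--Fehr-type quantum sampling is needed.
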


\begin{proof}
See Section~\ref{sec:malicious_ach}.
\end{proof}

The following corollary describes the asymptotic behavior at
fixed abort probability and malicious simulation error.

\begin{corollary}
\label{cor:malicious_asymptotics}
Consider $\mathcal E_\epsilon$ with $0<\epsilon<1$.
Fix $0<\delta,\nu<1$. As $n\to\infty$, we have
\begin{equation}
M_{\mathrm{QOT}}^{\mathrm{mal}}(\epsilon,n,\delta,\nu)
=
n\min\{\epsilon,1-\epsilon\}+O(\sqrt n). \label{eq:mal_matching_expansion1}
\end{equation}
\end{corollary}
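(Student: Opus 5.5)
The expansion \eqref{eq:mal_matching_expansion1} will follow by sandwiching $M_{\mathrm{QOT}}^{\mathrm{mal}}(\epsilon,n,\delta,\nu)$ between a lower bound of the form $n\min\{\epsilon,1-\epsilon\}-O(\sqrt n)$ and an upper bound of the form $n\min\{\epsilon,1-\epsilon\}+O(\sqrt n)$.

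For the lower bound, invoke Theorem~\ref{thm:malicious_finite} directly. With $\delta,\nu$ fixed, the constant $L\triangleq\ln(256/\min\{\delta,\nu\})$ does not depend on $n$. Hence
\begin{equation*}
M_{\mathrm{QOT}}^{\mathrm{mal}}(\epsilon,n,\delta,\nu)\geq n\min\{\epsilon,1-\epsilon\}-16\sqrt{nL}-1 .
\end{equation*}
Once $n$ is large enough that the argument of the floor is positive, the maximum with $0$ is inactive. The right-hand side is then $n\min\{\epsilon,1-\epsilon\}-O(\sqrt n)$.

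For the upper bound, observe that by Definition~\ref{def:qot_delta_achievabilityb} every $(\delta,\nu)$-achievable message size in the malicious setting satisfies Conditions 1--4. It is therefore $\delta$-achievable in the honest-but-curious setting, which gives
\begin{equation*}
M_{\mathrm{QOT}}^{\mathrm{mal}}(\epsilon,n,\delta,\nu)\leq M_{\mathrm{QOT}}(\epsilon,n,\delta).
\end{equation*}
Corollary~\ref{cor:second_order_fixed_delta_achievability} then bounds the right-hand side according to the value of $\epsilon$:
\begin{itemize}
\item for $\epsilon<1/2$ it equals $n\epsilon+\sqrt{n\epsilon(1-\epsilon)}\,\Phi^{-1}(\delta)+O(1)$;
\item for $\epsilon>1/2$ it equals $n(1-\epsilon)+\sqrt{n\epsilon(1-\epsilon)}\,\Phi^{-1}(\delta)+O(1)$;
\item for $\epsilon=1/2$ it equals $n/2+\frac{\sqrt n}{2}\Phi^{-1}(\delta/2)+O(1)$.
\end{itemize}
Since $\delta\in]0,1[$ is fixed, $\Phi^{-1}(\delta)$ and $\Phi^{-1}(\delta/2)$ are finite constants. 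In every case the bound is therefore $n\min\{\epsilon,1-\epsilon\}+O(\sqrt n)$.

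Combining the two bounds gives \eqref{eq:mal_matching_expansion1}.

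There is no real obstacle here; the argument is bookkeeping. The only points to verify are that the malicious definition genuinely includes Conditions 1--4 with the same $\delta$, so that the inclusion of achievable sets holds, and that the constants hidden in the $O(\sqrt n)$ terms depend only on $\epsilon,\delta,\nu$.
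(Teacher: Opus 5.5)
Your proposal is correct and is essentially the paper's argument: the lower bound comes from Theorem~\ref{thm:malicious_finite} with $\ln(256/\min\{\delta,\nu\})$ treated as a constant, and the upper bound comes from $M_{\mathrm{QOT}}^{\mathrm{mal}}(\epsilon,n,\delta,\nu)\leq M_{\mathrm{QOT}}(\epsilon,n,\delta)$ combined with Corollary~\ref{cor:second_order_fixed_delta_achievability}. That inequality holds because malicious achievability includes Conditions 1--4 with the same $\delta$. As a minor simplification, $\max\{0,\lfloor x\rfloor\}\geq x-1$ holds for every $n$, so you do not need $n$ large to drop the maximum with $0$.
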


\begin{proof}
\eqref{eq:mal_matching_expansion1} follows from
Theorem~\ref{thm:malicious_finite} and
Corollary~\ref{cor:second_order_fixed_delta_achievability} using that $M_{\mathrm{QOT}}^{\mathrm{mal}}(\epsilon,n,\delta,\nu)
\leq M_{\mathrm{QOT}}(\epsilon,n,\delta)$.
\end{proof}

We deduce that both honest-but-curious and malicious settings have the same capacity, as stated next.

\begin{corollary}
\label{cor:malicious_capacity}
For the qubit erasure channel $\mathcal E_\epsilon$ with
$0<\epsilon<1$,
\begin{equation*}
C_{\mathrm{QOT}}^{\mathrm{mal}}(\epsilon)
=
C_{\mathrm{QOT}}(\epsilon)
=
\min\{\epsilon,1-\epsilon\}.
\end{equation*}
\end{corollary}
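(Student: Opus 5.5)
The plan is to sandwich $C_{\mathrm{QOT}}^{\mathrm{mal}}(\epsilon)$ between matching bounds. First we establish $C_{\mathrm{QOT}}^{\mathrm{mal}}(\epsilon)\leq C_{\mathrm{QOT}}(\epsilon)$ and then $C_{\mathrm{QOT}}^{\mathrm{mal}}(\epsilon)\geq\min\{\epsilon,1-\epsilon\}$. Combining these with Corollary~\ref{cor:qot_capacity}, which gives $C_{\mathrm{QOT}}(\epsilon)=\min\{\epsilon,1-\epsilon\}$, closes the chain.

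\emph{Upper bound.} By Definition~\ref{def:qot_delta_achievabilityb}, a $(\delta,\nu)$-achievable message size in the malicious setting satisfies Conditions 1--4 and is therefore $\delta$-achievable in the honest-but-curious setting. This gives $M_{\mathrm{QOT}}^{\mathrm{mal}}(\epsilon,n,\delta,\nu)\leq M_{\mathrm{QOT}}(\epsilon,n,\delta)$ for all $n,\delta,\nu$. Now let $r$ be achievable in the malicious setting, in the sense of Definition~\ref{def:asymptotic_capacity}. Fix any $\delta\in]0,1[$ and any $\nu\in]0,1[$. For all large $n$, the size $\lfloor nr\rfloor$ is then $(\delta,\nu)$-achievable, hence $\delta$-achievable. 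So $r$ is honest-but-curious achievable, which gives $C_{\mathrm{QOT}}^{\mathrm{mal}}(\epsilon)\leq C_{\mathrm{QOT}}(\epsilon)$.

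\emph{Lower bound.} We need achievability of every rate $r<\min\{\epsilon,1-\epsilon\}$. If $r=0$ this is trivial, since $\lfloor n\cdot 0\rfloor=0$ is always achievable by Theorem~\ref{thm:malicious_finite}, whose bound is at least $0$. For $0<r<\min\{\epsilon,1-\epsilon\}$, fix $\delta,\nu\in]0,1[$. Theorem~\ref{thm:malicious_finite} gives
\begin{equation*}
M_{\mathrm{QOT}}^{\mathrm{mal}}(\epsilon,n,\delta,\nu)\geq\left\lfloor n\min\{\epsilon,1-\epsilon\}-16\sqrt{n\ln(256/\min\{\delta,\nu\})}\right\rfloor .
\end{equation*}
Since the square-root term is $o(n)$, the right-hand side exceeds $\lfloor nr\rfloor$ for all sufficiently large $n$.

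One subtlety needs care: achievability of a smaller size $\lfloor nr\rfloor$ does not follow automatically from achievability of the maximum size. I would argue monotonicity directly, namely that a protocol for $m$ qubits yields one for any $m'\leq m$. To do this, pad the inputs $A_1,A_2$ with fixed $|0\rangle^{\otimes(m-m')}$ ancillas and have Bob discard the padding at the end. All four conditions and Condition 5 are inherited.

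\begin{itemize}
\item Correctness holds because the padding is a product state.
\item Privacy holds because the simulators compose with appending the fixed ancilla.
\item In the malicious case, the simulator for corrupted Alice traces out the padding. For corrupted Bob, the simulator appends the ancilla to $A_{C'}$ before running the original simulator.
\end{itemize}

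If one is worried, the padding argument can be bypassed by noting that the construction in Section~\ref{sec:malicious_ach} presumably works for every $m$ up to the bound. This padding step is the only point needing verification, and it is routine.

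Hence $r$ is achievable, and taking the supremum gives $C_{\mathrm{QOT}}^{\mathrm{mal}}(\epsilon)\geq\min\{\epsilon,1-\epsilon\}$. Alternatively, both bounds follow at once from Corollary~\ref{cor:malicious_asymptotics} after dividing by $n$. Either way, combining the two bounds with Corollary~\ref{cor:qot_capacity} yields the stated equalities.
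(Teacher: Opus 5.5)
Your proof is correct and follows the paper's route. The lower bound comes from Theorem~\ref{thm:malicious_finite}, and the upper bound from the fact that malicious achievability implies honest-but-curious achievability (the paper's $M_{\mathrm{QOT}}^{\mathrm{mal}}\leq M_{\mathrm{QOT}}$), combined with Corollary~\ref{cor:qot_capacity}. Your padding argument for downward closure of achievable sizes makes explicit a step the paper leaves implicit when passing from a bound on the maximal size to achievability of $\lfloor nr\rfloor$; your ``alternative'' via Corollary~\ref{cor:malicious_asymptotics} needs this same step, so it is not a genuine bypass.
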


\section{Achievability proof of Theorem \ref{thm:finite_length_achievability}} \label{sec:ach}

We first review the notion of teleportation \cite{bennett1993teleporting} in Section \ref{sec:teleportation}, then present our achievability scheme and its analysis in Sections \ref{sec:cs} and \ref{sec:csa}, respectively. 

\subsection{Preliminaries} \label{sec:teleportation}

Let $X=X_1\cdots X_m$ be an $m$-qubit system held by a sender, let $R'$ be an arbitrary
reference system, and let $\rho^{XR'}$ be an arbitrary state. Let
$U^m=U_1\cdots U_m$ and $V^m=V_1\cdots V_m$ be two $m$-qubit systems, held by the sender and a receiver, respectively, and prepared in the maximally entangled state,
$\Phi_{2^m}^{U^mV^m}
\triangleq
\ket{\Phi_{2^m}}\!\bra{\Phi_{2^m}}^{U^mV^m}$ with 
$\ket{\Phi_{2^m}}^{U^mV^m}
\triangleq
\frac{1}{\sqrt{2^{m}
}}
\sum_{x\in\{0,1\}^m}
\ket{x}^{U^m}\ket{x}^{V^m}$.
For $a,b\in\{0,1\}^m$, define
$\Phi_{a,b}^{XU^m}
\triangleq
\ket{\Phi_{a,b}}\!\bra{\Phi_{a,b}}^{XU^m}$ with $\ket{\Phi_{a,b}}^{XU^m}
\triangleq
\left(I^{X}\otimes P_{a,b}\right)
\ket{\Phi_{2^m}}^{XU^m}$,
where $\ket{\Phi_{2^m}}^{XU^m}
\triangleq
\frac{1}{\sqrt{2^{m}
}}
\sum_{x\in\{0,1\}^m}
\ket{x}^{X}\ket{x}^{U^m}$ and $P_{a,b}
\triangleq
\bigotimes_{k=1}^m X^{a_k}Z^{b_k}$ is the $m$-qubit Pauli operator acting on $U^m$.

The following lemma provides   the teleportation results \cite{bennett1993teleporting} needed for the achievability proof.
\begin{lemma}[{\cite{bennett1993teleporting}}]
\label{lem:teleportation_properties_ach}
Consider the Bell measurement $
\left\{
\Phi_{a,b}^{XU^m}:
a,b\in\{0,1\}^m
\right\}$ on $XU^m$ and let $T$ be the classical register that stores the measurement
outcome $(a,b)\in\{0,1\}^m\times\{0,1\}^m$. The following holds.

\begin{enumerate}
    \item $T$ is uniform and independent of the
reference $R'$:
\begin{equation}
    \rho^{TR'}
    =
    \frac{1}{{2^{2m}
}}
   \sum_{a,b\in\{0,1\}^m}
\ket{a,b}\!\bra{a,b}^{T}
    \otimes
    \rho^{R'} . \label{eq:tele1}
    \end{equation}

    \item When $T=(a,b)$, the normalized state on the receiver's system $V^m$ and the reference $R'$ is
    \begin{equation}
    P_{a,b}
    \left[
    \left(
    \operatorname{id}_{X\to V^m}\otimes \operatorname{id}_{R'}
    \right)
    \left(\rho^{XR'}\right)
    \right]
    P_{a,b}^\dagger, \label{eq:tele}
    \end{equation}
    where $P_{a,b}$ acts on $V^m$. Consequently, if the receiver
    applies $P_{a,b}^\dagger$ to $V^m$, it obtains
    \begin{equation*}
    \left(
    \operatorname{id}_{X\to V^m}\otimes \operatorname{id}_{R'}
    \right)
    \left(\rho^{XR'}\right).
    \end{equation*}

\end{enumerate}
\end{lemma}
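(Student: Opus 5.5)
The statement to prove is the teleportation lemma, Lemma~\ref{lem:teleportation_properties_ach}. I would obtain both parts from one algebraic identity, worked out directly on a purification. Since every claim is linear in $\rho^{XR'}$ and involves only $X$ and $R'$, it suffices to take $\rho^{XR'}=|\psi\rangle\langle\psi|$ pure; the general case follows by purifying into an enlarged reference and tracing out the extra part, or simply by linearity. Write
\[
|\psi\rangle^{XR'}=\sum_{x\in\{0,1\}^m}|x\rangle^X|r_x\rangle^{R'},
\]
with unnormalized vectors $|r_x\rangle$.

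The key step is to compute the projection of $|\psi\rangle^{XR'}\otimes|\Phi_{2^m}\rangle^{U^mV^m}$ onto $|\Phi_{a,b}\rangle^{XU^m}$. First I would use the transpose trick $(I\otimes P)|\Phi_{2^m}\rangle=(P^T\otimes I)|\Phi_{2^m}\rangle$. This gives $|\Phi_{a,b}\rangle^{XU^m}=(P_{a,b}^T\otimes I)|\Phi_{2^m}\rangle^{XU^m}$. Next I would apply the standard identity
\[
\left(\langle\Phi_{2^m}|^{XU^m}\otimes I^{V^m}\right)\left(|\phi\rangle^X|\Phi_{2^m}\rangle^{U^mV^m}\right)=2^{-m}\,|\phi\rangle^{V^m}.
\]
This identity is checked on basis vectors: the left side has terms $\langle x|y\rangle\langle x|z\rangle|z\rangle$, which collapse to $|y\rangle/2^m$. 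Combining the two, the post-measurement unnormalized vector on $V^mR'$ is
\[
2^{-m}\,\bigl(\overline{P_{a,b}}\otimes I\bigr)\,(\mathrm{id}_{X\to V^m}\otimes\mathrm{id}_{R'})|\psi\rangle .
\]
The Paulis $X$ and $Z$ are real, so $\overline{P_{a,b}}=P_{a,b}$, and the operator acting on $V^m$ is $P_{a,b}$ itself.

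Part 2 then follows immediately. After normalization the conditional state is $P_{a,b}[(\mathrm{id}\otimes\mathrm{id})(\rho)]P_{a,b}^\dagger$, and applying $P_{a,b}^\dagger$ undoes it because $P_{a,b}^\dagger P_{a,b}=I$.

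For Part 1, the squared norm of the vector above is $2^{-2m}\langle\psi|\psi\rangle=2^{-2m}$, because $P_{a,b}$ is unitary. This shows the outcome distribution is uniform over the $2^{2m}$ outcomes. For independence from $R'$, I trace out $V^m$ from the conditional state. Since $P_{a,b}$ acts only on $V^m$, it disappears under the partial trace, leaving $\rho^{R'}$ for every outcome, which gives \eqref{eq:tele1}. As a consistency check, the family $\{\Phi_{a,b}\}$ is a complete orthonormal basis: the vectors $(P^T\otimes I)|\Phi\rangle$ are orthonormal because $\mathrm{Tr}(P_{a,b}^\dagger P_{a',b'})=2^m\delta_{aa'}\delta_{bb'}$, and there are $2^{2m}$ of them. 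Hence the measurement is valid and the probabilities sum to one.

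The main obstacle is essentially bookkeeping. One must track the correct ordering and sign conventions for $X^aZ^b$ and confirm that the transpose and complex conjugation return $P_{a,b}$ itself rather than some other Pauli. Transposition reverses the product to $Z^bX^a=\pm X^aZ^b$, so it can introduce a global phase $\pm1$. That phase cancels in the projector $\Phi_{a,b}$ and in the conjugated state, so it does not affect either claim. I would state this sign explicitly when carrying out the computation.
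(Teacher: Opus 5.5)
Your proposal is correct. The paper gives no proof of this lemma and only cites Bennett et al.; your computation is the standard Bell-basis verification behind that citation. Via the transpose trick and the identity $(\langle\Phi_{2^m}|^{XU^m}\otimes I^{V^m})(|\phi\rangle^X|\Phi_{2^m}\rangle^{U^mV^m})=2^{-m}|\phi\rangle^{V^m}$, you show that every outcome $(a,b)$ leaves the unnormalized vector $2^{-m}(P_{a,b}\otimes I_{R'})(\operatorname{id}_{X\to V^m}\otimes\operatorname{id}_{R'})|\psi\rangle$, and both parts then follow from unitarity of $P_{a,b}$ and linearity. The sign issue you flag never arises, since the operator that appears is $(P_{a,b}^T)^\dagger=\overline{P_{a,b}}=P_{a,b}$ exactly.
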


\subsection{Coding Scheme} \label{sec:cs}

Fix $n\in\mathbb N$ and $\delta\in[0,1[$. Define
\begin{equation}
m_{n,\delta}
\triangleq
\max\left\{
m\in\{0,1,\ldots,\lfloor n/2\rfloor\}:
\Prb\left[m\leq \operatorname{Bin}(n,\epsilon)\leq n-m\right]\geq 1-\delta
\right\}.
\label{eq:finite_length_m_def}
\end{equation}
For convenience, we write $m=m_{n,\delta}$. Let
$\Psi^{A_1A_2R}$ be an arbitrary pure input state.
Alice holds the two $m$-qubit systems $A_1$ and $A_2$, while the
reference system $R$ is inaccessible. The protocol is described in
Algorithm~\ref{alg:qot_achievability}.

\begin{algorithm}[t]
\caption{Achievability scheme for quantum oblivious transfer over
$\mathcal E_\epsilon$ (honest-but-curious setting)}
\label{alg:qot_achievability}
\begingroup
\setstretch{1}
\setlength{\abovedisplayskip}{3pt}
\setlength{\belowdisplayskip}{3pt}
\setlength{\abovedisplayshortskip}{2pt}
\setlength{\belowdisplayshortskip}{2pt}

\begin{algorithmic}[1]
\Require Blocklength $n$, target abort probability $\delta$,
$m=m_{n,\delta}$, an input state $\Psi^{A_1A_2R}$, and Bob's choice
$C\in\{1,2\}$.

\State \textit{Entanglement generation.}
\begin{algsubstates}

    \State For each $i\in[n]$, Alice prepares
    $$
    \ket{\Phi^+}^{U_iV_i}
    =
    \frac{1}{\sqrt{2}}(\ket{00}+\ket{11}).
    $$

    \State For each $i\in[n]$, Alice keeps $U_i$ and sends $V_i$
    through $\mathcal E_\epsilon$.

\end{algsubstates}

\State \textit{Abort test.}
\begin{algsubstates}

    \State Bob defines
    $$
    \mathcal{G}
    \triangleq
    \{i\in[n]:\text{the }i\text{-th output is not erased}\}
    \text{ and }
    \mathcal{B}
    \triangleq
    [n]\setminus\mathcal{G}.
    $$

    \State Bob sends to Alice 
$$
F
\triangleq
\mathds{1}\{|\mathcal G|<m\text{ or }|\mathcal B|<m\}.
$$
If $F=1$, Bob sets $B_C$ to a fixed $m$-qubit dummy
state, retains all his remaining registers in $E_C$, and the
protocol terminates. If $F=0$, the protocol continues.

\end{algsubstates}

\State \textit{Subset selection.}
\begin{algsubstates}

    \State If $F=0$, Bob samples uniformly
    $\mathcal G^\star\subseteq\mathcal G$ and
    $\mathcal B^\star\subseteq\mathcal B$ such that
    $|\mathcal G^\star|=|\mathcal B^\star|=m$.

    \State Bob sets
    \begin{equation*}
    (L_1,L_2)
    \triangleq
    \begin{cases}
    (\mathcal G^\star,\mathcal B^\star),&\text{if }C=1,\\
    (\mathcal B^\star,\mathcal G^\star),&\text{if }C=2,
    \end{cases}
    \end{equation*}
    so that, with $\bar C\triangleq3-C$,
    $L_C\subseteq\mathcal G$ and
    $L_{\bar C}\subseteq\mathcal B$.

    \State Bob sends to Alice $(L_1,L_2)$.

\end{algsubstates}

\State \textit{Teleportation.}
\begin{algsubstates}

    \State For each $j\in\{1,2\}$, write the elements of $L_j$ in
    increasing order as
    $L_j=\{\ell_{j,1},\ldots,\ell_{j,m}\}$.

    \State Alice teleports $A_j$ using
    $U_{L_j}\triangleq(U_{\ell_{j,1}},\ldots,U_{\ell_{j,m}})$ as in
    Section~\ref{sec:teleportation}. Specifically, for $k\in[m]$,
    Alice performs a Bell measurement between the $k$-th qubit of
    $A_j$ and $U_{\ell_{j,k}}$. Let
    $T_j=(T_{j,1},\ldots,T_{j,m})\in\{0,1\}^{2m}$ be the
    Bell-measurement outcome, where $T_j$ specifies the $m$-qubit
    Pauli operator
    $P_{T_j}\triangleq\bigotimes_{k=1}^mP_{T_{j,k}}$.

    \State Alice sends to Bob $(T_1,T_2)$.

\end{algsubstates}

\State \textit{Decoding.}
\begin{algsubstates}

    \State Since $L_C\subseteq\mathcal G$, Bob has the
    channel outputs $V_{L_C}$ and applies $P_{T_C}^{\dagger}$ to
    $V_{L_C}$.

    \State Bob outputs the resulting system as $B_C$.

\end{algsubstates}

\end{algorithmic}
\endgroup
\end{algorithm}

We write the complete public transcript excluding $F$ as
$
M_0=(L_1,L_2,T_1,T_2)
$
when $F=0$, and as
$
M_0=\bot
$
when $F=1$.

\subsection{Coding Scheme Analysis} \label{sec:csa}

\subsubsection{Abort probability}

Since the number of erased channel uses follows a binomial distribution
with parameters $(n,\epsilon)$, by \eqref{eq:finite_length_m_def} and
the abort test in Line 2 of Algorithm~\ref{alg:qot_achievability}, we have
$p_1^{(c)}(\Psi)\leq\delta$ for $c\in\{1,2\}$. Since the abort event
does not depend on $\Psi$ or $c$, we have \eqref{eq:qot_abort}.

\subsubsection{Correctness}

Fix $c\in\{1,2\}$ and condition on $F=0$. By Line 3
of Algorithm~\ref{alg:qot_achievability},
$L_c\subseteq\mathcal G$ and $L_{\bar c}\subseteq\mathcal B$.
Hence, Bob possesses $V_{L_c}$, which is used as the receiving system
in the teleportation of $A_c$, while the systems sent at the positions
in $L_{\bar c}$ are routed to the environment. 

Applying Lemma~\ref{lem:teleportation_properties_ach} with
$X=A_c$, $U^m=U_{L_c}$, $V^m=V_{L_c}$, and
$
R'=A_{\bar c}R\,U_{L_{\bar c}}W_{L_{\bar c}},
$
for the Bell-measurement outcome $T_c=t_c$, the normalized state on
$
V_{L_c}A_{\bar c}R\,U_{L_{\bar c}}W_{L_{\bar c}}
$
is
\begin{align*}
&\rho_{t_c}^{V_{L_c}A_{\bar c}R\,U_{L_{\bar c}}W_{L_{\bar c}}}
=
P_{t_c}
\left[
\left(
\operatorname{id}_{A_c\to V_{L_c}}
\otimes
\operatorname{id}_{A_{\bar c}R\,U_{L_{\bar c}}W_{L_{\bar c}}}
\right)
\left(
\Psi^{A_cA_{\bar c}R}
\otimes
\Phi_{2^m}^{U_{L_{\bar c}}W_{L_{\bar c}}}
\right)
\right]
P_{t_c}^{\dagger}.
\end{align*}
By Lines 4 and 5 of Algorithm \ref{alg:qot_achievability}, Alice sends $T_c=t_c$ to Bob, and Bob applies
$P_{t_c}^{\dagger}$ to $V_{L_c}$, giving
\begin{align}
\left(
P_{t_c}^{\dagger}
\otimes
I_{A_{\bar c}R\,U_{L_{\bar c}}W_{L_{\bar c}}}
\right)
\rho_{t_c}^{V_{L_c}A_{\bar c}R\,U_{L_{\bar c}}W_{L_{\bar c}}}
\left(
P_{t_c}
\otimes
I_{A_{\bar c}R\,U_{L_{\bar c}}W_{L_{\bar c}}}
\right)
=
\left(
\operatorname{id}_{A_c\to B_c}
\otimes
\operatorname{id}_{A_{\bar c}R}
\right)
\left(\Psi^{A_cA_{\bar c}R}\right)
\otimes
\Phi_{2^m}^{U_{L_{\bar c}}W_{L_{\bar c}}}.
\label{eq:correctness_selected_after_correction}
\end{align}

In the preceding application of Lemma~\ref{lem:teleportation_properties_ach}, only the Bell measurement on
$A_cU_{L_c}$ has been applied, we now apply
the Bell measurement on 
$A_{\bar c}U_{L_{\bar c}}$.  Denote this
trace-preserving measurement map by
$
\mathcal N_{A_{\bar c}U_{L_{\bar c}}\to Z},
$
where $Z$ contains the Bell-measurement outcome $T_{\bar c}$ and the
post-measurement register. Since the correctness condition only keeps
the marginal on $B_cR$, the registers $Z W_{L_{\bar c}}$ are traced
out. Therefore, by
\eqref{eq:correctness_selected_after_correction} and the fact that
$\mathcal N_{A_{\bar c}U_{L_{\bar c}}\to Z}$ is trace-preserving, we
have
\begin{align}
&\operatorname{Tr}_{ZW_{L_{\bar c}}}
\left[
\left(
\operatorname{id}_{B_cRW_{L_{\bar c}}}
\otimes
\mathcal N_{A_{\bar c}U_{L_{\bar c}}\to Z}
\right)
\left(
\left(
\operatorname{id}_{A_c\to B_c}
\otimes
\operatorname{id}_{A_{\bar c}R}
\right)
\left(\Psi^{A_cA_{\bar c}R}\right)
\otimes
\Phi_{2^m}^{U_{L_{\bar c}}W_{L_{\bar c}}}
\right)
\right]
\nonumber\\
&\qquad=
\operatorname{Tr}_{A_{\bar c}U_{L_{\bar c}}W_{L_{\bar c}}}
\left[
\left(
\operatorname{id}_{A_c\to B_c}
\otimes
\operatorname{id}_{A_{\bar c}R}
\right)
\left(\Psi^{A_cA_{\bar c}R}\right)
\otimes
\Phi_{2^m}^{U_{L_{\bar c}}W_{L_{\bar c}}}
\right]
\nonumber\\
&\qquad=
\left(
\operatorname{id}_{A_c\to B_c}
\otimes
\operatorname{id}_R
\right)
\left(\Psi^{A_cR}\right).
\label{eq:correctness_after_unselected_tracing}
\end{align}

The expression in \eqref{eq:correctness_after_unselected_tracing} does
not depend on $t_c$, $\mathcal G$, $L_c$, or
$L_{\bar c}$, hence, averaging over these variables  gives \eqref{eq:qot_correctness}.

\subsubsection{Receiver privacy}

The following symmetry property shows that Bob's subset announcement is
independent of his choice $C$.

\begin{lemma}
\label{lem:subset_announcement_symmetry}
Let $\mathcal J_1,\mathcal J_2\subseteq[n]$. Then,
\begin{equation}
\Prb[
L_1=\mathcal J_1,L_2=\mathcal J_2,F=0
\mid C=1
]
=
\Prb[
L_1=\mathcal J_1,L_2=\mathcal J_2,F=0
\mid C=2
]. \label{eqrps}
\end{equation}
\end{lemma}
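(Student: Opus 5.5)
The plan is to compute both sides of \eqref{eqrps} explicitly by conditioning on the erasure pattern, and to show that each equals the same expression, symmetric in $\mathcal J_1$ and $\mathcal J_2$. Since the channel acts on the halves $V_i$ independently of $C$ and of $\Psi$, the erased set $\mathsf S=\mathcal B$ has distribution $\Prb[\mathcal B=\mathcal S]=\epsilon^{|\mathcal S|}(1-\epsilon)^{n-|\mathcal S|}$ for every $\mathcal S\subseteq[n]$, regardless of $C$. The event $F=0$ is the event $m\le|\mathcal B|\le n-m$, which is determined by $\mathcal B$ alone. Given $\mathcal B$ with $F=0$, the sets $\mathcal G^\star$ and $\mathcal B^\star$ are drawn independently and uniformly from $\binom{\mathcal G}{m}$ and $\binom{\mathcal B}{m}$, again independently of $C$.

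The first step is to treat $C=1$, where $(L_1,L_2)=(\mathcal G^\star,\mathcal B^\star)$. The event $\{L_1=\mathcal J_1,L_2=\mathcal J_2,F=0\}$ is nonempty only if $|\mathcal J_1|=|\mathcal J_2|=m$, $\mathcal J_1\cap\mathcal J_2=\emptyset$, $\mathcal J_1\subseteq\mathcal G$ and $\mathcal J_2\subseteq\mathcal B$. Summing over erasure patterns $\mathcal B$ with $\mathcal J_2\subseteq\mathcal B\subseteq[n]\setminus\mathcal J_1$ and $m\le|\mathcal B|\le n-m$, the probability is
\begin{equation*}
\sum_{\substack{\mathcal B:\ \mathcal J_2\subseteq\mathcal B\subseteq[n]\setminus\mathcal J_1\\ m\le|\mathcal B|\le n-m}}
\epsilon^{|\mathcal B|}(1-\epsilon)^{n-|\mathcal B|}
\binom{n-|\mathcal B|}{m}^{-1}\binom{|\mathcal B|}{m}^{-1}.
\end{equation*}
Writing $k=|\mathcal B|$, the number of such $\mathcal B$ with $|\mathcal B|=k$ is $\binom{n-2m}{k-m}$. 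This reduces the probability to
\begin{equation*}
g\triangleq\sum_{k=m}^{n-m}\binom{n-2m}{k-m}\epsilon^{k}(1-\epsilon)^{n-k}\binom{n-k}{m}^{-1}\binom{k}{m}^{-1},
\end{equation*}
which depends only on $(n,m,\epsilon)$, and not on the particular pair $(\mathcal J_1,\mathcal J_2)$ or on their order.

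The second step is the case $C=2$, where $(L_1,L_2)=(\mathcal B^\star,\mathcal G^\star)$. The same computation applies with the roles exchanged: one needs $\mathcal J_1\subseteq\mathcal B$ and $\mathcal J_2\subseteq\mathcal G$. This yields the same sum $g$ under the same support conditions. Hence both sides of \eqref{eqrps} equal $g\cdot\mathds 1\{|\mathcal J_1|=|\mathcal J_2|=m,\ \mathcal J_1\cap\mathcal J_2=\emptyset\}$. The only delicate point is to check that the summation in $g$ is genuinely invariant under swapping which of $\mathcal J_1,\mathcal J_2$ is forced to be erased. This holds because the counting factor $\binom{n-2m}{k-m}$ and the factor $\binom{n-k}{m}^{-1}\binom{k}{m}^{-1}$ involve only $k$ and the sizes $|\mathcal J_1|=|\mathcal J_2|=m$. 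No reindexing $k\mapsto n-k$ is needed, since the assignment of $\mathcal G^\star$ to $L_C$ and $\mathcal B^\star$ to $L_{\bar C}$ merely relabels which announced set is inside $\mathcal G$, while the sum is over the same family of patterns in both cases.
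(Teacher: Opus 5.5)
Your proof is correct, but it follows a different route from the paper's. In both cases the admissible erasure patterns are $\mathcal B=\mathcal J_2\cup\mathcal X$ (for $C=1$) or $\mathcal B=\mathcal J_1\cup\mathcal X$ (for $C=2$), with $\mathcal X\subseteq[n]\setminus(\mathcal J_1\cup\mathcal J_2)$. The constraint $m\le|\mathcal B|\le n-m$ then holds automatically. The selection factor $\binom{k}{m}^{-1}\binom{n-k}{m}^{-1}$ depends only on $k=|\mathcal B|$, so both sides collapse to the same sum $g$.

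The paper never evaluates this sum. It conditions on $\mathcal G=\mathcal T$ and introduces the permutation $\pi$ of $[n]$ that swaps $\mathcal J_1$ and $\mathcal J_2$. It then uses three facts: $\Prb[\mathcal G=\mathcal T]=\Prb[\mathcal G=\pi(\mathcal T)]$; $|\pi(\mathcal T)|=|\mathcal T|$; and $\pi$ maps the event $\{\mathcal J_1\subseteq\mathcal T,\ \mathcal J_2\subseteq\mathcal T^c\}$ onto $\{\mathcal J_2\subseteq\pi(\mathcal T),\ \mathcal J_1\subseteq\pi(\mathcal T)^c\}$. A change of variables finishes the argument.

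The two routes buy different things. The bijection argument needs no counting, and it only uses invariance of the law of $\mathcal G$ under that single permutation. Your explicit evaluation gives more: $g$ does not depend on the particular disjoint pair, so given $F=0$ the announcement $(L_1,L_2)$ is uniform over ordered pairs of disjoint $m$-subsets, for either value of $C$. As a check, $\binom nm\binom{n-m}m\,g=\Prb[m\le\operatorname{Bin}(n,\epsilon)\le n-m]$. This is the honest-but-curious analogue of the uniformity statement \eqref{eq:mal_list_uniform} that the paper proves for the list announcement in the malicious scheme.
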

 
\begin{proof}
Define
$
\mathfrak G
\triangleq
\{
\mathcal T\subseteq[n]:
|\mathcal T|\geq m,\ |\mathcal T^c|\geq m
\}.
$
If $|\mathcal J_1|\neq m$, $|\mathcal J_2|\neq m$, or
$\mathcal J_1\cap\mathcal J_2\neq\varnothing$, both sides of \eqref{eqrps} are zero. Hence, suppose that
$
|\mathcal J_1|=|\mathcal J_2|=m
$
and
$
\mathcal J_1\cap\mathcal J_2=\varnothing.
$
Then, we have
\begin{align}
\Prb[
L_1=\mathcal J_1,L_2=\mathcal J_2,F=0
\mid C=1
]
\nonumber
&=
\sum_{\mathcal T\subseteq[n]}
\Prb[\mathcal G=\mathcal T\mid C=1]\,
\Prb[
L_1=\mathcal J_1,L_2=\mathcal J_2,F=0
\mid \mathcal G=\mathcal T,C=1
]
\nonumber\\
&\overset{(a)}{=}
\sum_{\mathcal T\subseteq[n]}
\Prb[\mathcal G=\mathcal T]\,
\Prb[
L_1=\mathcal J_1,L_2=\mathcal J_2,F=0
\mid \mathcal G=\mathcal T,C=1
]
\nonumber\\
&\overset{(b)}{=}
\sum_{\mathcal T\in\mathfrak G}
\Prb[\mathcal G=\mathcal T]\,
\Prb[
L_1=\mathcal J_1,L_2=\mathcal J_2
\mid \mathcal G=\mathcal T,C=1
]
\nonumber\\
&\overset{(c)}{=}
\sum_{\mathcal T\in\mathfrak G}
\Prb[\mathcal G=\mathcal T]\,
\frac{
\mathds{1}\{
\mathcal J_1\subseteq\mathcal T,\
\mathcal J_2\subseteq\mathcal T^c
\}
}{
\binom{|\mathcal T|}{m}
\binom{n-|\mathcal T|}{m}
}
\nonumber\\
&\overset{(d)}{=}
\sum_{\mathcal T\in\mathfrak G}
\Prb[\mathcal G=\pi(\mathcal T)]\,
\frac{
\mathds{1}\{
\mathcal J_2\subseteq\pi(\mathcal T),\
\mathcal J_1\subseteq\pi(\mathcal T)^c
\}
}{
\binom{|\pi(\mathcal T)|}{m}
\binom{n-|\pi(\mathcal T)|}{m}
}
\nonumber\\
&\overset{(e)}{=}
\sum_{\mathcal T'\in\mathfrak G}
\Prb[\mathcal G=\mathcal T']\,
\frac{
\mathds{1}\{
\mathcal J_2\subseteq\mathcal T',\
\mathcal J_1\subseteq(\mathcal T')^c
\}
}{
\binom{|\mathcal T'|}{m}
\binom{n-|\mathcal T'|}{m}
}
\nonumber\\
&\overset{(f)}{=}
\Prb[
L_1=\mathcal J_1,L_2=\mathcal J_2,F=0
\mid C=2
],
\nonumber
\end{align}
where
\begin{enumerate}[(a)]
\item holds because $\mathcal G$ is independent of $C$;
\item holds because $F=0$ if and only if
$\mathcal G\in\mathfrak G$;
\item holds by Line 3 of
Algorithm~\ref{alg:qot_achievability};
\item holds with $\pi$ a permutation of $[n]$ that swaps
$\mathcal J_1$ and $\mathcal J_2$ and fixes all elements outside
$\mathcal J_1\cup\mathcal J_2$, because the erasure pattern is i.i.d.,
$|\pi(\mathcal T)|=|\mathcal T|$, so $\Prb[\mathcal G=\mathcal T]=\Prb[\mathcal G=\pi(\mathcal T)]$, and
$
(\mathcal J_1\subseteq\mathcal T,\
\mathcal J_2\subseteq\mathcal T^c)
\Longleftrightarrow
(\mathcal J_2\subseteq\pi(\mathcal T),\
\mathcal J_1\subseteq\pi(\mathcal T)^c);
$
\item follows from the change of variables
$\mathcal T'=\pi(\mathcal T)$;
\item follows as in $(c)$ with $C=2$.
\end{enumerate}
\end{proof}

We first prove \eqref{eq:qot_receiver_privacy1}. When $F=0$,
for   
$m_0=(\mathcal J_1,\mathcal J_2,t_1,t_2)$, we have
\begin{align*}
&\Prb[M_0=m_0,F=0\mid C=c]\\
&=
\Prb[L_1=\mathcal J_1,L_2=\mathcal J_2,F=0\mid C=c] 
\Prb[T_1=t_1,T_2=t_2
\mid L_1=\mathcal J_1,L_2=\mathcal J_2,F=0,C=c]\\
&\overset{(a)}{=}
4^{-2m}
\Prb[L_1=\mathcal J_1,L_2=\mathcal J_2,F=0\mid C=c]\\
& \overset{(b)}{=} \Prb[M_0=m_0,F=0\mid C=\bar{c}],
\end{align*}
where \begin{enumerate}[(a)]
    \item   holds by Lemma~\ref{lem:teleportation_properties_ach};
 \item holds by
Lemma~\ref{lem:subset_announcement_symmetry}.
\end{enumerate}
 When $F=1$, $(M_0,F)=(\bot,1)$, whose probability is  independent
of $c$, since $\mathcal G$ is independent of $c$ and
determines the abort decision. Hence, \eqref{eq:qot_receiver_privacy1} holds.

We next prove \eqref{eq:qot_receiver_privacy2}.
Let $\rho$ denote the state immediately after Line~1
of Algorithm~\ref{alg:qot_achievability}.
For each $(m_0,f)$, let
$\mathcal A_{m_0,f}:\mathcal L(A_1A_2U^n)
\to\mathcal L(Q_A)$ denote Alice's completely positive,
trace-nonincreasing operation corresponding to
the Bell measurements in Line~4, conditioned on $m_0$, when $f=0$,
or to her termination after Bob's abort announcement
in Line~2 when $f=1$.
For fixed $(m_0,f)$, this operation has no further
dependence on $c$ or $s$.

When $F=0$, for
$m_0=(\mathcal J_1,\mathcal J_2,t_1,t_2)$
and every compatible $s$, we have
\begin{align*}
\omega^{(c)}_{Q_AR
\mid M_0=m_0,F=0,\mathsf S=s}(\Psi)
&\overset{(a)}{=}
4^{2m}
(\mathcal A_{m_0,0}\otimes\operatorname{id}_R)
\left(
\rho^{A_1A_2RU^n
\mid\mathcal G=s^c,C=c}(\Psi)
\right)\\
&\overset{(b)}{=}
4^{2m}
(\mathcal A_{m_0,0}\otimes\operatorname{id}_R)
\left(
\Psi^{A_1A_2R}\otimes\frac{I_{U^n}}{2^n}
\right)\\
&\triangleq\tau_{Q_ARm_00}(\Psi),
\end{align*}
where \begin{enumerate}[(a)]
\item holds because Bob's announcements in Lines~2--3
depend only on $\mathcal G$ and independent randomness,
so conditioning on them leaves the state on
$A_1A_2RU^n$ unchanged given $\mathcal G,C$, then
conditioning on the Bell outcomes in Line~4 divides
the subnormalized output by their probability
$4^{-2m}$, given by
Lemma~\ref{lem:teleportation_properties_ach}, finally,
 Bob's decoding in Line~5 acts only on his registers,
leaving $Q_AR$ unchanged.

\item holds because Line~1 prepares $n$ independent
Bell pairs independently of $\Psi$, and conditioning
on $\mathcal G$ only specifies whether each receiving
half is held by Bob or the environment, leaving
Alice's marginal equal to $I_{U^n}/2^n$.
\end{enumerate}
Thus, for fixed $m_0$, both the input state
$\Psi^{A_1A_2R}\otimes I_{U^n}/2^n$ and the operation
$\mathcal A_{m_0,0}$ are independent of $c$ and $s$,
so $\tau_{Q_ARm_00}(\Psi)$ is also independent
of $c$ and $s$.
When $F=1$, the protocol terminates in Line~2,
before the Bell measurements in Line~4, and $m_0=\bot$.
Alice applies the fixed abort operation $\mathcal A_{\bot,1}$
to her registers $A_1A_2U^n$, whose joint state with $R$
after Line~1 is $\Psi^{A_1A_2R}\otimes I_{U^n}/2^n$
regardless of the unerased set $\mathcal G$.
Since both this state and the abort operation are
independent of $c$ and $s$, the resulting state on
$Q_AR$, denoted by $\tau_{Q_AR\bot1}(\Psi)$,
is also independent of $c$ and $s$.
Hence, \eqref{eq:qot_receiver_privacy2} holds.

\subsubsection{Non-chosen state privacy}
To prove non-chosen-state privacy, we first show that,
on success, the conditional state on $B_cE_cR$ depends
only on $\Psi^{A_cR}$, and that the erasure pattern,
public communication, and abort flag have a joint
distribution independent of $\Psi$.
\begin{lemma}
\label{lem:honest_nonchosen_view}
Fix $c\in\{1,2\}$.
Conditioned
on $F=0$ and any event of positive probability
$(\mathcal G,L_1,L_2,T_1,T_2)
=(\mathcal T,\mathcal J_1,\mathcal J_2,t_1,t_2)$,
the state on $B_cE_cR$ has the form
$\Psi^{B_cR}\otimes
\xi_{\mathcal T,\mathcal J_1,\mathcal J_2,t_1,t_2}^{E_c}$,
where $\xi_{\mathcal T,\mathcal J_1,\mathcal J_2,t_1,t_2}^{E_c}$
is normalized and independent of $\Psi$.
Moreover, with $M_0=(L_1,L_2,T_1,T_2)$ on success,
where $(L_1,L_2)$ are announced in Line~3 and
$(T_1,T_2)$ in Line~4, and  $M_0=\bot$ on abort,
the joint distribution of $(\mathcal G,M_0,F)$
is independent of $\Psi$.
\end{lemma}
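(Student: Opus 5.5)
Fix $c$ and condition on $F=0$ and on an event $(\mathcal G,L_1,L_2,T_1,T_2)=(\mathcal T,\mathcal J_1,\mathcal J_2,t_1,t_2)$ of positive probability. The plan is to follow the state through Algorithm~\ref{alg:qot_achievability}, tracking which registers sit in Bob's laboratory. After Line~1 and the channel uses, the global state is
\[
\Psi^{A_1A_2R}\otimes\bigotimes_{i\in\mathcal T}\Phi^{U_iV_i}\otimes\bigotimes_{i\notin\mathcal T}\Phi^{U_iW_i}.
\]
Here Bob holds $V_{\mathcal T}$, erasure flags on $\mathcal T^c$, and the environment flags. By no trusted deletion, $E_c$ consists of Bob's classical records (namely $\mathcal T$, $\mathsf S=\mathcal T^c$, the sampling randomness for $\mathcal G^\star,\mathcal B^\star$, and the received $t_1,t_2$), the erasure flags, the unused halves $V_{\mathcal T\setminus L_c}$, and possibly copies of $F$ and $M_0$. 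Conditioned on the event, the classical part of $E_c$ is a fixed pure basis state determined by $(\mathcal T,\mathcal J_1,\mathcal J_2,t_1,t_2)$ and the random sampling choices. Averaging over the latter, given the announced sets, yields a $\Psi$-independent mixture. Because $L_c\subseteq\mathcal T$ and $L_{\bar c}\subseteq\mathcal T^c$, the unused pairs $U_iV_i$ with $i\in\mathcal T\setminus L_c$ never interact with Alice's Bell measurements, so Bob's halves are maximally mixed and independent of everything else.

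Next I will apply the correctness computation \eqref{eq:correctness_selected_after_correction}. After Bob applies $P_{t_c}^\dagger$, the joint state is $\Psi^{B_cA_{\bar c}R}\otimes\Phi_{2^m}^{U_{L_{\bar c}}W_{L_{\bar c}}}$ tensored with the untouched registers. Alice then Bell-measures $A_{\bar c}U_{L_{\bar c}}$ and obtains outcome $t_{\bar c}$. Applying Lemma~\ref{lem:teleportation_properties_ach}, with $X=A_{\bar c}$, receiver $W_{L_{\bar c}}$, and $R'=B_cR$ together with the rest, gives two facts. First, the outcome is uniform and independent of $B_cR$. Second, the post-measurement state of $B_cR$ is unchanged, namely $\Psi^{B_cR}$, while $A_{\bar c}$ is teleported into the environment $W_{L_{\bar c}}$, which is traced out. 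Consequently the conditional state on $B_cE_cR$ is $\Psi^{B_cR}\otimes\xi^{E_c}$, where $\xi^{E_c}$ is the product of the classical records, the flags, and the maximally mixed unused halves, all $\Psi$-independent. The step I expect to need the most care is making sure no part of $E_c$ is correlated with $R$ through $A_{\bar c}$. This holds because the only Bob-side systems that interacted with $A_{\bar c}$ are classical outcomes, and by \eqref{eq:tele1} those are independent of $R'$.

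For the distributional claim, I will show the three parts of the joint law are each $\Psi$-independent. The law of $\mathcal G$ is i.i.d.\ erasures, independent of $\Psi$. Given $\mathcal G$, the flag $F$ is determined by $\mathcal G$ and $(L_1,L_2)$ is drawn using Bob's independent randomness and $c$, neither of which involves $\Psi$. Given these, $(T_1,T_2)$ is uniform with probability $4^{-2m}$ by applying Lemma~\ref{lem:teleportation_properties_ach} twice: once for $A_c$, and once for $A_{\bar c}$ with the remaining registers as reference. On abort we have $M_0=\bot$, which is a function of $\mathcal G$. Hence the joint law of $(\mathcal G,M_0,F)$ is independent of $\Psi$.
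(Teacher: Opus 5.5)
Your proposal is correct and follows essentially the same route as the paper. Both arguments handle the two teleportations via Lemma~\ref{lem:teleportation_properties_ach}: $A_c$ goes to $V_{L_c}$ and the Pauli is cancelled by $P_{t_c}^\dagger$, while $A_{\bar c}$ goes to $W_{L_{\bar c}}$, which is traced out. Both then argue that Bob's remaining registers are $\Psi$-independent, and both obtain the distributional claim from the input-independence of Lines~1--3 together with the $4^{-2m}$ Bell-outcome probability. Your second application of the teleportation lemma with $R'=B_cR\cdots$ is just an equivalent phrasing of the paper's step that the trace over $W_{\mathcal J_{\bar c}}$ is invariant under $P_{t_{\bar c}}$.
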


\begin{proof}
We have
\begin{align}
&\omega^{(c)}_{B_cE_cR
\mid
\mathcal G=\mathcal T,\,
L_1=\mathcal J_1,\,
L_2=\mathcal J_2,\,
T_1=t_1,\,
T_2=t_2,\,
F=0}(\Psi)
\nonumber\\
&\overset{(a)}{=}
\operatorname{Tr}_{W_{\mathcal J_{\bar c}}}
\left[
P_{t_{\bar c}}^{W_{\mathcal J_{\bar c}}}
\left[
\left(
\operatorname{id}_{A_c\to B_c}
\otimes
\operatorname{id}_{A_{\bar c}\to W_{\mathcal J_{\bar c}}}
\otimes
\operatorname{id}_R
\right)
\left(
\Psi^{A_cA_{\bar c}R}
\right)
\right]
\left(
P_{t_{\bar c}}^{W_{\mathcal J_{\bar c}}}
\right)^\dagger
\right]
\otimes
\xi_{\mathcal T,\mathcal J_1,\mathcal J_2,t_1,t_2}^{E_c}
\nonumber\\
&\overset{(b)}{=}
\left(
\operatorname{id}_{A_c\to B_c}
\otimes
\operatorname{id}_R
\right)
\left(
\Psi^{A_cR}
\right)
\otimes
\xi_{\mathcal T,\mathcal J_1,\mathcal J_2,t_1,t_2}^{E_c}
\nonumber\\
&=
\Psi^{B_cR}
\otimes
\xi_{\mathcal T,\mathcal J_1,\mathcal J_2,t_1,t_2}^{E_c},
\nonumber
\end{align}
where
\begin{enumerate}[(a)]
\item holds because, immediately before Line~4,
conditioned on $\mathcal G=\mathcal T$ and
$(L_1,L_2)=(\mathcal J_1,\mathcal J_2)$,
we have
$\rho^{U_{\mathcal J_c}V_{\mathcal J_c}}
=\Phi_{2^m}^{U_{\mathcal J_c}V_{\mathcal J_c}}$
and
$\rho^{U_{\mathcal J_{\bar c}}W_{\mathcal J_{\bar c}}}
=\Phi_{2^m}^{U_{\mathcal J_{\bar c}}W_{\mathcal J_{\bar c}}}$.
Applying Lemma~\ref{lem:teleportation_properties_ach}
to the two teleportations in Line~4 transfers
$A_c$ to Bob and $A_{\bar c}$ to the environment.
Bob's correction in Line~5 cancels the Pauli
associated with $t_c$.
His remaining systems, including unused channel
outputs and retained classical records, have a
normalized conditional state
$\xi_{\mathcal T,\mathcal J_1,\mathcal J_2,t_1,t_2}^{E_c}$
independent of $\Psi$;

\item holds because the partial trace over
$W_{\mathcal J_{\bar c}}$ is invariant under
$P_{t_{\bar c}}^{W_{\mathcal J_{\bar c}}}$
and corresponds to tracing $A_{\bar c}$ from
the input state.
\end{enumerate}

The same applications of
Lemma~\ref{lem:teleportation_properties_ach} give
$\Prb[T_1=t_1,T_2=t_2
\mid \mathcal G=\mathcal T,L_1=\mathcal J_1,
L_2=\mathcal J_2,F=0,C=c]=4^{-2m}$
for every input state.
Since Lines~1--3 generate the erasure pattern,
abort decision, and subset choices without using
either input, the probabilities of all successful
records $(\mathcal G,M_0,F)$ are input-independent.
On abort, $M_0=\bot$, and the decision in Line~2
depends only on $\mathcal G$.
Thus, the probabilities of the aborted records
are also input-independent.
\end{proof}

We now use Lemma~\ref{lem:honest_nonchosen_view}
to construct the simulator.
For every state $\rho^{A_c}$, define
$\mathsf{Sim}_{c,n}(\rho^{A_c})
\triangleq
\omega^{(c)}_{B_cE_cM_0F}
\left(
\rho^{A_c}\otimes
\ket{0^m}\!\bra{0^m}^{A_{\bar c}}
\right)$,
which appends the fixed non-chosen input,
executes both parties and the erasure channel in
Algorithm~\ref{alg:qot_achievability} with choice $c$,
and returns $B_cE_cM_0F$.
This map is CPTP because appending a fixed state, executing the protocol, and tracing out the other output registers
are CPTP operations, which are independent of $\Psi$. Then, \eqref{eq:qot_global_destructive_privacy_general}
holds because
\begin{align}
\omega^{(c)}_{B_cE_cM_0FR}
\left(\Psi^{A_1A_2R}\right)
&\overset{(a)}{=}
\omega^{(c)}_{B_cE_cM_0FR}
\left(
\Psi^{A_cR}\otimes
\ket{0^m}\!\bra{0^m}^{A_{\bar c}}
\right)
\nonumber\\
&\overset{(b)}{=}
(\mathsf{Sim}_{c,n}\otimes\operatorname{id}_R)
\left(\Psi^{A_cR}\right),
\nonumber
\end{align}
where
\begin{enumerate}[(a)]
\item holds by comparing Algorithm~\ref{alg:qot_achievability},
with choice $c$, on inputs $\Psi^{A_1A_2R}$ and
$\Psi^{A_cR}\otimes
\ket{0^m}\!\bra{0^m}^{A_{\bar c}}$. First, consider $F=0$ and fix
$\mathcal G=\mathcal T$ and
$M_0=m_0=(\mathcal J_1,\mathcal J_2,t_1,t_2)$.
Lemma~\ref{lem:honest_nonchosen_view} gives
\begin{align*}
\omega^{(c)}_{B_cE_cR
\mid \mathcal G=\mathcal T,M_0=m_0,F=0}
\left(\Psi^{A_1A_2R}\right)
=
\omega^{(c)}_{B_cE_cR
\mid \mathcal G=\mathcal T,M_0=m_0,F=0}
\left(
\Psi^{A_cR}\otimes
\ket{0^m}\!\bra{0^m}^{A_{\bar c}}
\right). \numberthis \label{eq:ppp}
\end{align*}
Lemma~\ref{lem:honest_nonchosen_view}  also gives the same probabilities for
$(\mathcal G,M_0)$ conditioned on $F=0$ so  averaging \eqref{eq:ppp}  gives
identical states on $B_cE_cM_0R$ conditioned on
success. When $F=1$, Line~2 terminates the protocol before
either input is used.
Let $\xi_{\bot,c}^{B_cE_cM_0}$ denote Bob's resulting
state, which is independent of the input and
uncorrelated with $R$.
For each of the two inputs, the conditional state
on $B_cE_cM_0R$ is therefore  equal to
$\xi_{\bot,c}^{B_cE_cM_0}\otimes\Psi^R$. Finally, since Lemma~\ref{lem:honest_nonchosen_view} also
gives the same distribution of $F$ for both inputs,
averaging over $F$ proves $(a)$;

\item holds by the definition of $\mathsf{Sim}_{c,n}$.
\end{enumerate}

\section{Converse proof of Theorem \ref{thm:finite_length_achievability}}
\label{sec:converse}
Fix an arbitrary $(n,m)$ protocol satisfying Conditions 1--4 of Definition~\ref{def:qot_delta_achievability} with abort probability at most $\delta$. This includes the honest execution of every admissible malicious protocol, so the proof applies to both settings.   For the converse, we use the maximally retained Stinespring
implementation described in Definition \ref{def1}, hence,   conditioned on the complete public transcript, the abort flag,
and the erasure pattern, the resulting global state is pure.

In the following, let
$Z^n=(Z_1,\ldots,Z_n)\in\{0,1\}^n$ with
$
Z_i\triangleq \mathds{1}\{\text{the $i$-th channel output is not erased at Bob}\}.
$
Define
$$
G(Z^n)\triangleq\{i:Z_i=1\},
\qquad
S(Z^n)\triangleq\{i:Z_i=0\}.
$$
 First, we show  in
Section~\ref{secb1} that exact
recovery of the chosen $m$-qubit state requires at least $m$ unerased
channel uses on every successful execution, i.e., $|G(Z^n)|\geq m$. Then, we show in Section \ref{secb2} that privacy
of the non-chosen $m$-qubit state requires at least $m$ erased channel
uses on every successful execution, i.e., $|S(Z^n)|\geq m$. Finally, we conclude in Section \ref{secb3} that $
\Prb\left[
m\leq \operatorname{Bin}(n,\epsilon)\leq n-m
\right]
\geq
1-\delta$. 

\subsection{Unerased-channel-use bound}
\label{secb1}

Fix $C=1$ and choose the input as
$
\Psi^{A_1A_2R_1R_2}
=
\Phi^{A_1R_1}_{2^m}\otimes \xi^{A_2R_2},
$
where $\xi^{A_2R_2}$ is a pure state and $\Phi^{A_1R_1}_{2^m}$ is maximally entangled, i.e., 
$
\Phi^{A_1R_1}_{2^m}
=
|\Phi_{2^m}\rangle\!\langle\Phi_{2^m}|^{A_1R_1}
$
and
$
|\Phi_{2^m}\rangle^{A_1R_1}
=
\frac{1}{\sqrt{2^m}}
\sum_{x\in\{0,1\}^m}
|x\rangle^{A_1}|x\rangle^{R_1}.
$
All probabilities and states in this subsection are evaluated for this fixed
choice of $C$ and $\Psi$.

For a bipartite pure state $\psi^{AB}$, denote its Schmidt rank with
respect to the bipartition $A:B$ by
$\operatorname{SR}_{A:B}(\psi)$. Conditioned on a fixed erasure pattern
$Z^n=z^n$, only the unerased quantum outputs indexed by $G(z^n)$ enter
Bob's private registers, while the erased quantum outputs are routed to the
environment. The following lemma formalizes that each unerased qubit can
increase the Schmidt rank across Bob versus the rest by at most a factor of
$2$.

\begin{lemma}
\label{lem:schmidt_rank_growth}
Fix   $z^n$,   $m_0$ , and  $f$   such that
$
\Prb[Z^n=z^n,M_0=m_0,F=f]>0.
$
For each $i\in\{0,\ldots,n\}$, let $\psi_i$ be the pure state immediately
after the first $i$ channel uses, conditioned only on $Z^i=z^i$ and on
the public messages produced by that time taking their values in $m_0$.
Let $\mathcal L_i$ be the collection of registers outside Bob's
laboratory, including Alice's private register, the reference systems
$R_1R_2$, and the erasure-channel environment registers $W^i$, and let
$\mathcal B_i$ be Bob's private registers. Define
$
R_i\triangleq
\operatorname{SR}_{\mathcal L_i:\mathcal B_i}(\psi_i).
$
Then,
\begin{align}
R_0&=1,
\label{eq:sr_initial_one}\\
R_i&\leq 2^{z_i}R_{i-1},
\qquad
i\in\{1,\ldots,n\}.
\label{eq:sr_one_step_growth}
\end{align}
Hence,
$
R_n\leq 2^{|G(z^n)|}.
$

\end{lemma}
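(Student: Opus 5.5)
The proof tracks the Schmidt rank across the cut $\mathcal L_i:\mathcal B_i$ through the three kinds of operations allowed by Definition~\ref{def1}. These are local isometries with retained records, public classical communication, and channel uses. We also condition on public messages and on erasure flags, which are projections onto classical outcomes.

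\emph{Base case.} Before the first channel use, there is no initial shared entanglement. The input $\Psi^{A_1A_2R_1R_2}$ and the reference lie entirely in $\mathcal L_0$, and Bob's register holds a state in product with everything else. After the round-1 public exchange there are two further ingredients. Each local step is an isometry acting on one side of the cut, so it preserves the Schmidt rank. Each public message is produced by a local measurement, implemented isometrically, with the outcome copied to both parties; conditioning on a specific value then projects the sender's side onto a rank-one classical record. Both receiver and sender then hold a classical copy in the basis state $|m\rangle$. This sets the receiver's record to a product pure state, so these steps keep the conditioned global state pure and product across the cut. This gives $R_0=1$.

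\emph{Invariance lemma.} For the inductive step, the key fact is that the Schmidt rank cannot increase under either a local isometry or a local projection followed by renormalization. If $|\psi\rangle=\sum_{k=1}^r \sqrt{\lambda_k}|a_k\rangle|b_k\rangle$, then $(P\otimes I)|\psi\rangle=\sum_k\sqrt{\lambda_k}(P|a_k\rangle)|b_k\rangle$ has at most $r$ terms, and the same holds for an operator acting on the other side. Publishing a classical value and conditioning on it is therefore a local projection on the sender's side together with the appending of a product basis state $|m\rangle$ on the receiver's side. Hence the public communication within rounds, conditioned on values consistent with $m_0$, never increases $R$.

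\emph{Channel use.} In round $i$, Alice first prepares $X_i$ by a local isometry, and $X_i$ is on the $\mathcal L$ side. The isometry $V_\epsilon$ maps $X_i$ to $Y_iW_i$. Conditioning on the erasure flag $z_i$ amounts to projecting $Y_i$ onto either the flag $|e\rangle$ or its orthogonal complement. Since the flag is classically recorded in the protocol via $\mathsf S$, I treat it as a measurement.
\begin{itemize}
\item If $z_i=0$, the conditioned state is $|e\rangle^{Y_i}$ tensored with $X_i$ relabeled as $W_i\in\mathcal L_i$. This is a local relabeling on the $\mathcal L$ side together with a product pure state appended on Bob's side, so $R_i\le R_{i-1}$.
\item If $z_i=1$, the qubit $X_i$ moves from $\mathcal L$ to $\mathcal B$, and $W_i$ is left in the product state $|e\rangle$. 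Write $|\psi\rangle=\sum_{k\le r}\sqrt{\lambda_k}|a_k\rangle^{\mathcal L' X_i}|b_k\rangle$ and expand each $|a_k\rangle=\sum_{x\in\{0,1\}}|a_{k,x}\rangle^{\mathcal L'}|x\rangle^{X_i}$. The operator-Schmidt rank of the coefficient matrix of the new cut $\mathcal L':X_i\mathcal B$ is then at most $2r$. Equivalently, the rank of the reduced state on $\mathcal L'$ is at most $\dim(X_i)\cdot r$.
\end{itemize}
In both cases $R_i\le 2^{z_i}R_{i-1}$, after incorporating the subsequent public communication via the invariance lemma.

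Iterating gives $R_n\le 2^{|G(z^n)|}$. The main obstacle I expect is bookkeeping, not mathematics. One must justify that conditioning on the prefix of $m_0$ and on $z^i$ yields a pure state; this follows from the maximally retained Stinespring implementation, since every record is kept in $Q_A$, in $E_c$, or in the public register. One must also ensure that the copies of public messages on both sides are placed so as not to add Schmidt rank. Conditioning fixes them to definite basis states, so they contribute a factor of $1$.
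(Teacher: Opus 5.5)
Your proposal is correct and follows essentially the same route as the paper's proof in Appendix~\ref{app_B}. Both arguments get $R_0=1$ from the product initial state. Both treat each local map and each conditioned public message as a local linear operator, possibly together with a fixed basis state on the receiving side, which cannot increase the Schmidt rank. An erased use only moves the qubit into $W_i\subseteq\mathcal L_i$, and an unerased use at most doubles the rank by expanding in the computational basis of $X_i$.
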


\begin{proof}See Appendix \ref{app_B}.\end{proof}

The next lemma converts Lemma~\ref{lem:schmidt_rank_growth} into a
statement about aborts. Specifically,  if fewer than $m$ channel uses are not erased,
the protocol must abort.

\begin{lemma}
\label{lem:bad_erasure_no_success}
For every $z^n$ such that
$
|G(z^n)|<m,
$
we have
$
\Prb[F=0,Z^n=z^n]=0.
$
\end{lemma}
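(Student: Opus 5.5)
We argue by contradiction, combining the exact correctness condition \eqref{eq:qot_correctness} with the Schmidt-rank bound of Lemma~\ref{lem:schmidt_rank_growth}, for the fixed choice $C=1$ and input $\Phi^{A_1R_1}_{2^m}\otimes\xi^{A_2R_2}$ of this subsection. Suppose some $z^n$ with $|G(z^n)|<m$ has $\Prb[F=0,Z^n=z^n]>0$. Then there is a full public record $m_0$ with $\Prb[Z^n=z^n,M_0=m_0,F=0]>0$. Conditioned on this event, the maximally retained Stinespring implementation makes the global state $\psi_n$ pure.

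First, we pass from the end of the $n$ channel rounds to the end of the protocol. After round $n$, the parties only exchange public classical messages and apply local isometries. Bob's final decoding is one of these local isometries, producing $B_1E_1F$. Conditioning on the values of these classical messages amounts to applying a local projection on the sender's side, with copies of the message retained on both sides. Neither local isometries nor local projections followed by renormalization increase the Schmidt rank across $\mathcal L:\mathcal B$. Copying a classical value that has already been fixed by the conditioning adds no entanglement. Hence the final conditional pure state $\phi$ satisfies $\operatorname{SR}_{\mathcal L:\mathcal B}(\phi)\le R_n\le 2^{|G(z^n)|}<2^m$. Here Bob's side is $B_1E_1$, and the rest contains $R_1$. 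The delicate point is that Lemma~\ref{lem:schmidt_rank_growth} conditions only on the messages produced by time $n$, so the final messages must be handled by this monotonicity argument.

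Second, we show that correctness holds on each positive-probability conditional branch, not merely on average. The condition \eqref{eq:qot_correctness} states that the average over branches with $F=0$ of the conditional states on $B_1R_1$ equals $\Phi_{2^m}^{B_1R_1}$. This target is pure, and a pure state is an extreme point. Therefore every branch in the convex decomposition, in particular the branch $(z^n,m_0,F=0)$, must have $B_1R_1$-marginal equal to $\Phi_{2^m}^{B_1R_1}$. (The $R_2$ part is a product factor and can be traced out harmlessly.)

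Third, we derive the contradiction. In this branch the global state $\phi$ is pure and its marginal on $B_1R_1$ is pure. It therefore factorizes as $\phi=\Phi^{B_1R_1}_{2^m}\otimes\phi'$, where $\phi'$ lives on the remaining registers $E_1$ (Bob's side) and the rest of $\mathcal L$. Schmidt rank is multiplicative under tensor products. The factor $\Phi^{B_1R_1}_{2^m}$ straddles the cut with $B_1$ on Bob's side and $R_1$ outside, and contributes rank $2^m$. Hence $\operatorname{SR}_{\mathcal L:\mathcal B}(\phi)\ge 2^m$, contradicting the bound from the first step. So $\Prb[F=0,Z^n=z^n]=0$. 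I expect the main obstacle to be the first step: making rigorous that the post-channel classical communication and the conditioning on its outcome cannot raise the Schmidt rank.
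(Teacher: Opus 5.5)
Your proposal is correct and follows essentially the same route as the paper. Both use the extreme-point argument to get the branchwise marginal $\Phi^{B_1R_1}_{2^m}$, factorize the pure conditional global state to get Schmidt rank at least $2^m$, and contradict this with $R_n\leq 2^{|G(z^n)|}<2^m$ from Lemma~\ref{lem:schmidt_rank_growth}; your first step additionally makes explicit that the post-channel public messages, Bob's decoding, and the conditioning on $F=0$ are local operations that cannot raise the Schmidt rank, which the paper uses implicitly when it applies Lemma~\ref{lem:schmidt_rank_growth} directly to the final state $\Omega^{(1,z^n,m_0)}$.
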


\begin{proof}
Fix $z^n$ such that
$
 |G(z^n)|<m
$
and, by contradiction, suppose that
$
\Prb[F=0,Z^n=z^n]>0.
$
Since
 $
\Prb[F=0,Z^n=z^n]
=
\sum_{m_0}
\Prb[F=0,M_0=m_0,Z^n=z^n],
$
there exists $m_0$ such that
$
\Prb[F=0,M_0=m_0,Z^n=z^n]>0,
$
so that the normalized conditional state
$\Omega^{(1,z^n,m_0)}_{Q_AB_1E_1R_1R_2W^n}
\triangleq
\Omega^{(1)}_{Q_AB_1E_1R_1R_2W^n
\,\mid\,
F=0,M_0=m_0,Z^n=z^n}(\Psi)
$
is well defined. Then, we have
\begin{align}
\operatorname{SR}_{\,Q_AR_1R_2W^n\,:\,B_1E_1}
\left(
\Omega^{(1,z^n,m_0)}
\right)
&\overset{(a)}{=}
\operatorname{SR}_{R_1:B_1}(\Phi_{2^m})\,
\operatorname{SR}_{\,Q_AR_2W^n\,:\,E_1}(\chi)
\nonumber\\
&\overset{(b)}{\geq}
\operatorname{SR}_{R_1:B_1}(\Phi_{2^m})
\nonumber\\
&\overset{(c)}{=}
2^m,
\label{eq:schmidt_lower_block}
\end{align}
where 
\begin{enumerate}[(a)]
    \item  follows because there exists a pure state $\chi^{Q_AE_1R_2W^n}$ such that $\Omega^{(1,z^n,m_0)}_{Q_AB_1E_1R_1R_2W^n}=\Phi^{B_1R_1}_{2^m}\otimes \chi^{Q_AE_1R_2W^n}$ since $\Omega^{(1,z^n,m_0)}$ is pure  and its $B_1R_1$-marginal $\omega^{(1)}_{B_1R_1\,\mid\, F=0,M_0=m_0,Z^n=z^n}(\Psi)$ is the pure state $\Phi^{B_1R_1}_{2^m}$. Indeed, we have $\omega^{(1)}_{B_1R_1\,\mid\, F=0,M_0=m_0,Z^n=z^n}(\Psi)=\Phi^{B_1R_1}_{2^m}$   by the fact that every positive-weight term in a convex
decomposition of a pure state equals that pure state
\cite[Sec.~2.1.2]{watrous2018theory}, since $\Phi^{B_1R_1}_{2^m}$ is pure, $\Prb[M_0=m_0,Z^n=z^n\mid F=0]>0$,  and
\begin{align}
\Phi^{B_1R_1}_{2^m}
=
\omega^{(1)}_{B_1R_1\mid F=0}(\Psi) \nonumber&=
\sum_{\tilde m_0,\tilde z^n}
\Prb[M_0=\tilde m_0,Z^n=\tilde z^n\mid F=0]\,
\omega^{(1)}_{B_1R_1
\,\mid\,
F=0,M_0=\tilde m_0,Z^n=\tilde z^n}(\Psi),
\end{align}
where  the first equality  follows from  \eqref{eq:qot_correctness}, the second equality   follows from the law of
total probability over the classical registers $M_0$ and $Z^n$, conditioned
on $F=0$;
\item follows because $ \operatorname{SR}_{\,Q_AR_2W^n\,:\,E_1}(\chi)\geq1 $;
\item follows because $\Phi_{2^m}$ is maximally entangled and thus has Schmidt rank $2^m$.
\end{enumerate}

On the other hand, Lemma~\ref{lem:schmidt_rank_growth} gives
\begin{align*}
\operatorname{SR}_{\,Q_AR_1R_2W^n\,:\,B_1E_1}
\left(
\Omega^{(1,z^n,m_0)}
\right)
& \leq R_n \\
&\leq
2^{|G(z^n)|}\\
&<
2^m,
\end{align*}
which contradicts \eqref{eq:schmidt_lower_block}. Therefore,
$
\Prb[F=0,Z^n=z^n]=0.
$
\end{proof}

\subsection{Erased-channel-use bound}

\label{secb2}
Choose the maximally entangled input $
\Psi^{A_1A_2R_1R_2}
=
\Phi^{A_1R_1}_{2^m}
\otimes
\Phi^{A_2R_2}_{2^m}$. All probabilities and states in this subsection are evaluated for this
input.

\begin{lemma}
\label{lem:bad_privacy_erasure_no_success}
For every $z^n$ such that
$
|S(z^n)|<m,
$
we have
$
\Prb[F=0,Z^n=z^n\mid C=1]=0.
$
\end{lemma}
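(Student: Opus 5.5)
We plan to argue by contradiction, as in Lemma~\ref{lem:bad_erasure_no_success}. Fix $z^n$ with $|S(z^n)|<m$ and suppose $\Prb[F=0,Z^n=z^n\mid C=1]>0$. Then there is an $m_0$ with $\Prb[F=0,M_0=m_0,Z^n=z^n\mid C=1]>0$. Let $\Omega^{(1)}$ denote the corresponding normalized conditional global state on $Q_AB_1E_1R_1R_2W^n$; it is pure by the maximally retained Stinespring implementation. Conditioned on $Z^n=z^n$, every $W_i$ with $i\in G(z^n)$ is in the fixed state $|e\rangle$ and factors out. The purification therefore lives on $Q_A$, Bob's registers $B_1E_1$, $R_1R_2$, and $W_{S}$ with $S=S(z^n)$, where $H(W_S)\le |S|<m$. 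The strategy is to show that $R_2$, which has $H(R_2)=m$, is decoupled both from Alice's side $Q_AR_1$ and from Bob's side $B_1E_1$. Its entire purification must then sit in $W_S$, which is too small.

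\textbf{Step 1 (decoupling from Alice, via $C=2$ and receiver privacy).} Consider the same input with $C=2$. Correctness \eqref{eq:qot_correctness} gives $\omega^{(2)}_{B_2R_2\mid F=0}=\Phi^{B_2R_2}_{2^m}$. By the convex-decomposition-of-a-pure-state argument used in Lemma~\ref{lem:bad_erasure_no_success}, the same holds conditioned on $(M_0,F,\mathsf S)=(m_0,0,s)$ whenever this event has positive probability under $C=2$. Positivity under $C=2$ is needed for this. We plan to get it from \eqref{eq:qot_receiver_privacy1}, which gives equal probabilities of $(m_0,0)$, together with the fact that the erasure pattern is independent of $C$. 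If this is delicate, we would instead apply \eqref{eq:qot_receiver_privacy2} with the common state $\tau$ for any positive-probability $s'$ under $C=2$; the step below only needs that $\tau$ is independent of $s$. A pure marginal $\Phi^{B_2R_2}$ forces $R_2$ to be in product with everything else, in particular with $Q_AR_1$. Hence $\tau_{Q_AR_1R_2m_00}=\tau_{Q_AR_1}\otimes I_{R_2}/2^m$. By \eqref{eq:qot_receiver_privacy2}, the $C=1$ conditional state satisfies $\Omega^{(1)}_{Q_AR_1R_2}=\tau_{Q_AR_1}\otimes I/2^m$, so $I(R_2;Q_AR_1)_{\Omega^{(1)}}=0$.

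\textbf{Step 2 (decoupling from Bob, via non-chosen state privacy).} By \eqref{eq:qot_global_destructive_privacy_general}, $\omega^{(1)}_{B_1E_1M_0FR_1R_2}=(\mathsf{Sim}_{1,n}\otimes\operatorname{id}_{R_1R_2})(\Phi^{A_1R_1}\otimes I_{R_2}/2^m)$, which is in product with $R_2$. Since $M_0$, $F$, and $\mathsf S$ are classical records held by Bob (with $\mathsf S\subseteq E_1$), conditioning on $(m_0,0,s)$ preserves the product form. This gives $I(R_2;B_1E_1)_{\Omega^{(1)}}=0$.

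\textbf{Step 3 (entropy count).} For the pure state $\Omega^{(1)}$, the complement of $Q_AR_1R_2$ is $B_1E_1W_S$. From Step~1, $I(R_2;B_1E_1W_S)=2H(R_2)-I(R_2;Q_AR_1)=2m$. The chain rule and Step~2 then give
\begin{equation*}
2m=I(R_2;B_1E_1)+I(R_2;W_S\mid B_1E_1)=I(R_2;W_S\mid B_1E_1)\le 2H(W_S)\le 2|S(z^n)|<2m,
\end{equation*}
which is a contradiction.

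\textbf{Main obstacle.} The main difficulty is the bookkeeping in Step~1. We must transfer the conditioning on $(m_0,0,s)$ between the two choices $C=1$ and $C=2$, and justify that the conditional states are well defined in both executions. The pure-state decomposition argument and \eqref{eq:qot_receiver_privacy2} are exactly designed to handle this.
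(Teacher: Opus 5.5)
Your proof is correct. It reaches the key identity $I(R_2;B_1E_1W^n)=2m$ in the $C=1$ branch by a different route from the paper.

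\textbf{Comparison with the paper.} The paper takes the two conditional states, for $C=1$ given $(m_0,0,s)$ and for $C=2$ given $(m_0,0,s')$, and treats both as pure. It observes via \eqref{eq:qot_receiver_privacy2} that they purify the same state on $Q_AR_1R_2$, applies Uhlmann's theorem to obtain an isometry $P_1\to P_2K$, and then sandwiches $2m=I(R_2;B_2)\le I(R_2;P_1)\le 2m$ using data processing. You instead use correctness in the $C=2$ branch to show that the common state $\tau$ on $Q_AR_1R_2$ factorizes as $\tau_{Q_AR_1}\otimes I_{R_2}/2^m$. You transfer this decoupling to the $C=1$ branch, and then invoke the pure-state identity $I(R_2;Q_AR_1)+I(R_2;B_1E_1W_S)=2H(R_2)$. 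Your route needs neither Uhlmann's theorem nor data processing. It also needs purity only of the $C=1$ conditional state. In the $C=2$ branch you use only that the $B_2R_2$ marginal is pure, which follows from correctness and the convex-decomposition argument alone. From there the two proofs coincide: decoupling from $B_1E_1$ via \eqref{eq:qot_global_destructive_privacy_general}, the chain rule, and $I(R_2;W_S\mid B_1E_1)\le 2H(W_S)\le 2|s|<2m$.

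\textbf{Correction to Step 1.} Your first plan, conditioning the $C=2$ execution on the same erased set $s$, does not work, so the fallback is required rather than optional. From \eqref{eq:qot_receiver_privacy1} and the $C$-independence of the erasure pattern you get positivity of $(m_0,0)$ and of $s$ separately, but not of their conjunction. That joint event can indeed have probability zero. For example, in Algorithm~\ref{alg:qot_achievability} a transcript with lists $(L_1,L_2)$ forces $L_1\subseteq\mathcal G$ when $C=1$ but $L_1\subseteq\mathcal B$ when $C=2$. Hence no erased set is compatible with the same $m_0$ under both choices. You must therefore do what your fallback describes, which is also what the paper does. Take some $s'$ with $\Prb[F=0,M_0=m_0,\mathsf S=s'\mid C=2]>0$, which exists by \eqref{eq:qot_receiver_privacy1}. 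Then rely on the fact that $\tau$ in \eqref{eq:qot_receiver_privacy2} does not depend on the erased set.
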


\begin{proof}
Fix $z^n$ such that
$
|S(z^n)|<m
$
and suppose, by contradiction, that
$
\Prb[F=0,Z^n=z^n\mid C=1]>0$ so that for 
$
s\triangleq S(z^n)
$, there exists $m_0$ such that
\begin{equation}
\Prb[
F=0,M_0=m_0,\mathsf S=s
\mid
C=1
]
>0.
\label{eq:bad_privacy_transcript_positive}
\end{equation}
By \eqref{eq:qot_receiver_privacy1} and \eqref{eq:bad_privacy_transcript_positive},
$
\Prb[
F=0,M_0=m_0
\mid
C=2
]
=
\Prb[
F=0,M_0=m_0
\mid
C=1
]>0$ so that there exists an erased set $s'$ such that
$
\Prb[
F=0,M_0=m_0,\mathsf S=s'
\mid
C=2
]
>0. $
Then, define the normalized conditional  states
\begin{align*}
\sigma_{m_0,s}^{(1)}
&\triangleq
\Omega^{(1)}_{Q_AB_1E_1R_1R_2W^n
\mid
F=0,M_0=m_0,\mathsf S=s}(\Psi),\\
\sigma_{m_0,s'}^{(2)}
&\triangleq
\Omega^{(2)}_{Q_AB_2E_2R_1R_2W^n
\mid
F=0,M_0=m_0,\mathsf S=s'}(\Psi).
\end{align*}

Define $X\triangleq Q_AR_1R_2$, $P_1\triangleq B_1E_1W^n$, $P_2\triangleq B_2E_2W^n$, and by  receiver's choice privacy
\eqref{eq:qot_receiver_privacy2},
\begin{equation}
\left(\sigma_{m_0,s}^{(1)}\right)_X
=
\left(\sigma_{m_0,s'}^{(2)}\right)_X.
\label{eq:branchwise_alice_state_equality}
\end{equation}

Under the maximally retained implementation, $(\sigma_{m_0,s}^{(1)})_{XP_1}$ and
$(\sigma_{m_0,s'}^{(2)})_{XP_2}$ are pure and  two
purifications of the same state on $X$ by \eqref{eq:branchwise_alice_state_equality}. Hence, by \cite[Th. 5.1.1]{wilde2013quantum}, there exists an isometry
$
U:P_1\to P_2K
$
such that, with
$
\widehat{\sigma}_{XP_2K}
\triangleq
(I_X\otimes U)
\left(\sigma_{m_0,s}^{(1)}\right)_{XP_1}
(I_X\otimes U^\dagger),
$
we have
$
\operatorname{Tr}_K[\widehat{\sigma}_{XP_2K}]
=
\left(\sigma_{m_0,s'}^{(2)}\right)_{XP_2}.$ Then,  we have
\begin{align*}
2m
&\overset{(a)}{=}
I(R_2;B_2)_{\sigma_{m_0,s'}^{(2)}}\\
&\overset{(b)}{=}
I(R_2;B_2)_{\widehat{\sigma} }\\
&\overset{(c)}{\leq}
I(R_2;P_2K)_{\widehat{\sigma} }\\
&\overset{(d)}{=}
I(R_2;P_1)_{\sigma_{m_0,s}^{(1)}}\\
&=
I(R_2;B_1E_1W^n)_{\sigma_{m_0,s}^{(1)}} \\
&\overset{(e)}{\leq}
2\log\dim R_2\\
&=
2m, \numberthis\label{eq:environment_information_equality}
\end{align*}
where
\begin{enumerate}[(a)]
    \item holds because by correctness \eqref{eq:qot_correctness} for $C=2$, 
$
\Phi_{2^m}^{R_2B_2}
=
\sum_{\widetilde m_0,\widetilde s}
\Prb[M_0=\widetilde m_0,\mathsf S=\widetilde s
\mid F=0,C=2]\,
\left(\sigma_{\widetilde m_0,\widetilde s}^{(2)}\right)_{R_2B_2},
$
and since $\Phi_{2^m}^{R_2B_2}$ is pure, every positive-weight
term in this mixture equals $\Phi_{2^m}^{R_2B_2}$
\cite[Sec.~2.1.2]{watrous2018theory}, in particular, $
\left(\sigma_{m_0,s'}^{(2)}\right)_{R_2B_2}
=
\Phi_{2^m}^{R_2B_2}
$ so that 
    $
    I(R_2;B_2)_{\sigma_{m_0,s'}^{(2)}}
    =
    H(R_2)+H(B_2)-H(R_2B_2)
    =
    2m
    $;
    \item holds because
    $
    \operatorname{Tr}_K\widehat{\sigma}_{ XP_2K}
    =
    (\sigma_{m_0,s'}^{(2)})_{XP_2}
    $,
    in particular,
    $
    \widehat{\sigma}_{R_2B_2}
    =
    (\sigma_{m_0,s'}^{(2)})_{R_2B_2}
    $;
    \item holds by the data processing inequality
    \cite[Theorem~11.9.4]{wilde2013quantum};
    \item holds by the invariance of quantum entropy under the isometry
    $U:P_1\to P_2K$
    \cite[Prop.~11.1.5]{wilde2013quantum}, since
    \begin{align*}
    \widehat{\sigma}_{R_2P_2K}
    =
    (I_{R_2}\otimes U)
    (\sigma_{m_0,s}^{(1)})_{R_2P_1}
    (I_{R_2}\otimes U^\dagger);
    \end{align*}
    \item follows from the mutual information dimension bound
    \cite[Eq. (11.73)]{wilde2013quantum}.
\end{enumerate}
Then, we have
\begin{align*}
I(R_2;W^n\mid B_1E_1)_{\sigma_{m_0,s}^{(1)}}
& = I(R_2;B_1E_1W^n)_{\sigma_{m_0,s}^{(1)}} - I(R_2;B_1E_1)_{\sigma_{m_0,s}^{(1)}}\\
&\overset{(a)}{=}
2m - I(R_2;B_1E_1)_{\sigma_{m_0,s}^{(1)}}\\
& \overset{(b)}{=} 2m, \numberthis
\label{eq:branchwise_environment_conditional_information}
\end{align*}
where
\begin{enumerate}[(a)]
    \item holds by \eqref{eq:environment_information_equality};
    \item holds because by non-chosen-state privacy \eqref{eq:qot_global_destructive_privacy_general},   
\begin{align*}
\omega^{(1)}_{B_1E_1M_0FR_1R_2}(\Psi)
&=
\left(
\mathsf{Sim}_{1,n}\otimes\operatorname{id}_{R_1R_2}
\right)
\left(
\Psi^{A_1R_1R_2}
\right)
\\
&=
\left(
\mathsf{Sim}_{1,n}\otimes\operatorname{id}_{R_1R_2}
\right)
\left(
\operatorname{Tr}_{A_2}\!\left[
\Phi_{2^m}^{A_1R_1}
\otimes
\Phi_{2^m}^{A_2R_2}
\right]
\right)
\\
&=
\left(
\mathsf{Sim}_{1,n}\otimes\operatorname{id}_{R_1R_2}
\right)
\left(
\Phi_{2^m}^{A_1R_1}
\otimes
\frac{I_{R_2}}{2^m}
\right)
\\
&=
\left[
\left(
\mathsf{Sim}_{1,n}\otimes\operatorname{id}_{R_1}
\right)
\left(
\Phi_{2^m}^{A_1R_1}
\right)
\right]
\otimes
\frac{I_{R_2}}{2^m}
\\
&=
\omega^{(1)}_{B_1E_1M_0FR_1}(\Psi)
\otimes
\frac{I_{R_2}}{2^m},
\end{align*}
and since $\mathsf S$ is a classical subregister of $E_1$, conditioning
this product state on $ F = 0$, $M_0=m_0$ and $\mathsf S=s$ preserves the
product form so that
$\left(\sigma_{m_0,s}^{(1)}\right)_{R_2B_1E_1}
=
\frac{I_{R_2}}{2^m}
\otimes
\left(\sigma_{m_0,s}^{(1)}\right)_{B_1E_1},$ 
and $
I(R_2;B_1E_1)_{\sigma_{m_0,s}^{(1)}}=0.$
\end{enumerate}

On the other hand,
\begin{align*}
I(R_2;W^n\mid B_1E_1)_{\sigma_{m_0,s}^{(1)}}
&\leq
2H(W^n)_{\sigma_{m_0,s}^{(1)}}
\\
&\leq
2\log
\dim\operatorname{supp}
\left[
\left(\sigma_{m_0,s}^{(1)}\right)_{W^n}
\right]\\
&\leq
2|s|, \numberthis \label{eq:branchwise_environment_conditional_informationb}
\end{align*}
where the first inequality holds by \cite[Eq. (11.102)]{wilde2013quantum}, the second inequality holds by \cite[Prop.~11.1.3]{wilde2013quantum}, the last inequality holds because, conditioned on $\mathsf S=s$, the environment systems at
unerased positions are fixed erasure flags, while the systems at erased
positions are qubits. 

Hence, 
\eqref{eq:branchwise_environment_conditional_information} and \eqref{eq:branchwise_environment_conditional_informationb}  give the contradiction $2m\leq 2|s|<2m$. Therefore,
 $
\Prb[F=0,Z^n=z^n\mid C=1]=0.
 $
\end{proof}
\subsection{Combining the unerased-channel-use and erased-channel-use bounds}
\label{secb3}

Choose the input
$\Psi^{A_1A_2R_1R_2}
=\Phi_{2^m}^{A_1R_1}\otimes\Phi_{2^m}^{A_2R_2}$
and fix $C=1$. Then,
\begin{align*}
1-\delta
&\overset{(a)}{\leq}
\Prb[F=0\mid C=1]\\
&\overset{(b)}{\leq}
\Prb[m\leq |S(Z^n)|\leq n-m\mid C=1]\\
&\overset{(c)}{=}
\Prb[m\leq\operatorname{Bin}(n,\epsilon)\leq n-m],
\end{align*}
where \begin{enumerate}[(a)]
    \item 
 follows from \eqref{eq:qot_abort}; \item follows from
Lemmas~\ref{lem:bad_erasure_no_success} and
\ref{lem:bad_privacy_erasure_no_success}, which apply to
this same input and require at least $m$ unerased and
$m$ erased channel uses upon success; \item
 holds because the number of erased channel uses
has distribution $\operatorname{Bin}(n,\epsilon)$
independently of $C$.
\end{enumerate}

\section{Proof of Theorem \ref{thm:malicious_finite}}
\label{sec:malicious_ach}

 We first recall interactive
hashing and  a property of Clifford unitaries in
Section~\ref{subsec:malicious_prelim}. We then present the protocol
and its analysis in
Sections~\ref{subsec:malicious_scheme} and
\ref{subsec:malicious_analysis}, respectively. 

\subsection{Preliminaries}
\label{subsec:malicious_prelim}

\subsubsection{Interactive hashing}

Interactive hashing takes $W\in\{0,1\}^h$ from Bob and no input
from Alice, and produces two public strings $W_0,W_1$.
We use the protocol guaranteed by
\cite[Lemma~II.5]{konig2012unconditional}, with the following properties:
\begin{enumerate}
\item If Bob is honest, $W_0\neq W_1$ and $W=W_D$ for some
$D\in\{0,1\}$.
\item If $W$ is uniform and independent of Alice's initial view,
$D$ is uniform and independent of her final view, even if she deviates.
\item If Alice is honest, every Bob strategy and every fixed
$\mathcal B\subseteq\{0,1\}^h$ satisfy
\begin{equation}
\Prb[W_0,W_1\in\mathcal B]
\leq16\frac{|\mathcal B|}{2^h}.
\label{eq:mal_ih}
\end{equation}
\end{enumerate}

\subsubsection{Clifford unitaries}

Let $\mathcal P_s=\{I,X,Y,Z\}^{\otimes s}$. A unitary $V$ is Clifford if, for every $P\in\mathcal P_s$,
$VPV^\dagger=\pm Q$ for some $Q\in\mathcal P_s$.

\begin{lemma}
\label{lem:mal_clifford}
Fix $0\leq u\leq s$ and choose $V$   
uniformly from the $s$-qubit
Clifford group modulo global phase. For $1\leq m\leq s$, define the channel
$\mathcal N_V(\rho)
=V\left(\rho\otimes\frac{I_{2^{s-m}}}{2^{s-m}}\right)V^\dagger$. Let $Y_Q=(Y_i)_{i\in Q}$ denote the output qubits of
$\mathcal N_V$, where $|Q|=s$.
Fix $\mathcal U\subseteq Q$ with $|\mathcal U|=u$. 
Except with probability at most $4^{m-u}$,
\begin{equation}
\operatorname{Tr}_{Y_{\mathcal U}}
\left[
(\mathcal N_V\otimes\operatorname{id}_R)(\rho^{XR})
\right]
=
\frac{I_{Y_{Q\setminus\mathcal U}}}{2^{s-u}}
\otimes\rho^R
\label{eq:mal_depolarizing}
\end{equation}
holds for every state $\rho^{XR}$, where $X$ consists of
$m$ qubits.
\end{lemma}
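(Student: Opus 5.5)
The plan is to characterize the failure event \emph{exactly} as a condition on $V$ alone, not on $\rho$, and then bound its probability by a union bound over Paulis. The first step is to expand an arbitrary $\rho^{XR}$ in the Pauli basis on the $m$ qubits of $X$:
\begin{equation*}
\rho^{XR}=\frac{1}{2^m}\sum_{P\in\mathcal P_m}P\otimes R_P,
\qquad
R_P\triangleq\operatorname{Tr}_X\!\left[(P\otimes I_R)\rho^{XR}\right],
\end{equation*}
so that $R_I=\rho^R$. Appending the maximally mixed ancilla gives
$\rho^{XR}\otimes I_{2^{s-m}}/2^{s-m}=2^{-s}\sum_{P}(P\otimes I^{\otimes(s-m)})\otimes R_P$.
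By the Clifford property, $V(P\otimes I^{\otimes(s-m)})V^\dagger=\pm Q_P$ for some $Q_P\in\mathcal P_s$. Since $V$ is unitary and $P\otimes I^{\otimes(s-m)}\neq I$ for $P\neq I$, we also have $Q_P\neq I$ for $P\neq I$.

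Next I would take the partial trace over $Y_{\mathcal U}$ term by term. For a tensor-product Pauli, $\operatorname{Tr}_{Y_{\mathcal U}}Q_P$ vanishes unless $Q_P$ acts as the identity on every qubit of $\mathcal U$. In that case it equals $2^u$ times the restriction $Q_P|_{Q\setminus\mathcal U}$. The $P=I$ term contributes exactly $I_{Y_{Q\setminus\mathcal U}}/2^{s-u}\otimes\rho^R$. Hence the identity \eqref{eq:mal_depolarizing} holds for \emph{every} $\rho^{XR}$ whenever the following event holds:
\begin{equation*}
\mathsf{Good}_V\triangleq\left\{\text{for every } P\in\mathcal P_m\setminus\{I\},\ Q_P \text{ acts nontrivially on at least one qubit of }\mathcal U\right\}.
\end{equation*}
This event depends only on $V$, which gives the uniformity over $\rho$ required by the lemma.

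It then remains to show $\Prb[\mathsf{Good}_V^c]\leq 4^{m-u}$. The key fact I would invoke is standard: the Clifford group modulo phases acts transitively on the $4^s-1$ non-identity Paulis, again modulo sign. This holds because the symplectic group acts transitively on the nonzero vectors of $\mathbb F_2^{2s}$. Consequently, for a fixed non-identity $P\otimes I^{\otimes(s-m)}$ and uniformly random $V$, the Pauli $Q_P$ is uniform over $\mathcal P_s\setminus\{I\}$. The number of non-identity Paulis supported only on $Q\setminus\mathcal U$ is $4^{s-u}-1$. Therefore
\begin{equation*}
\Prb\left[Q_P \text{ acts trivially on }\mathcal U\right]=\frac{4^{s-u}-1}{4^s-1}\leq 4^{-u}.
\end{equation*}
A union bound over the $4^m-1$ non-identity $P\in\mathcal P_m$ gives failure probability at most $(4^m-1)4^{-u}\leq 4^{m-u}$. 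For $u<m$ this bound exceeds $1$ and the claim is trivial.

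The main obstacle is the transitivity and uniformity claim for the random Clifford. It should be justified through the map from the Clifford group modulo Paulis and phases onto $\mathrm{Sp}(2s,\mathbb F_2)$, together with the transitivity of the symplectic group on nonzero vectors. With the uniform measure on the group, every orbit element is then equally likely. A secondary point to check is that the signs $\pm$ do not affect the argument. They only multiply the discarded terms, which vanish under the partial trace anyway, and the $P=I$ term has sign $+$.
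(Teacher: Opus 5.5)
Your proposal is correct and takes essentially the same route as the paper. You Pauli-expand $\rho^{XR}$, note that the identity holds for every $\rho$ once no conjugate $V(P\otimes I)V^\dagger$ with $P\neq I$ acts trivially on $\mathcal U$, and union-bound over the $4^m-1$ non-identity Paulis using uniformity of each conjugate over the $4^s-1$ non-identity Paulis, which gives $(4^m-1)\frac{4^{s-u}-1}{4^s-1}\le 4^{m-u}$. The only difference is that you derive that uniformity from the transitivity of $\mathrm{Sp}(2s,\mathbb F_2)$ on nonzero vectors, whereas the paper cites Webb's lemmas for the same fact.
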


\begin{proof}
We have
\begin{align}
\Prb_V\!\left[
\exists P\in\mathcal P_m\setminus\{I\}:
\operatorname{Tr}_{Y_{\mathcal U}}
[V(P\otimes I)V^\dagger]\neq0
\right]\notag
&\leq
\sum_{P\in\mathcal P_m\setminus\{I\}}
\Prb_V\!\left[
\operatorname{Tr}_{Y_{\mathcal U}}
[V(P\otimes I)V^\dagger]\neq0
\right]\notag\\
&=
(4^m-1)\frac{4^{s-u}-1}{4^s-1} \notag\\
&
\leq 4^{m-u},
\label{eq:mal_pauli_failure}
\end{align}
where the equality holds because $V(P\otimes I)V^\dagger$ is uniform
over the $4^s-1$ nonidentity $s$-qubit Paulis modulo phase
\cite[Lem.~2--3]{webb2015clifford}, and exactly
$4^{s-u}-1$ have nonzero partial trace over $Y_{\mathcal U}$,
namely those with identity factors on every qubit in
$\mathcal U$.

Outside the event in \eqref{eq:mal_pauli_failure}, i.e., suppose  $\forall P\in\mathcal P_m\setminus\{I\}:
\operatorname{Tr}_{Y_{\mathcal U}}
[V(P\otimes I)V^\dagger]=0$,  for any state $\rho^{XR}$, we have 
\begin{align*}
\operatorname{Tr}_{Y_{\mathcal U}}
\left[
(\mathcal N_V\otimes\operatorname{id}_R)(\rho^{XR})
\right] & = \operatorname{Tr}_{Y_{\mathcal U}}
\left[
(\mathcal N_V\otimes\operatorname{id}_R)
\left(
\frac{1}{2^m}\sum_{P\in\mathcal P_m}
P\otimes
\operatorname{Tr}_X\!\left[(P\otimes I_R)\rho^{XR}\right]
\right)
\right]\\
&=
\frac1{2^s}\sum_{P\in\mathcal P_m}
\operatorname{Tr}_{Y_{\mathcal U}}
[V(P\otimes I)V^\dagger]
\otimes
\operatorname{Tr}_X[(P\otimes I_R)\rho^{XR}]\\
&=
\frac1{2^s}\operatorname{Tr}_{Y_{\mathcal U}}
[I_{Y_Q}]\otimes\rho^R\\
&=
\frac{I_{Y_{Q\setminus\mathcal U}}}{2^{s-u}}
\otimes\rho^R.
\end{align*}
Since the event in \eqref{eq:mal_pauli_failure} depends only
on $V$ and the fixed set $\mathcal U$, this proves
\eqref{eq:mal_depolarizing}  for every
$\rho^{XR}$.
\end{proof}

\subsection{Coding Scheme}
\label{subsec:malicious_scheme}
\label{subsec:malicious_finite_proof}

Fix $0<\epsilon<1$, $n\in\mathbb N$, and $0<\delta,\nu<1$.
Define $r\triangleq\left\lceil
\sqrt{n\ln\!\left(256/\min\{\delta,\nu\}\right)}
\right\rceil$ and choose
\begin{equation}
m\triangleq \max\left\{0,\left\lfloor
n\min\{\epsilon,1-\epsilon\}
-16\sqrt{n\ln\!\left(256/\min\{\delta,\nu\}\right)}
\right\rfloor\right\}.
\label{eq:mal_parameter_choice}
\end{equation}
For each integer $k\in\{0,\ldots,n\}$ satisfying
$|k-n(1-\epsilon)|\leq r$, define
$t=\lceil(k+m+4r)/2\rceil$ and fix a public injection
$\mathsf{Enc}_k:\{0,1\}^{h_k}\to\binom{[t]}{k-t}$, where
$h_k=\lfloor\log_2\binom{t}{k-t}\rfloor$.
For an ordered list $L$ and $\mathcal T\subseteq[t]$,
$L^{\mathcal T}$ denotes the entries in the indicated positions.
The protocol is given in Algorithm~\ref{alg:malicious_qot}.

\begin{algorithm}[!htbp]
\caption{Achievability scheme for quantum oblivious transfer over
$\mathcal E_\epsilon$ (malicious setting)}
\label{alg:malicious_qot}
\begingroup
\setstretch{1}

\begin{algorithmic}[1]
\Require Parameters chosen in \eqref{eq:mal_parameter_choice}, with $m\geq1$,
an input state $\Psi^{A_1A_2R}$, and Bob's choice
$C\in\{1,2\}$.

\Statex A failed check, missing or invalid message, or
interrupted hashing computation causes an abort with $F=1$.

\State \textit{Entanglement generation.}
\begin{algsubstates}

    \State For each $i\in[n]$, Alice prepares
    $$\ket{\Phi^+}^{U_iV_i}
    =\frac{1}{\sqrt{2}}(\ket{00}+\ket{11}).$$

    \State Alice keeps $U_i$ and sends  $V_i$ through
    $\mathcal E_\epsilon$.

\end{algsubstates}

\State \textit{Abort test.}
\begin{algsubstates}

    \State Bob defines
    $$\mathcal G \triangleq \{i\in[n]:\text{the $i$-th output is not erased}\}$$
    and sends $k\triangleq|\mathcal G|$ to Alice.

    \State Alice aborts unless Bob's announced
$k\in\{0,\ldots,n\}$ satisfies
\begin{equation}
    n(1-\epsilon)-r\leq k\leq n(1-\epsilon)+r. \label{eq:testk}
\end{equation}
    Define $t\triangleq\lceil(k+m+4r)/2\rceil$ and $s\triangleq2t-k$.

\end{algsubstates}

\State \textit{List selection.}
\begin{algsubstates}

    \State Bob samples $W$ uniformly from $\{0,1\}^{h_k}$
    and sets $\mathcal T\triangleq\mathsf{Enc}_k(W)$.

    \State Bob chooses $L_C$ uniformly among ordered lists
    of $t$ distinct elements of $\mathcal G$.

    \State Bob places the $k-t$ unused unerased indices
$\mathcal G\setminus L_C$ in the positions of $L_{\bar C}$
indexed by $\mathcal T$, in a uniformly random order.

\State Bob uniformly selects $s$ distinct erased indices from
$[n]\setminus\mathcal G$ and places them in the remaining
positions of $L_{\bar C}$, in a uniformly random order.

    \State Bob sends $(L_1,L_2)$ to Alice, who aborts unless
each list contains $t$ distinct indices from $[n]$ and
$L_1\cap L_2=\varnothing$.

\end{algsubstates}

\State \textit{Interactive hashing.}
\begin{algsubstates}

    \State The parties run interactive hashing with Bob's
    input $W$, obtaining $W_0,W_1$.

    \State Bob determines $D\in\{0,1\}$ such that $W_D=W$
    and sends  to Alice $$ a\triangleq D\oplus(C-1).$$ 

    \State The parties set
    $\mathcal T_i\triangleq\mathsf{Enc}_k(W_i)$ for $i\in\{0,1\}$,
    $J_1\triangleq L_1^{\mathcal T_{1-a}}$, and
    $J_2\triangleq L_2^{\mathcal T_a}$.

    \State For $i\in\{1,2\}$, the parties form $Q_i$ by removing
the indices in $J_i$ from $L_i$ and keeping the remaining
indices in their original order.

\end{algsubstates}

\State \textit{Testing.}
\begin{algsubstates}

   \State Bob measures the received halves $V_i$ of Alice's
Bell pairs, for $i\in J_1\cup J_2$, in the computational
basis and sends all outcomes to Alice.

   \State Alice measures each $U_i$, $i\in J_1\cup J_2$, in the
computational basis and aborts if any outcome differs from
the one Bob reported for the same index.

\end{algsubstates}

\State \textit{Teleportation.}
\begin{algsubstates}

    \State Alice samples independent uniform $s$-qubit Clifford
    unitaries $V^{(1)},V^{(2)}$ and sends their descriptions to Bob.

    \State For each $i\in\{1,2\}$, Alice applies $(V^{(i)})^T$
to $U_{Q_i}$, whose qubits are ordered according to $Q_i$, and  then performs a Bell measurement on the $j$-th qubit
of $A_i$ and the qubit in position $j$ of this transformed
register, for $j\in [m]$, as in
Section~\ref{sec:teleportation}.
Let $T_i\in\{0,1\}^{2m}$ denote the outcome and
$P_{T_i}$ its associated Pauli operator.

    \State Alice sends to Bob $(T_1,T_2)$.

\end{algsubstates}

\State \textit{Decoding.}
\begin{algsubstates}

   \State Bob applies $(V^{(C)})^\dagger$ to the received
Bell-pair halves $V_i$, $i\in Q_C$, ordered according to
$Q_C$. He then applies $P_{T_C}^\dagger$ to the qubits
in positions $1,\ldots,m$ of the resulting register.
    \State Bob outputs these $m$ qubits as $B_C$
    and sends $F=0$ to Alice.

\end{algsubstates}

\end{algorithmic}
\endgroup
\end{algorithm}

\subsection{Coding Scheme Analysis}
\label{subsec:malicious_analysis}

We verify Conditions~1--4 and construct the two simulators
required by Condition~5 of Definition~\ref{def:qot_delta_achievabilityb}.

\subsubsection{Abort probability and correctness}

In Line 3 of Algorithm \ref{alg:malicious_qot}, the construction of $L_c$ requires $t\leq k$ unerased
positions, while completing $L_{\bar c}$ requires
$s\leq n-k$ erased positions. We verify these two inequalities
for every accepted honest count $k=|\mathcal G|$. \eqref{eq:mal_parameter_choice} and \eqref{eq:testk} give
$\min\{k,n-k\}-m
\geq n\min\{\epsilon,1-\epsilon\}-r-m
\geq16(r-1)-r
\geq12r$, where the last inequality holds because for $m\geq 1$,  
$16\sqrt{n\ln(256/\min\{\delta,\nu\})}<n/2$,
which, together with $\ln(256/\min\{\delta,\nu\})>1$,
gives $n>1024$ and $r>32$. Consequently, $k-m\geq12r$ ensures
$t=\lceil(k+m+4r)/2\rceil\leq k-4r<k$,
so Bob can choose $L_c$.
Similarly, $n-k-m\geq12r$ gives
$s=2t-k\leq  n-k$,
so he can complete $L_{\bar c}$.
 
Fix $c\in\{1,2\}$ and suppose an honest 
$k=|\mathcal G|$ is accepted in Line~2 of
Algorithm~\ref{alg:malicious_qot}.
We verify that     $J_1\cup J_2\subseteq\mathcal G$ so that  every Bell pair tested in Line~5 is shared by Alice
and Bob, and their computational-basis outcomes agree
and the test passes.   Line~3 gives
$L_c\subseteq\mathcal G$, so
$J_c\subseteq L_c\subseteq\mathcal G$. Then, Line~3 also places all indices in
$\mathcal G\setminus L_c$ at positions $\mathcal T$
of $L_{\bar c}$, so that by Line~4,  
we obtain
$J_{\bar c}=L_{\bar c}^{\mathcal T_D}=L_{\bar c}^{\mathcal T}
=\mathcal G\setminus L_c\subseteq\mathcal G$.

  Hence, the protocol aborts
exactly when the  test in Line~2 fails:
\begin{align}
p_1^{(c)}(\Psi)
&=1-\Prb[n(1-\epsilon)-r\leq|\mathcal G|
                         \leq n(1-\epsilon)+r]
\nonumber\\
&\leq2e^{-2r^2/n}\nonumber\\
&
\leq2\left(\frac{\min\{\delta,\nu\}}{256}\right)^2\nonumber\\
&
\leq\delta,
\nonumber
\end{align}
where the first inequality follows from Hoeffding's
inequality and
$|\mathcal G|\sim\operatorname{Bin}(n,1-\epsilon)$,
and the second inequality follows from the definition of $r$.

It remains to verify recovery of $A_c$.
Line~3 of Algorithm~\ref{alg:malicious_qot} gives
$L_c\subseteq\mathcal G$, and Line~4 sets
$Q_c=L_c\setminus J_c$ so that  $Q_c\subseteq\mathcal G$, and Alice and Bob share
all Bell pairs indexed by $Q_c$, with $|Q_c|=t-(k-t)=s$ by Line~4.
The choice $t=\lceil(k+m+4r)/2\rceil$ in Line~2
then gives $s=2t-k\geq m+4r\geq m$, providing enough
pairs for the $m$ Bell measurements in Line~6. Then, by
\cite[Ex.~3.7.12]{wilde2013quantum},
Alice's application of $(V^{(c)})^T$ in Line~6 is
equivalent to applying $V^{(c)}$ to the receiving halves.
In Line~7, Bob cancels this unitary and applies the
teleportation Pauli correction and 
Lemma~\ref{lem:teleportation_properties_ach} therefore gives
$\omega^{(c)}_{B_cR\mid F=0}(\Psi)
=(\operatorname{id}_{A_c\to B_c}\otimes\operatorname{id}_R)
(\Psi^{A_cR})$.

\subsubsection{Non-chosen state privacy}

We show that replacing the non-chosen input preserves
both Bob's conditional output states and the
probabilities with which these states are averaged.

\begin{lemma}
\label{lem:mal_input_independent_transcript}
In an honest execution of
Algorithm~\ref{alg:malicious_qot}, for each fixed
choice $c$, the joint distribution of
$(\mathcal G,M_0,F)$ is independent of the input
state $\Psi^{A_1A_2R}$.
\end{lemma}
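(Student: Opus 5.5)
We prove Lemma~\ref{lem:mal_input_independent_transcript} by tracking, line by line, the classical records produced by Algorithm~\ref{alg:malicious_qot}. At each step we show that their conditional distribution given the earlier records does not depend on $\Psi^{A_1A_2R}$. The public transcript is
\[
M_0=\bigl(k,(L_1,L_2),\text{hashing messages},a,\text{test outcomes},V^{(1)},V^{(2)},T_1,T_2\bigr),
\]
truncated at the first abort, with $F$ the final flag. The key observation is that the inputs $A_1A_2$ are touched only in Line~6, so everything up to that line is generated from the Bell pairs, the channel, and local randomness.

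\emph{Lines 1--5.} The erasure pattern $\mathcal G$ is i.i.d.\ and generated independently of $\Psi$, and $k=|\mathcal G|$ is a function of it. The abort test in Line~2 depends only on $k$. Bob's sampling of $W$, $L_1$ and $L_2$ in Line~3 uses only $\mathcal G$, $C$ and fresh randomness. The interactive hashing messages, $D$, $a$, $J_i$ and $Q_i$ in Line~4 are functions of $W$, $c$ and both parties' hashing randomness, none of which involve the inputs.

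The test in Line~5 measures $U_i$ and $V_i$ for $i\in J_1\cup J_2\subseteq\mathcal G$. The pairs $U_iV_i$ are in a product state with $A_1A_2R$ before Line~6, so each pair of outcomes is uniform and equal. The test therefore always passes, and the reported strings are uniform. Hence the joint law of $(\mathcal G,\text{records through Line 5})$ is a fixed, $\Psi$-independent distribution; this is the same argument as in Lemma~\ref{lem:honest_nonchosen_view}.

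\emph{Line 6.} Here the inputs first appear. The Cliffords $V^{(1)},V^{(2)}$ are sampled independently of everything else. Condition on all earlier records and on $V^{(1)},V^{(2)}$. After Alice applies $(V^{(i)})^T$ to $U_{Q_i}$, the $m$ qubits in positions $1,\dots,m$ of the transformed register remain maximally entangled with their partner systems. This holds because a unitary applied to one half of a maximally entangled state yields another maximally entangled state; by \cite[Ex.~3.7.12]{wilde2013quantum} the transpose can be moved to the other half. These partner systems are Bob's $V_{Q_c}$ and, for $Q_{\bar c}$, a mixture of Bob's systems and environment registers.

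Apply part~1 of Lemma~\ref{lem:teleportation_properties_ach} with $R'$ taken to be all other registers, first to $A_1$ and then to $A_2$. It gives that $(T_1,T_2)$ is uniform on $\{0,1\}^{4m}$, independently of $R'$ and hence of $\Psi$. The one point that needs care is that the Clifford acts on $s\ge m$ qubits, of which only $m$ are teleported. The reduced state of the $m$ measured qubits of $(V^{(i)})^T U_{Q_i}$ is still $I/2^m$, because $U_{Q_i}$ is maximally mixed; we check this explicitly, and it is the only potential obstacle. Uniformity of the outcome then follows from the Bell measurement applied to $A_i\otimes I/2^m$.

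\emph{Line 7.} Finally, $F=0$ whenever Line~2 passes, since Line~5 never fails in an honest run and Line~7 has no check. Multiplying the conditional laws from each step gives a joint distribution of $(\mathcal G,M_0,F)$ that does not depend on $\Psi$. On abort, the transcript depends only on $\mathcal G$ through $k$, so the aborted records are also input-independent.
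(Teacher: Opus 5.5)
Your proposal is correct and follows essentially the same route as the paper: every record before the Bell measurements in Line~6 is generated without touching the inputs, and the Bell-outcome probability equals $\operatorname{Tr}[(\Pi_{t_1}\otimes\Pi_{t_2})(\Psi^{A_1A_2}\otimes I/2^{2m})]=4^{-2m}$ because the Clifford preserves the maximally mixed state of Alice's halves $U_{Q_1}U_{Q_2}$, which the tests on the disjoint set $J_1\cup J_2$ leave untouched. One minor slip, which does not affect the argument since you take $R'$ to be all other registers: in an honest run $Q_{\bar c}$ consists only of erased positions, so its partner systems are entirely environment registers, not a mixture that includes Bob's systems.
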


\begin{proof}
Let $M_{\mathrm{pre}}$ consist of $k$ from Line~2,
$(L_1,L_2)$ from Line~3, the interactive-hashing
messages and bit $a$ from Line~4, the announced test
outcomes from Line~5, and the Clifford descriptions
from Line~6, so that  $M_0=(M_{\mathrm{pre}},T_1,T_2)$
on success.
With $\Pi_{t_i}$   the Bell projector
corresponding to outcome $t_i$ in Line~6, and  $m_0=(\mu,t_1,t_2)$, we have
\begin{align*}
&\Prb[
\mathcal G=\mathcal T,M_0=m_0,F=0\mid C=c]
\\
&=
\Prb[
\mathcal G=\mathcal T,M_{\mathrm{pre}}=\mu,F=0
\mid C=c]
\times
\Prb[
T_1=t_1,T_2=t_2
\mid \mathcal G=\mathcal T,M_{\mathrm{pre}}=\mu,F=0,C=c]
\\
&\overset{(a)}{=}
\Prb[
\mathcal G=\mathcal T,M_{\mathrm{pre}}=\mu,F=0
\mid C=c]
\times
\operatorname{Tr}\!\left[
(\Pi_{t_1}\otimes\Pi_{t_2})
\left(
\Psi^{A_1A_2}\otimes
\frac{I_{2^{2m}}}{2^{2m}}
\right)
\right]
\\
&\overset{(b)}{=}
4^{-2m}
\Prb[
\mathcal G=\mathcal T,M_{\mathrm{pre}}=\mu,F=0
\mid C=c], \numberthis \label{eq:sym2}
\end{align*}
where
\begin{enumerate}[(a)]
\item holds by identifying the state entering the Bell
measurements in Line~6. Let $H_i$ denote the $m$ qubits of
$U_{Q_i}$ measured with $A_i$ in Line~6.
Conditioned on $\mathcal G=\mathcal T$,
$M_{\mathrm{pre}}=\mu$, $F=0$, and $C=c$, this state satisfies
\begin{align*}
\rho^{A_1A_2H_1H_2}
&=
\Psi^{A_1A_2}\otimes
\bigotimes_{i=1}^{2}
\operatorname{Tr}_{U_{Q_i}\setminus H_i}
\left[
(V^{(i)})^T
\frac{I_{U_{Q_i}}}{2^s}
(V^{(i)})^*
\right]
\\
&=
\Psi^{A_1A_2}\otimes
\bigotimes_{i=1}^{2}
\operatorname{Tr}_{U_{Q_i}\setminus H_i}
\left[\frac{I_{U_{Q_i}}}{2^s}\right]
\\
&=
\Psi^{A_1A_2}\otimes
\frac{I_{H_1H_2}}{2^{2m}},
\end{align*}
where the first equality holds because Line~1 prepares independent
Bell pairs, independently of $\Psi$, and the tests
in Line~5 act on   $J_1\cup J_2$ with $(J_1\cup J_2)\cap(Q_1\cup Q_2)
=\varnothing$, so that
 even after conditioning on the test outcomes,
Alice's registers $U_{Q_1}U_{Q_2}$  have state
$I_{U_{Q_1}}/2^s\otimes I_{U_{Q_2}}/2^s$ before the Clifford operation;

\item holds because  $\operatorname{Tr}_{H_i}[\Pi_{t_i}]=\frac{I_{A_i}}{2^m}$ and
$\operatorname{Tr}[\Psi^{A_1A_2}]=1$.
\end{enumerate}
Lines~1--5 and the Clifford sampling in Line~6 use
neither input, and, by correctness, honest success is
determined in Line~2,
thus, \eqref{eq:sym2} gives joint probabilities
independent of $\Psi$ on success.
Since an honest abort also occurs in Line~2 before
either input is used, the full distribution of
$(\mathcal G,M_0,F)$ is independent of $\Psi$.
\end{proof}

\begin{lemma}
\label{lem:mal_conditional_nonchosen_privacy}
Fix $c\in\{1,2\}$.
In an honest execution of
Algorithm~\ref{alg:malicious_qot}, conditioned on
$M_0=m_0,F=0,\mathsf S=[n]\setminus\mathcal G$, the state on $B_cE_cR$
has the form
$\Psi^{B_cR}\otimes\xi_{\mathcal G,m_0}^{E_c}$,
where $\xi_{\mathcal G,m_0}^{E_c}$ is normalized
and independent of $\Psi$.
\end{lemma}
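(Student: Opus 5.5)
The plan is to mirror the proof of Lemma~\ref{lem:honest_nonchosen_view}, with a Clifford twirl added on the non-chosen side.

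\emph{Where the positions are held.} Fix $c$ and condition on $\mathcal G$ and $m_0=(\mu,t_1,t_2)$ with $F=0$. As shown in the correctness analysis, honest execution gives $Q_c\subseteq\mathcal G$ and $J_{\bar c}=\mathcal G\setminus L_c$. Hence $Q_{\bar c}=L_{\bar c}\setminus J_{\bar c}$ consists exactly of the $s$ erased indices placed in $L_{\bar c}$, so $Q_{\bar c}\subseteq[n]\setminus\mathcal G$. Consequently, immediately before Line~6:
\begin{itemize}
\item the pairs $U_{Q_c}V_{Q_c}$ are shared Bell pairs with Bob;
\item the pairs $U_{Q_{\bar c}}W_{Q_{\bar c}}$ are Bell pairs with the environment.
\end{itemize}
The tested pairs in $J_1\cup J_2$ have been measured and are disjoint from $Q_1\cup Q_2$. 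By the same argument as in step (a) of Lemma~\ref{lem:mal_input_independent_transcript}, the conditional state on $A_1A_2R$ and the untested pairs is therefore $\Psi^{A_1A_2R}\otimes\Phi_{2^s}^{U_{Q_c}V_{Q_c}}\otimes\Phi_{2^s}^{U_{Q_{\bar c}}W_{Q_{\bar c}}}$. Everything else Bob holds (the measured test outcomes, the classical records including $\mathsf S$, $W$, $D$, the Clifford descriptions and the erasure flags) has a distribution determined by Lines~1--5. It is independent of $\Psi$ and in product with the above.

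\emph{Chosen side.} By the transpose trick \cite[Ex.~3.7.12]{wilde2013quantum}, $(V^{(c)})^T$ on $U_{Q_c}$ equals $V^{(c)}$ on $V_{Q_c}$. Lemma~\ref{lem:teleportation_properties_ach} then applies to the $m$ measured positions, with the remaining $s-m$ pairs absorbed into the reference. After Bob's correction in Line~7, $B_c$ holds $A_c$ with all its correlations to $A_{\bar c}R$. The leftover $s-m$ qubits of Bob's register are maximally entangled with Alice's post-measurement registers, and their marginal is independent of $\Psi$.

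\emph{Non-chosen side.} The teleportation of $A_{\bar c}$ uses only $U_{Q_{\bar c}}$ and environment halves. Lemma~\ref{lem:teleportation_properties_ach} therefore transfers $A_{\bar c}$, up to the Pauli $P_{t_{\bar c}}$ and the Clifford, into $W_{Q_{\bar c}}$, which is outside $B_cE_c$. Tracing the environment removes $A_{\bar c}$ together with $W_{Q_{\bar c}}$. As in step (b) of Lemma~\ref{lem:honest_nonchosen_view}, the partial trace is invariant under the local unitaries $P_{t_{\bar c}}$ and $V^{(\bar c)}$ on $W$. The registers $Q_AW$ are traced out, so what remains on $B_cE_cR$ is $\Psi^{B_cR}\otimes\xi^{E_c}_{\mathcal G,m_0}$.

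\emph{Main obstacle: the product form.} The step needing care is showing that $E_c$ decouples from $B_cR$ exactly. This is harder than in the honest-but-curious case because $E_c$ contains the $s-m$ unused Clifford-rotated pairs and the classical records. I would handle it by purity:
\begin{itemize}
\item The $B_cR$ marginal equals the pure state $\Psi^{B_cR}$.
\item Any extension of a pure marginal is a product.
\item Conditioning on the classical $\mathsf S$ and $m_0$, which are subregisters or public, preserves this product structure, as in the converse.
\end{itemize}
It remains to check that $\xi^{E_c}$ is independent of $\Psi$. Bob's remaining systems are generated by Lines~1--5 and by the fixed teleportation structure, neither of which acts on $\Psi$; the outcome probabilities $4^{-2m}$ are input-independent by Lemma~\ref{lem:mal_input_independent_transcript}.
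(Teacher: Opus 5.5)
\paragraph{Gap in the decoupling step.} Your positional analysis is correct: $Q_c\subseteq\mathcal G$, $Q_{\bar c}$ is exactly the $s$ erased indices of $L_{\bar c}$, and the tests act outside $Q_1\cup Q_2$. Your chosen- and non-chosen-side teleportation steps also follow the paper's steps (a)--(b). The gap is in the step you single out as the main obstacle. You derive the product form from ``the $B_cR$ marginal equals the pure state $\Psi^{B_cR}$.'' But $\Psi^{A_cR}=\operatorname{Tr}_{A_{\bar c}}[\Psi^{A_1A_2R}]$ is mixed whenever $A_{\bar c}$ is entangled with $A_cR$, for instance when $A_1$ and $A_2$ are maximally entangled with each other. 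Your own chosen-side step shows this: $B_c$ carries $A_c$ together with its correlations to $A_{\bar c}$, and $A_{\bar c}$ ends up in $W_{Q_{\bar c}}$. So ``any extension of a pure marginal is a product'' does not apply. The converse could use that trick only because it fixed special inputs ($\Phi^{A_1R_1}_{2^m}\otimes\cdots$) for which the relevant marginal is pure. Here $\Psi$ is arbitrary, and the entangled-input case is precisely what the lemma must cover. Likewise, knowing that the leftover $s-m$ qubits have a $\Psi$-independent marginal does not by itself make them uncorrelated with $B_cR$.

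\paragraph{How to repair it.} Obtain the product form directly from the tensor structure you already wrote down, which is what the paper does in step (a).
\begin{enumerate}
\item Before Line~6, conditioned on $(\mathcal G,m_0)$, the state is $\Psi^{A_1A_2R}\otimes\Phi^{U_{Q_c}V_{Q_c}}_{2^s}\otimes\Phi^{U_{Q_{\bar c}}W_{Q_{\bar c}}}_{2^s}$, tensored with classical records. These records are fixed by the conditioning or generated independently of $\Psi$.
\item The two teleportations act on disjoint blocks. Move $(V^{(c)})^T$ to Bob's side by the transpose identity.
\item After Bob applies $(V^{(c)})^\dagger$ and $P_{t_c}^\dagger$, his first $m$ qubits hold $A_c$ jointly with $A_{\bar c}R$. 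His remaining $s-m$ qubits form pure Bell pairs with Alice's unmeasured qubits of $U_{Q_c}$. They are therefore decoupled from everything else and become $I/2^{s-m}$ once $Q_A$ is traced out.
\item Tracing out $W_{Q_{\bar c}}$ and using trace preservation of $\mathcal N_{V^{(\bar c)}}$ removes $A_{\bar c}$.
\end{enumerate}
This yields $\Psi^{B_cR}\otimes\xi^{E_c}_{\mathcal G,m_0}$ without any purity assumption.
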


\begin{proof}
With the lists $Q_1,Q_2$, Clifford unitaries
$V^{(1)},V^{(2)}$, and Bell outcomes $T_1,T_2$
determined by $m_0$, we have
\begin{align}
\omega^{(c)}_{B_cE_cR
\mid M_0=m_0,\,F=0,\,\mathsf S=[n]\setminus\mathcal G}(\Psi)
\nonumber
&\overset{(a)}{=}
\operatorname{Tr}_{W_{Q_{\bar c}}}
\left[
\left(
\operatorname{id}_{A_c\to B_c}
\otimes\mathcal N_{V^{(\bar c)}}
\otimes\operatorname{id}_R
\right)
\left(
P_{T_{\bar c}}\Psi^{A_cA_{\bar c}R}
P_{T_{\bar c}}^\dagger
\right)
\right]
\otimes\xi_{\mathcal G,m_0}^{E_c}
\nonumber\\
&\overset{(b)}{=}
\left(
\operatorname{id}_{A_c\to B_c}
\otimes\operatorname{id}_R
\right)(\Psi^{A_cR})
\otimes\xi_{\mathcal G,m_0}^{E_c}
\nonumber\\
&=
\Psi^{B_cR}\otimes\xi_{\mathcal G,m_0}^{E_c},\nonumber
\end{align}
where
\begin{enumerate}[(a)]
\item holds because Lines~3--4 place the receiving
halves at $Q_c$ with Bob and those at $Q_{\bar c}$
in $W_{Q_{\bar c}}$, while Line~5 tests only the
removed positions $J_1\cup J_2$, leaving the pairs
at $Q_1\cup Q_2$ unchanged.
Thus, for each input $A_i$, with $i=c$ for the
chosen input and $i=\bar c$ for the non-chosen input,
Line~6 induces the following channel to the
receiving qubits at $Q_i$:
\begin{align*}
\rho^{A_i}
\longmapsto
V^{(i)}
\left(
P_{T_i}\rho^{A_i}P_{T_i}^{\dagger}
\otimes\frac{I_{2^{s-m}}}{2^{s-m}}
\right)
(V^{(i)})^{\dagger}
=
\mathcal N_{V^{(i)}}
\left(P_{T_i}\rho^{A_i}P_{T_i}^{\dagger}\right),
\end{align*}
where \eqref{eq:tele} transfers the input, up to
$P_{T_i}$, to the first $m$ receiving qubits,
tracing out Alice's $s-m$ unmeasured halves gives
the maximally mixed factor, and the transpose identity
\cite[Ex.~3.7.12]{wilde2013quantum}
shows that applying $(V^{(i)})^T$ to Alice's halves
of the $s$ Bell pairs produces the same joint state
as applying $V^{(i)}$ to their receiving halves.

Then, for the chosen block, Line~7 removes the Clifford
and Pauli, leaving $A_c$ in $B_c$ and $s-m$
independent maximally mixed qubits in $E_c$.
For the non-chosen block, the channel output remains
in $W_{Q_{\bar c}}$ and is traced out.
Consequently, the remaining $s-m$ receiving qubits at $Q_c$,
which Bob retains in $E_c$ after Line~7, his other
systems from Lines~1--5, and his retained records
fixed by $m_0$ form
$\xi_{\mathcal G,m_0}^{E_c}$, independently of $\Psi$;

\item holds because trace preservation of
$\mathcal N_{V^{(\bar c)}}$ reduces the trace over
$W_{Q_{\bar c}}$ to
$\operatorname{Tr}_{A_{\bar c}}
[P_{T_{\bar c}}\Psi^{A_cA_{\bar c}R}
P_{T_{\bar c}}^\dagger]=\Psi^{A_cR}$.
\end{enumerate}
\end{proof}

We now combine the two lemmas to prove
\eqref{eq:qot_global_destructive_privacy_general}.
For every state $\rho^{A_c}$, define
$\mathsf{Sim}_{c,n}(\rho^{A_c})
\triangleq
\omega^{(c)}_{B_cE_cM_0F}
\left(
\rho^{A_c}\otimes
\ket{0^m}\!\bra{0^m}^{A_{\bar c}}
\right)$,
which appends the fixed non-chosen input,
executes both honest parties and the erasure channel
in Algorithm~\ref{alg:malicious_qot} with choice $c$,
and returns $B_cE_cM_0F$.
$\mathsf{Sim}_{c,n}$ is CPTP and independent
of $\Psi$ because appending a fixed state, executing the protocol, and tracing out the other registers
are CPTP operations  independent of $\Psi$. Then, we have
\begin{align}
\omega^{(c)}_{B_cE_cM_0FR}
\left(\Psi^{A_1A_2R}\right)
&\overset{(a)}{=}
\omega^{(c)}_{B_cE_cM_0FR}
\left(
\Psi^{A_cR}\otimes
\ket{0^m}\!\bra{0^m}^{A_{\bar c}}
\right)
\nonumber\\
&\overset{(b)}{=}
(\mathsf{Sim}_{c,n}\otimes\operatorname{id}_R)
\left(\Psi^{A_cR}\right),\nonumber
\end{align}
where
\begin{enumerate}[(a)]
\item holds by comparing two honest executions of
Algorithm~\ref{alg:malicious_qot}, both with choice $c$:
one with input $\Psi^{A_1A_2R}$ and the other with input
$\Psi^{A_cR}\otimes
\ket{0^m}\!\bra{0^m}^{A_{\bar c}}$. By Lemma~\ref{lem:mal_input_independent_transcript},
these inputs give the same distribution of
$(\mathcal G,M_0,F)$.
It therefore suffices to show that, conditioned on
each value of these registers of positive probability,
the state on $B_cE_cR$ is identical for both inputs. For $F=0$, the conditional states on $B_cE_cR$ are
identical by Lemma~\ref{lem:mal_conditional_nonchosen_privacy},
since both inputs have the same marginal $\Psi^{A_cR}$.

For $F=1$, the conditional states on $B_cE_cR$ are
identical because the protocol terminates in Line~2
before using either input, leaving the same state
on $B_cE_c$ in tensor product with  $\Psi^R$;
\item holds by the definition of $\mathsf{Sim}_{c,n}$.
\end{enumerate}
\subsubsection{Receiver's choice privacy}

We first establish the symmetry of the list announcement in Line 3 and
interactive hashing. 

\begin{lemma}
\label{lem:mal_list_hash_symmetry}
Suppose that Bob is honest, while Alice may deviate.
For every accepted $k$, $w\in\{0,1\}^{h_k}$, and
$c\in\{1,2\}$, conditioned on
$|\mathcal G|=k,W=w,C=c$, the announcement
$(L_1,L_2)$ is uniform over pairs of disjoint ordered
$t$-lists, each with probability $1/(n)_{2t}$,
where $(n)_j=n!/(n-j)!$.

For an interactive-hashing transcript $\gamma$
with outputs $w_0,w_1$, let $q_\gamma(w)$ be the probability of obtaining
$\gamma$ when Bob's input is $w$ and Alice sends
the messages prescribed by $\gamma$.
Then, $q_\gamma(w_0)=q_\gamma(w_1)$. Moreover, for any fixed quantum strategy of Alice,
the subnormalized map associated with $\gamma$
during interactive hashing in Line~4 of
Algorithm~\ref{alg:malicious_qot}, from her registers
before hashing to all registers she retains afterward,
is identical for Bob's inputs $w_0$ and $w_1$.
\end{lemma}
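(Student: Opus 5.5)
The plan has two parts: the uniformity of the list announcement, and the symmetry of the interactive-hashing transcript.

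\emph{Uniformity of $(L_1,L_2)$.} Condition on $|\mathcal G|=k$, $W=w$ and $C=c$; then $\mathcal T=\mathsf{Enc}_k(w)$ is fixed. The erasure pattern is i.i.d., and $W$ and $C$ are sampled independently of it. Hence, given $|\mathcal G|=k$, the set $\mathcal G$ is uniform over $\binom{[n]}{k}$, and it is independent of $(W,C)$ because Alice's deviations cannot affect the channel.

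Line~3 then builds the two lists as follows. It draws an ordered $t$-list $L_c$ uniformly from $\mathcal G$. It places the remaining $k-t$ elements of $\mathcal G$ in random order at positions $\mathcal T$ of $L_{\bar c}$. It places $s$ uniformly chosen erased indices in random order at the other positions of $L_{\bar c}$.

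I will fix a target pair $(\ell_1,\ell_2)$ of disjoint ordered $t$-lists. This pair is produced exactly when two things hold:
\begin{itemize}
\item $\mathcal G=\ell_c\cup\ell_{\bar c}^{\mathcal T}$, a set of size $t+(k-t)=k$;
\item each random choice matches the target.
\end{itemize}
The probability of this is
\[
\binom{n}{k}^{-1}\frac{1}{(k)_t}\frac{1}{(k-t)!}\frac{1}{\binom{n-k}{s}s!}.
\]
This collapses to $1/\bigl(\binom nk k!\,(n-k)_s\bigr)=1/(n)_{k+s}=1/(n)_{2t}$, using $k+s=2t$. The value does not depend on $w$ or $c$, which gives the first claim.

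\emph{Transcript symmetry.} Here I will use the structure of the interactive hashing protocol of \cite[Lemma~II.5]{konig2012unconditional}. It is the Naor--Ostrovsky--Venkatesan--Yung protocol: in round $j$ Alice sends a vector $g_j$ (linearly independent of earlier ones), and Bob replies with the bit $b_j=\langle g_j,W\rangle$. After $h-1$ rounds, $W_0$ and $W_1$ are the two solutions of the resulting linear system.

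Honest Bob's replies are deterministic functions of $W$. So if a transcript $\gamma$ with outputs $w_0,w_1$ fixes Alice's messages, then $q_\gamma(w)$ is the indicator that $w$ satisfies all the constraints in $\gamma$. Both $w_0$ and $w_1$ satisfy them, so $q_\gamma(w_0)=q_\gamma(w_1)=1$.

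For a quantum strategy, Alice's behaviour in each round is an instrument: a measurement whose outcome is the classical message $g_j$, followed by an update of her registers. Bob's reply is a classical message determined by $w$, and it is fed into her next operation. The subnormalized map for $\gamma$ is therefore a composition of Alice's instrument elements for her messages $g_j$, interleaved with conditioning on Bob's bits $b_j$. Each such factor is multiplied by $q$-type indicators that depend only on whether $w$ reproduces the bits in $\gamma$. Since $w_0$ and $w_1$ reproduce the same bits, the composed maps coincide.

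\emph{Main obstacle.} I expect the hardest part to be the quantum-strategy claim. It needs an explicit model of interactive hashing in which Bob sends only classical messages that are deterministic in $w$. If the lemma from \cite{konig2012unconditional} is used only as a black box, I would argue from its property~2, but an explicit Naor--Ostrovsky--Venkatesan--Yung description is the cleanest route.
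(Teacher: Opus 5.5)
Your proposal is correct. Your list-uniformity computation is the paper's: the paper writes the last factor as $1/(n-k)_{2t-k}$, which equals your $1/\bigl(\binom{n-k}{s}s!\bigr)$. Your handling of a quantum Alice also matches the paper: her instrument elements for the messages in $\gamma$ form a map $\mathcal A_\gamma$ that does not depend on $w$, and Bob's replies contribute only the scalar $q_\gamma(w)$.

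The real difference is how you obtain $q_\gamma(w_0)=q_\gamma(w_1)$. You open the black box: in NOVY-type hashing, honest Bob's replies are deterministic in $w$ and Alice's messages, so $q_\gamma(w)$ is a consistency indicator equal to $1$ at both outputs.

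The paper never specifies the hashing protocol and uses Property~2 alone. Take $W$ uniform and let Alice play the messages of $\gamma$. Since $w_0\neq w_1$, on $\{\Gamma=\gamma\}$ we have $\{W=w_b\}=\{D=b\}$, hence $2^{-h_k}q_\gamma(w_b)=\Prb[\Gamma=\gamma,D=b]$. Independence of $D$ from Alice's view, which contains $\Gamma$, makes this the same for $b=0,1$.

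Your route gives the sharper, more transparent conclusion $q_\gamma=1$. However, it needs an instantiation with deterministic replies whose outputs are exactly the consistent strings. The claim that \cite[Lemma~II.5]{konig2012unconditional} is literally the Naor--Ostrovsky--Venkatesan--Yung protocol is an extra fact you would have to justify.

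The paper's route is instantiation-free. It covers any hashing protocol satisfying Properties~1--2, including ones where Bob randomizes his replies, and that is exactly how the paper uses the cited lemma. The short Bayes computation above is what fills the black-box case you flagged but did not carry out.
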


\begin{proof}
We first consider the list announcement in Line~3:
\begin{align}
\Prb[(L_1,L_2)=(l_1,l_2)
\mid |\mathcal G|=k,W=w,C=c]
&=
\frac{1}{\binom nk}
\frac{1}{(k)_t}
\frac{1}{(k-t)!}
\frac{1}{(n-k)_{2t-k}}
\nonumber\\
&=\frac{1}{(n)_{2t}},
\label{eq:mal_list_uniform}
\end{align}
where the first equality holds because announcing $(l_1,l_2)$
requires 
$\mathcal G=l_c\cup l_{\bar c}^{\mathsf{Enc}_k(w)}$, which has probability $1/\binom nk$, since
$\mathcal G$ is uniform given $k$, even for an
arbitrary state transmitted by Alice in Line~1.
Conditioned on this set, the remaining three factors
are the probabilities of Line~3's uniform choices:
the ordered list $L_c$, the ordering of the remaining
$k-t$ unerased positions, and the ordered selection
of $2t-k$ erased positions, respectively.

Let Bob sample $W$ uniformly, while Alice sends the messages specified
by $\gamma$.
Let $\Gamma$ denote the full sequence of messages
 exchanged during interactive hashing. Then, we have
\begin{align}
2^{-h_k}q_\gamma(w_0)
&=
\Prb[W=w_0]\Prb[\Gamma=\gamma\mid W=w_0]
\nonumber\\
&=\Prb[\Gamma=\gamma,W=w_0]
\nonumber\\
&=\Prb[\Gamma=\gamma,D=0]
\nonumber\\
&=\Prb[\Gamma=\gamma,D=1]
\nonumber\\
&=2^{-h_k}q_\gamma(w_1),
\label{eq:mal_hash_equal_weights}
\end{align}
where the first equality holds by uniformity of $W$ and definition of $q_\gamma(w_0)$, and the penultimate  equality holds because, by Property~2 of interactive hashing, $D$ is uniform and independent of Alice's view,
which includes $\Gamma$.

For a fixed quantum strategy of Alice, let
$\mathcal A_\gamma$ denote the completely positive,
trace-nonincreasing map describing her operations
in Line~4 when she receives and sends the messages
in $\gamma$, with all her retained registers included
in the output.
This map is independent of Bob's input $w$.
Bob's replies contribute the probability $q_\gamma(w)$,
so the subnormalized map for transcript $\gamma$ is
$q_\gamma(w)\mathcal A_\gamma$, and  \eqref{eq:mal_hash_equal_weights} thus gives
$q_\gamma(w_0)\mathcal A_\gamma
=q_\gamma(w_1)\mathcal A_\gamma$.
\end{proof}

We now show \eqref{eq:qot_receiver_privacy2}. First consider $F=0$.
Let $H_i$ denote the $m$ qubits of $U_{Q_i}$
measured with $A_i$ in Line~6, and set
$\Pi=\Pi_{t_1}\otimes\Pi_{t_2}\otimes I_R$.
Let $\rho_{\mathrm{pre}}$ denote the state on
$A_1A_2RH_1H_2$ immediately before these measurements,
conditioned on $C=c$, $\mathsf S=\mathcal S$, and
the messages in $m_0$ announced before the measurements.
Write
$\tau_{Q_A\setminus A_1A_2H_1H_2,m_0}$
for the normalized conditional state of Alice's
other retained registers, including the fixed
Bell-outcome records. Whenever the conditioning event has positive
probability,
\begin{align}
\omega^{(c)}_{Q_AR
\mid M_0=m_0,F=0,\mathsf S=\mathcal S}(\Psi)
\nonumber
&\overset{(a)}{=}
\frac{\Pi\rho_{\mathrm{pre}}\Pi}
{\operatorname{Tr}[\Pi\rho_{\mathrm{pre}}]}
\otimes
\tau_{Q_A\setminus A_1A_2H_1H_2,m_0}
\nonumber\\
&\overset{(b)}{=}
\frac{
\Pi\left[
\Psi^{A_1A_2R}\otimes
\displaystyle\bigotimes_{i=1}^{2}
\operatorname{Tr}_{U_{Q_i}\setminus H_i}
\left[
(V^{(i)})^T
\frac{I_{U_{Q_i}}}{2^s}
(V^{(i)})^*
\right]
\right]\Pi
}{
\operatorname{Tr}[\Pi\rho_{\mathrm{pre}}]
}
\otimes
\tau_{Q_A\setminus A_1A_2H_1H_2,m_0}
\nonumber\\
&=
\frac{
\Pi\left[
\Psi^{A_1A_2R}\otimes
\frac{I_{H_1H_2}}{2^{2m}}
\right]\Pi
}{
\operatorname{Tr}[\Pi\rho_{\mathrm{pre}}]
}
\otimes
\tau_{Q_A\setminus A_1A_2H_1H_2,m_0}
\nonumber\\
&\overset{(c)}{=}
\frac{
4^{-2m}
(\Pi_{t_1}\otimes\Pi_{t_2})\otimes\Psi^R
}{
4^{-2m}
}
\otimes
\tau_{Q_A\setminus A_1A_2H_1H_2,m_0}
\nonumber\\
&\overset{(d)}{=}
\tau_{Q_A,m_0,0}\otimes\Psi^R
\nonumber\\
&\triangleq\tau_{Q_ARm_0\,0}(\Psi),
\nonumber
\end{align}
where
\begin{enumerate}[(a)]
\item holds because, with $\rho$ the state immediately
after Line~1, we have
\begin{equation}
\rho^{A_1A_2RU^n\mid C=c,\mathsf S=\mathcal S}
=
\Psi^{A_1A_2R}\otimes\frac{I_{U^n}}{2^n},
\label{eq:mal_alice_source}
\end{equation}
since the Bell pairs are prepared independently
of $\Psi$ and each of Alice's halves is maximally mixed.
Lines~2--4 leave these quantum registers untouched,
the tests in Line~5 act outside $Q_1\cup Q_2$,
and the Clifford operations in Line~6 preserve
the maximally mixed state on $U_{Q_1}U_{Q_2}$.
Hence, immediately before the Bell measurements,
$A_1A_2RH_1H_2$ is independent of Alice's other registers. The Bell measurements act only on $A_1H_1$ and
$A_2H_2$, so conditioning on their outcomes replaces
$\rho_{\mathrm{pre}}$ by
$\Pi\rho_{\mathrm{pre}}\Pi/
\operatorname{Tr}[\Pi\rho_{\mathrm{pre}}]$
and leaves the other registers unchanged. Bob's decoding in Line~7 leaves this state on
$Q_AR$ unchanged;

\item holds because Lines~2--4 leave
$A_1A_2U^n$ untouched, and Line~5 tests only
$J_1\cup J_2$, which is disjoint from $Q_1\cup Q_2$.
By \eqref{eq:mal_alice_source} and independence
of the Bell pairs, conditioning on these tests
therefore leaves the state on
$A_1A_2RU_{Q_1}U_{Q_2}$ equal to
$\Psi^{A_1A_2R}\otimes
I_{U_{Q_1}}/2^s\otimes I_{U_{Q_2}}/2^s$.
Line~6 applies $(V^{(i)})^T$ to $U_{Q_i}$;
\item holds because 
$\operatorname{Tr}_{H_i}[\Pi_{t_i}]=I_{A_i}/2^m$, and the trace of the numerator is $4^{-2m}$;

\item holds by defining
$\tau_{Q_A,m_0,0}\triangleq
(\Pi_{t_1}\otimes\Pi_{t_2})
\otimes\tau_{Q_A\setminus A_1A_2H_1H_2,m_0}$, which is independent of $c$, $\mathcal S$, and
$\Psi$. Indeed, Steps~$(a)$--$(c)$ show that the state on
$A_1A_2U^n$ and Alice's copies of $(M_0,F)$
depends only on $m_0$.
For her private hashing registers,
Lemma~\ref{lem:mal_list_hash_symmetry} and
\eqref{eq:mal_alice_source} give the same
conditional state for all compatible $c,\mathcal S$.
This private state is also independent of $\Psi$,
since honest hashing in Line~4 acts only on
classical registers initialized independently
of the inputs.
Thus, $\tau_{Q_A,m_0,0}$ depends only on $m_0$.
\end{enumerate}

When $F=1$, the honest execution terminates in
Line~2 before either input is used.~\eqref{eq:mal_alice_source} therefore gives
$\omega^{(c)}_{Q_AR\mid M_0=m_0,F=1,\mathsf S=\mathcal S}(\Psi)
=\Psi^{A_1A_2R}\otimes
\tau_{Q_A\setminus A_1A_2,m_0,1}
\triangleq\tau_{Q_ARm_0\,1}(\Psi)$,
where the remaining-register state depends only
on $m_0$.

We now show  \eqref{eq:qot_receiver_privacy1}.
Let $\Gamma$ denote the interactive-hashing transcript
in Line~4.
For a successful public transcript $m_0$, denote its
announced count, lists, hashing transcript, bit $a$,
and Clifford unitaries by
$k,(l_1,l_2),\gamma,\alpha,(v_1,v_2)$, respectively.
Then, we have
\begin{align*}
\Prb[M_0=m_0,F=0\mid C=c]
&=
\Prb[|\mathcal G|=k\mid C=c]\,
\Prb[L_1=l_1,L_2=l_2\mid|\mathcal G|=k,C=c]
\\
&\qquad{}\times
\Prb[\Gamma=\gamma,a=\alpha
\mid|\mathcal G|=k,L_1=l_1,L_2=l_2,C=c]
\\
&\qquad{}\times
\Prb[M_0=m_0,F=0
\mid|\mathcal G|=k,L_1=l_1,L_2=l_2,
\Gamma=\gamma,a=\alpha,C=c]
\\
&\overset{(a)}{=}
\frac{\Prb[|\mathcal G|=k]}{(n)_{2t}}
\Prb[\Gamma=\gamma,D=\alpha\oplus(c-1)
\mid|\mathcal G|=k,L_1=l_1,L_2=l_2,C=c]
\\
&\qquad{}\times
\Prb[M_0=m_0,F=0
\mid|\mathcal G|=k,L_1=l_1,L_2=l_2,
\Gamma=\gamma,a=\alpha,C=c]
\\
&\overset{(b)}{=}
\frac{\Prb[|\mathcal G|=k]}{(n)_{2t}}
\Prb[\Gamma=\gamma,D=\alpha\oplus(c-1)
\mid|\mathcal G|=k,L_1=l_1,L_2=l_2,C=c]
\\
&\qquad{}\times
2^{-2(k-t)}
\Prb[V^{(1)}=v_1,V^{(2)}=v_2
\mid|\mathcal G|=k]\,
4^{-2m}
\\
&\overset{(c)}{=}
\frac{\Prb[|\mathcal G|=k]}{2(n)_{2t}}
\Prb[\Gamma=\gamma
\mid|\mathcal G|=k,L_1=l_1,L_2=l_2,C=c]
\\
&\qquad{}\times
2^{-2(k-t)}
\Prb[V^{(1)}=v_1,V^{(2)}=v_2
\mid|\mathcal G|=k]\,
4^{-2m}
\\
&\overset{(d)}{=}
\frac{\Prb[|\mathcal G|=k]}{2(n)_{2t}}
\Prb[\Gamma=\gamma
\mid|\mathcal G|=k,L_1=l_1,L_2=l_2,C=\bar c]
\\
&\qquad{}\times
2^{-2(k-t)}
\Prb[V^{(1)}=v_1,V^{(2)}=v_2
\mid|\mathcal G|=k]\,
4^{-2m}
\\
&\overset{(e)}{=}
\Prb[M_0=m_0,F=0\mid C=\bar c],
\end{align*}
where
\begin{enumerate}[(a)]
\item holds because $|\mathcal G|$ is independent
of $C$ in Line~1, averaging \eqref{eq:mal_list_uniform} over $W$ gives
$\Prb[L_1=l_1,L_2=l_2\mid|\mathcal G|=k,C=c]
=1/(n)_{2t}$, and Line~4 sets
$a=D\oplus(C-1)$;

\item holds because the factor $2^{-2(k-t)}$ is
the probability of the reported test string:
by correctness, $J_1\cup J_2\subseteq\mathcal G$,
so Line~5 measures $2(k-t)$ independent Bell pairs.
The Clifford probability is conditioned only on
$|\mathcal G|=k$ because Line~6 samples these
unitaries independently given $k$.
The factor $4^{-2m}$ is the Bell-outcome probability:
Line~5 leaves the pairs at $Q_1\cup Q_2$ unchanged,
and moving the Cliffords to their receiving halves
by the transpose identity
\cite[Ex.~3.7.12]{wilde2013quantum} allows
Lemma~\ref{lem:teleportation_properties_ach}
to be applied successively to the teleportations of $A_1$ and $A_2$
in Line~6, each contributing a factor $4^{-m}$.
The success-probability factor is $1$, since
correctness gives $F=0$ for accepted $k$;

\item holds because applying Property~2 to the hashing
execution, with $|\mathcal G|=k$, $L_1=l_1$,
$L_2=l_2$, and $C=c$ fixed, gives
$\Prb[D=\alpha\oplus(c-1)
\mid\Gamma=\gamma,|\mathcal G|=k,
L_1=l_1,L_2=l_2,C=c]=1/2$. Property~2 can indeed be applied because, before Line~4, Bayes' rule and
\eqref{eq:mal_list_uniform} give
$\Prb[W=w\mid|\mathcal G|=k,L_1=l_1,L_2=l_2,C=c]
=(2^{-h_k}/(n)_{2t})/(1/(n)_{2t})=2^{-h_k}$,
and $W$ is independent of Alice's quantum state:
\eqref{eq:mal_alice_source} gives the same state
for every erasure pattern, and Lines~2--3 use only
the erasures and independent randomness;

\item holds because Line~1 gives the same state on
$A_1A_2U^n$ for $C=1$ and $C=2$, and Lines~2--3
leave these registers unchanged, while $(c)$
establishes that $W$ is uniform and independent
of them under the stated conditioning.
The hashing messages $\Gamma$ in Line~4 therefore
have the same distribution for $C=1$ and $C=2$,
since they are generated from the same joint state
by a procedure that does not use $C$;

\item holds by $(a)$--$(d)$ with $c$ replaced
by $\bar c$.
\end{enumerate}
When $F=1$, Line~2 determines the public messages
and abort decision solely from $k$, whose distribution
is independent of $C$, hence
$\Prb[M_0=m_0,F=1\mid C=1]
=\Prb[M_0=m_0,F=1\mid C=2]$.

\subsubsection{Simulation against a dishonest Bob}

After Bob announces valid $k,L_1,L_2$ in Lines~2--3,
let $u_i\triangleq|L_i\setminus\mathcal G|$ count
the erased channel uses indexed by $L_i$.
Choose $j\in\operatorname*{arg\,max}_{i\in\{1,2\}}u_i$,
breaking ties by a fixed rule.

The ideal experiment allows Bob to receive at most
one of Alice's inputs. We therefore aim to reproduce
his final private register $V$ and the public
communication $M$ using his initial register $Z$
and only $A_{3-j}$.
To this end, we compare the real execution
with one that replaces $A_j$ by
$\ket{0^m}\!\bra{0^m}$, preserving the marginal on
$A_{3-j}ZR$, and bound the trace distance between
their final states on $VMR$.

The following lemma first bounds the probability
that Bob's reported outcomes match Alice's
measurement outcomes on $J_1\cup J_2$ in Line~5
and $|Q_j\setminus\mathcal G|<m+r^2/n$.

\begin{lemma}
\label{lem:mal_remaining_erasure_bound}
For every Bob strategy and every initial state,
\begin{align*}
\Prb[\text{Line~6 is reached},\
|Q_j\setminus\mathcal G|<m+r^2/n]
\leq
\Prb[|\mathcal G|>n(1-\epsilon)+r]
+64e^{-r^2/n}+2^{-2r^2/n}.
\end{align*}
\end{lemma}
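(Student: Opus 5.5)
The plan is to split the bad event according to how many erased indices the test in Line~5 examines, and to use interactive hashing to show that Bob cannot keep that number small on both lists at once. First I discard the event $|\mathcal G|>n(1-\epsilon)+r$, which produces the first term. On the complement, Line~2 of Algorithm~\ref{alg:malicious_qot} has accepted some $k\ge n(1-\epsilon)-r$, and $t=\lceil (k+m+4r)/2\rceil$. Since $L_1,L_2$ are disjoint lists of $t$ indices each, at most $|\mathcal G|$ of their $2t$ entries are unerased. This gives
\begin{equation*}
u_1+u_2\ \ge\ 2t-|\mathcal G|\ \ge\ k+m+4r-n(1-\epsilon)-r\ \ge\ m+2r .
\end{equation*}
The lists, and hence $u_1,u_2$ and $j$, are fixed before Line~4. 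I condition on them and treat the erased positions inside each $L_i$ as a fixed subset $\mathcal E_i\subseteq[t]$ with $|\mathcal E_i|=u_i$.

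\emph{Test failure on erased indices.} For every index $\ell\in J_1\cup J_2$ that is erased, $V_\ell$ went to the environment $W_\ell$. Whatever Bob reports, Alice's computational-basis outcome on $U_\ell$ is then uniform and independent of Bob's view, because Alice's reduced state is maximally mixed and Bob holds no part of the purification. Conditioned on the lists and on the hashing outputs, the probability that the test passes is therefore at most $2^{-e}$, where $e$ is the number of erased indices in $J_1\cup J_2$. Whenever $e\ge 2r^2/n$, this gives the term $2^{-2r^2/n}$.

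\emph{Hashing.} I then define a bad set for each list,
\begin{equation*}
\mathcal B_i\triangleq\left\{w\in\{0,1\}^{h_k}:\ \text{the positions }\mathsf{Enc}_k(w)\text{ of }L_i\text{ contain few erased indices}\right\}.
\end{equation*}
The threshold is chosen so that two things hold. First, if $J_1$ and $J_2$ together contain fewer than $2r^2/n$ erased indices, then both tested strings $W_{1-a}$ and $W_a$ land in the respective bad sets. Second, this same configuration forces $|Q_j\setminus\mathcal G|\ge m+r^2/n$, so it cannot coexist with the bad event in the lemma. The second point uses that $j$ maximizes $u_i$ together with $u_1+u_2\ge m+2r$. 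It also uses the fact that $Q_j=L_j\setminus J_j$ keeps $s\ge m+4r$ of the $t$ positions, so $Q_j$ retains roughly a fraction $s/t$ of the erased indices in $L_j$.

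Because $\mathsf{Enc}_k$ is injective, $|\mathcal B_i|/2^{h_k}$ is bounded by the hypergeometric tail probability that a uniformly random $(k-t)$-subset of $[t]$ misses $\mathcal E_i$ far below its mean. Hoeffding's inequality for sampling without replacement should give $|\mathcal B_1\cup\mathcal B_2|/2^{h_k}\le 4e^{-r^2/n}$. Property~3 of interactive hashing, \eqref{eq:mal_ih}, applied with $\mathcal B=\mathcal B_1\cup\mathcal B_2$, then bounds the probability that both $W_0,W_1\in\mathcal B$ by $16\cdot 4e^{-r^2/n}=64e^{-r^2/n}$. A union bound over the three events gives the lemma.

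The main obstacle is the bookkeeping that links $u_j$, the sizes $s$ and $t$, and $m+r^2/n$, and in particular the threshold defining $\mathcal B_i$. It must be tight enough that few caught erasures in both lists force many erased indices in $Q_j$, and loose enough that the hypergeometric bound on $|\mathcal B_i|$ still reaches $e^{-r^2/n}$. If the inequality $u_j\ge (m+2r)/2$ that follows from the sum bound turns out to be insufficient, I would instead work with the combined count of erased indices in $Q_1\cup Q_2$. I would then use the maximality of $u_j$ only at the end, possibly by enlarging $\mathcal B$ so that it also covers the case in which the caught erased indices concentrate on list $j$.
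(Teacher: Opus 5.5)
Your outer decomposition has the right shape. You discard $|\mathcal G|>n(1-\epsilon)+r$, charge $2^{-2r^2/n}$ to passing the tests with at least $2r^2/n$ erased tested indices, and observe that fewer than $2r^2/n$ erased tested indices put both $W_{1-a}$ and $W_a$ in $\mathcal B_1\cup\mathcal B_2$. The hashing step, however, has a genuine gap. The density bound $|\mathcal B_1\cup\mathcal B_2|/2^{h_k}\le 4e^{-r^2/n}$ cannot hold for every announced pair of lists. If $u_1=0$, every $w$ lies in $\mathcal B_1$. More generally, the hypergeometric mean $(k-t)u_i/t$ falls below your threshold as soon as $u_i$ is small. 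So ``both hashing outputs are bad'' is unlikely only when both lists carry many erasures.

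Your ``second point'' does not rescue the remaining case as stated, because few erased tested indices do not by themselves force $|Q_j\setminus\mathcal G|\ge m+r^2/n$. Take a split with $u_1=u_2\approx s/2\approx(m+4r)/2$ and tested sets that avoid all erasures. Then $|Q_j\setminus\mathcal G|=u_j\approx(m+4r)/2$, which is below $m$ as soon as $m>4r$. Ruling out this configuration is exactly the job of hashing, so it cannot also be dismissed deterministically. This is why, as you suspected, $u_j\ge(m+2r)/2$ is too weak. The ``fraction $s/t$'' heuristic does not help either: $|Q_j\setminus\mathcal G|=u_j-|J_j\setminus\mathcal G|$ holds exactly, and Bob can steer the tested set on one of the two lists through his own $W$ and his choice of $a$.

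The missing idea is a case split on $\min\{u_1,u_2\}$, which is fixed before Line~4.

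\emph{Case $u_1,u_2\ge r$.} Since $k-t\ge 4r$ and $2t\le n$, we get $(k-t)u_i/t\ge 8r^2/n$. The multiplicative lower-tail bound $\Prb[X<\mu/2]\le e^{-\mu/8}$ for the hypergeometric count then makes each $\mathsf{Enc}_k(\mathcal B_i)$ at most an $e^{-r^2/n}$ fraction of $\binom{[t]}{k-t}$. An additive Hoeffding bound is not strong enough here when $u_i$ is close to $r$. Combining with $\binom{t}{k-t}/2^{h_k}<2$ and \eqref{eq:mal_ih} gives the term $64e^{-r^2/n}$.

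\emph{Case $\min\{u_1,u_2\}<r$.} No hashing is needed. On $|\mathcal G|\le n(1-\epsilon)+r$ we have $u_j=u_1+u_2-\min\{u_1,u_2\}>u_1+u_2-r\ge 2t-|\mathcal G|-r\ge m+r$. Passing the tests with fewer than $2r^2/n$ erased tested indices then yields $|Q_j\setminus\mathcal G|>m+r-2r^2/n\ge m+r^2/n$, using $r\le n/3$.

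Accordingly, the hashing event in your union bound should be $\{\text{hashing completes},\ \min\{u_1,u_2\}\ge r,\ W_0,W_1\in\mathcal B_1\cup\mathcal B_2\}$. With that change, the three-event cover closes.
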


\begin{proof}
Fix $\mathcal G$ and valid $k,L_1,L_2$ before
Line~4, and suppose that $u_1,u_2\geq r$.
Let $\mathcal T$ be uniform in $\binom{[t]}{k-t}$
and define
$\mathcal B\triangleq
\{w:\min_i|L_i^{\mathsf{Enc}_k(w)}
\setminus\mathcal G|<2r^2/n\}$.
Conditioned on these fixed values, we have
\begin{align}
\Prb[\text{hashing completes},\
W_0,W_1\in\mathcal B]
\nonumber
&\overset{(a)}{\leq}
16\frac{|\mathcal B|}{2^{h_k}}
\nonumber\\
&\overset{(b)}{=}
16\frac{
\left|\{\mathsf{Enc}_k(w):w\in\mathcal B\}\right|
}{2^{h_k}}
\nonumber\\
&\overset{(c)}{\leq}
16\frac{\binom{t}{k-t}}{2^{h_k}}
\sum_{i=1}^{2}
\Prb[|L_i^{\mathcal T}\setminus\mathcal G|<2r^2/n]
\nonumber\\
&\overset{(d)}{\leq}
32\sum_{i=1}^{2}
\Prb[|L_i^{\mathcal T}\setminus\mathcal G|<2r^2/n]
\nonumber\\
&\overset{(e)}{\leq}
32\sum_{i=1}^{2}
\Prb\left[
|L_i^{\mathcal T}\setminus\mathcal G|
<\frac{(k-t)u_i}{2t}
\right]
\nonumber\\
&\overset{(f)}{\leq}
32\sum_{i=1}^{2}
\exp\left(-\frac{(k-t)u_i}{8t}\right)
\nonumber\\
&\overset{(g)}{\leq}
64e^{-r^2/n},
\label{eq:mal_hash_failure}
\end{align}
where
\begin{enumerate}[(a)]
\item holds by \eqref{eq:mal_ih} because a classical
randomized Bob can reproduce a quantum Bob's
reply probabilities in Line~4, and extending
an interrupted hashing execution to completion
can only enlarge the event;

\item holds by injectivity of $\mathsf{Enc}_k$;

\item holds by the definition of $\mathcal B$
and the union bound;

\item holds because
$h_k=\lfloor\log_2\binom{t}{k-t}\rfloor$;

\item holds because
$(k-t)u_i/t\geq8r^2/n$:
the correctness analysis gives
$k-t=\lfloor(k-m-4r)/2\rfloor\geq4r$
for every accepted $k$, while $u_i\geq r$
and valid disjoint lists satisfy $2t\leq n$;

\item holds by the hypergeometric lower-tail bound
\cite[Cor.~4.3 and Sec.~5.2]{mulzer2018five},
since $|L_i^{\mathcal T}\setminus\mathcal G|$
has mean $(k-t)u_i/t$;

\item holds by the bound $(k-t)u_i/t\geq8r^2/n$
established in $(e)$.
\end{enumerate}

Suppose now that Bob's reported outcomes match Alice's
measurement outcomes on $J_1\cup J_2$ in Line~5 of
Algorithm~\ref{alg:malicious_qot}, and that
$\min\{u_1,u_2\}<r$,
$|(J_1\cup J_2)\setminus\mathcal G|<2r^2/n$,
and $|\mathcal G|\leq n(1-\epsilon)+r$.
Then, we have
\begin{align}
|Q_j\setminus\mathcal G|
&=u_j-|J_j\setminus\mathcal G|
\nonumber\\
&\overset{(a)}{\geq}
u_1+u_2-r-|(J_1\cup J_2)\setminus\mathcal G|
\nonumber\\
&\overset{(b)}{\geq}
2t-|\mathcal G|-r-\frac{2r^2}{n}
\nonumber\\
&\overset{(c)}{\geq}
m+3r+k-|\mathcal G|-\frac{2r^2}{n}
\nonumber\\
&\overset{(d)}{\geq}
m+r-\frac{2r^2}{n}
\nonumber\\
&\overset{(e)}{\geq}
m+\frac{r^2}{n},
\label{eq:mal_remaining_erasures}
\end{align}
where
\begin{enumerate}[(a)]
\item holds because
$u_j=u_1+u_2-\min\{u_1,u_2\}>u_1+u_2-r$
and $J_j\subseteq J_1\cup J_2$;
\item holds because the disjoint lists give
$u_1+u_2=2t-|(L_1\cup L_2)\cap\mathcal G|
\geq2t-|\mathcal G|$;
\item holds by
$t=\lceil(k+m+4r)/2\rceil$ in Line~2;
\item holds because Line~2 accepts only
$k\geq n(1-\epsilon)-r$, so
$k-|\mathcal G|\geq-2r$;
\item holds because $r\leq n/3$.
Indeed, \eqref{eq:mal_parameter_choice} and $m\geq1$
give $16\sqrt{n\ln(256/\min\{\delta,\nu\})}<n/2$
and $n>1024$, whence $r<n/32+1\leq n/3$.
\end{enumerate}

Removing the conditioning on $\mathcal G,k,L_1,L_2$,
we obtain
\begin{align*}
\Prb[\text{Line~6 is reached},\
|Q_j\setminus\mathcal G|<m+r^2/n]
&\overset{(a)}{\leq}
\Prb[|\mathcal G|>n(1-\epsilon)+r]
\\
&\qquad+
\Prb[\text{hashing completes},\
\min\{u_1,u_2\}\geq r,\ W_0,W_1\in\mathcal B]
\\
&\qquad+
\Prb[\text{tests pass},\
|(J_1\cup J_2)\setminus\mathcal G|\geq2r^2/n]
\\
&\overset{(b)}{\leq}
\Prb[|\mathcal G|>n(1-\epsilon)+r]
+64e^{-r^2/n}+2^{-2r^2/n},
\end{align*}
where
\begin{enumerate}[(a)]
\item holds by the union bound applied to the events
 $\mathcal E_1\triangleq
\{|\mathcal G|>n(1-\epsilon)+r\}$, $\mathcal E_2\triangleq
\{\text{hashing completes},\
\min\{u_1,u_2\}\geq r,\ W_0,W_1\in\mathcal B\}$, $\mathcal E_3\triangleq
\{\text{tests in Line~5 pass},\
|(J_1\cup J_2)\setminus\mathcal G|\geq2r^2/n\}$. To see that $\{\text{Line~6 is reached},\
|Q_j\setminus\mathcal G|<m+r^2/n\}
\subseteq\mathcal E_1\cup\mathcal E_2\cup\mathcal E_3$, suppose Line~6 is reached
but none of $\mathcal E_1$, $\mathcal E_2$, $\mathcal E_3$
occurs.
The tests in Line~5 have passed, so
$\mathcal E_3^c$ implies
$|(J_1\cup J_2)\setminus\mathcal G|<2r^2/n$.
Since Line~4 sets
$J_1=L_1^{\mathsf{Enc}_k(W_{1-a})}$ and
$J_2=L_2^{\mathsf{Enc}_k(W_a)}$,
the inequalities
$|J_i\setminus\mathcal G|<2r^2/n$, $i=1,2$,
give $W_{1-a},W_a\in\mathcal B$
by the definition of $\mathcal B$,
so $\mathcal E_2^c$ implies
$\min\{u_1,u_2\}<r$.
Together with
$|\mathcal G|\leq n(1-\epsilon)+r$
from $\mathcal E_1^c$,
\eqref{eq:mal_remaining_erasures} therefore gives
$|Q_j\setminus\mathcal G|\geq m+r^2/n$.
\item holds by \eqref{eq:mal_hash_failure}, averaged over the
erasures and valid announcements, and without assuming $u_1,u_2\geq r$, the tests in
Line~5 satisfy
$\Prb[\text{tests pass},\
|(J_1\cup J_2)\setminus\mathcal G|\geq2r^2/n]
\leq
2^{-2r^2/n}
\Prb[|(J_1\cup J_2)\setminus\mathcal G|\geq2r^2/n]
 \leq2^{-2r^2/n}$, 
where the first inequality holds because, conditioned on
$\mathcal G$, $J_1,J_2$, and Bob's string announced
before Alice measures in Line~5, Alice's outcomes on
$(J_1\cup J_2)\setminus\mathcal G$ are independent
uniform bits, and passing requires all these bits to
match Bob's reported bits, an event of probability
$2^{-|(J_1\cup J_2)\setminus\mathcal G|}$.
\end{enumerate}
 \end{proof}

The next lemma shows that replacing $A_j$ by
$\ket{0^m}\!\bra{0^m}$ in the real protocol
preserves the subnormalized final state on $VMR$.

\begin{lemma}
\label{lem:mal_bob_branch_replacement}
Fix the erasures, valid lists, completed tests,
and Clifford descriptions before the Bell
measurements in Line~6.
Suppose that \eqref{eq:mal_depolarizing} holds
for $\mathcal N_{V^{(j)}}$ with $Q=Q_j$ and
$\mathcal U=Q_j\setminus\mathcal G$.
Then, for every initial state $\rho^{A_1A_2ZR}$
and every compatible continuation of the public
communication, the real protocol produces the
same subnormalized state on $VMR$ from
$\rho^{A_1A_2ZR}$ and from
$\rho^{A_{3-j}ZR}\otimes
\ket{0^m}\!\bra{0^m}^{A_j}$,
using the same Bob strategy in both executions.
\end{lemma}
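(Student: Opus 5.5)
The plan is to isolate the only point where $A_j$ enters the real protocol, namely Alice's Bell measurement on $A_jH_j$ in Line~6. Everything before it is independent of both inputs: the Bell pairs, the erasures, Bob's announcements, the hashing, the tests and the Clifford samples. Everything after it depends on $A_j$ only through the outcome $T_j$ and the qubits that Bob can access.

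By the transpose identity \cite[Ex.~3.7.12]{wilde2013quantum} and \eqref{eq:tele}, conditioned on the prefix and on $T_j=t_j$, the Bell measurement with outcome $t_j$ acts as a fixed subnormalized map. Up to the weight $4^{-m}$, it sends $\rho^{A_j\cdots}$ to $\mathcal N_{V^{(j)}}(P_{t_j}\,\cdot\,P_{t_j}^\dagger)$, with output on the receiving halves $Y_{Q_j}$. This holds as a CP map on $A_j$ tensored with the identity on every other register, including $A_{3-j}$, $Z$, $R$ and Bob's lab.

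First, I describe how these output qubits are distributed. The qubits at $Q_j\cap\mathcal G$ lie in Bob's lab. The qubits at $\mathcal U=Q_j\setminus\mathcal G$ are in the environment $W$, which is excluded from $VMR$ and never touched again. Bob's continuation, including the $T_{3-j}$ measurement by Alice and all later messages, is a fixed CP map $\mathcal K$ acting on $Y_{Q_j\cap\mathcal G}$, the public records and his other registers. Crucially, $\mathcal K$ does not act on $Y_{\mathcal U}$. Hence the subnormalized final state on $VMR$ equals
\[
\mathcal K\circ\operatorname{Tr}_{Y_{\mathcal U}}\circ\bigl(\mathcal N_{V^{(j)}}\otimes\operatorname{id}\bigr)\bigl(P_{t_j}\rho P_{t_j}^\dagger\bigr),
\]
summed over $t_j$ with the factor $4^{-m}$, where the other registers are traced appropriately.

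Second, I apply \eqref{eq:mal_depolarizing} with $X=A_j$ and with $R$ replaced by all remaining registers, namely $A_{3-j}$, $Z$, $R$ and Bob's and Alice's other systems at that time. The lemma holds for every state $\rho^{XR}$, including arbitrarily large references. Under it, $\operatorname{Tr}_{Y_{\mathcal U}}$ of the output equals $I/2^{s-u}\otimes\rho^{\text{rest}}$. Since the Pauli $P_{t_j}$ acts on $A_j$ only, $\rho^{\text{rest}}$ does not depend on $t_j$ or on the $A_j$ part. This marginal is identical for $\rho^{A_1A_2ZR}$ and for $\rho^{A_{3-j}ZR}\otimes\ket{0^m}\!\bra{0^m}^{A_j}$, because both have the same marginal on everything except $A_j$. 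One must check that the prefix operations act trivially on $A_j$, so that this marginal agreement propagates to the moment of measurement. That holds because Alice's dishonest-free, honest operations before Line~6 never touch $A_1A_2$.

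Third, the distribution of $T_j$ is $4^{-m}$ for both inputs, which follows from the same product form. Applying the common map $\mathcal K$ then gives equal subnormalized states on $VMR$ for every compatible continuation.

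The main obstacle is the bookkeeping that makes $\mathcal K$ genuinely independent of $A_j$ and not acting on $Y_{\mathcal U}$. The issue is that Alice's measurement of $A_{3-j}$ and the announcement of $T_j$ happen within the same line. I will handle this by ordering the two Bell measurements, which commute because they act on disjoint systems. I then treat $t_j$ as a classical label, carried into $M$, on which $\mathcal K$ may depend.
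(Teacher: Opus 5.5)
Your outline follows the paper's route: isolate the teleportation of $A_j$, apply \eqref{eq:mal_depolarizing} with everything else as the reference, and note that Bob's side never acts on $Y_{\mathcal U}$. However, your first step has a gap. You assert that Alice's Line-6 Bell measurement with outcome $t_j$ acts as $4^{-m}\mathcal N_{V^{(j)}}(P_{t_j}\cdot P_{t_j}^\dagger)$ into the receiving halves $Y_{Q_j}$, tensored with the identity on Bob's laboratory. You also let $\mathcal K$ be only Bob's continuation after Line~6. That use of \eqref{eq:tele} and the transpose identity requires $U_{Q_j}$ to be half of \emph{intact} Bell pairs at the moment of measurement, with fresh partners $Y_{Q_j}$.

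A malicious Bob holds $Y_{Q_j\cap\mathcal G}$ from Line~1 onward. During Lines~1--5 he may act on these qubits arbitrarily, for example measuring them or entangling them with $Z$. In the true time order, those pairs need no longer be maximally entangled, and Bob's registers need no longer be fresh partners of $U_{Q_j}$. Your decomposition of the final state as $\mathcal K\circ\operatorname{Tr}_{Y_{\mathcal U}}\circ(\mathcal N_{V^{(j)}}\otimes\operatorname{id})(\cdots)$, with $\mathcal K$ acting only after Line~6, is therefore not justified as written.

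The missing idea is the paper's reordering step. Fix the public transcript $\mu$. Then every operation of each party is a fixed completely positive map on that party's own registers. Hence all of Bob's maps, including those applied before Line~6, commute with Alice's Line-5 test measurements and her Line-6 Clifford and Bell operations. They can therefore be deferred until after Alice's operations without changing the subnormalized state on $VMR$. Only after this deferral are the pairs at $Q_1\cup Q_2$ unchanged when Alice measures, so that \eqref{eq:tele} legitimately applies.

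Your $\mathcal K$ must then be Bob's \emph{entire} sequence of local maps, composed with the teleportation of $A_{3-j}$. It still never touches $Y_{\mathcal U}$, so your conclusion survives. With this inserted, the rest of your argument coincides with the paper's steps (b)--(f). The issue you flagged as the main obstacle, ordering Alice's two Bell measurements, is the easy part and corresponds to the paper's step (c).
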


\begin{proof}
For an arbitrary $\rho^{A_1A_2ZR}$, define
$\rho'\triangleq\rho^{A_{3-j}ZR}\otimes
\ket{0^m}\!\bra{0^m}^{A_j}$, and fix the public
communication $M=\mu$.
Let $\rho_{\mathrm{real},\mathcal G,\mu}^{VMR}$
denote the subnormalized final state of the real
execution. Then, we have
\begin{align}
\rho_{\mathrm{real},\mathcal G,\mu}^{VMR}
(\rho,\mathcal D)
\nonumber
&\overset{(a)}{=}
\rho_{\mathrm{def},\mathcal G,\mu}^{VMR}
(\rho,\mathcal D)
\nonumber\\
&\overset{(b)}{=}
4^{-2m}
(\mathcal D_{\mathcal G,\mu}\otimes\operatorname{id}_R)
\left[
\operatorname{Tr}_{Y_{(Q_1\cup Q_2)\setminus\mathcal G}}
\left[
(\mathcal N_{V^{(1)}}\otimes
 \mathcal N_{V^{(2)}}\otimes\operatorname{id}_{ZR})
\left(
P_{T_1}P_{T_2}\rho
P_{T_2}^{\dagger}P_{T_1}^{\dagger}
\right)
\right]
\right]
\nonumber\\
&\overset{(c)}{=}
4^{-2m}
(\mathcal K_{\mathcal G,\mu}\otimes\operatorname{id}_R)
\left[
\operatorname{Tr}_{Y_{Q_j\setminus\mathcal G}}
\left[
(\mathcal N_{V^{(j)}}\otimes
 \operatorname{id}_{A_{3-j}ZR})
\left(P_{T_j}\rho P_{T_j}^{\dagger}\right)
\right]
\right]
\nonumber\\
&\overset{(d)}{=}
4^{-2m}
(\mathcal K_{\mathcal G,\mu}\otimes\operatorname{id}_R)
\left[
\frac{I_{Y_{Q_j\cap\mathcal G}}}{2^{|Q_j\cap\mathcal G|}}
\otimes\operatorname{Tr}_{A_j}
\left[P_{T_j}\rho P_{T_j}^{\dagger}\right]
\right]
\nonumber\\
&\overset{(e)}{=}
4^{-2m}
(\mathcal K_{\mathcal G,\mu}\otimes\operatorname{id}_R)
\left[
\frac{I_{Y_{Q_j\cap\mathcal G}}}{2^{|Q_j\cap\mathcal G|}}
\otimes\rho^{A_{3-j}ZR}
\right]
\nonumber\\\nonumber
&\overset{(f)}{=}
\rho_{\mathrm{real},\mathcal G,\mu}^{VMR}
(\rho',\mathcal D),
\end{align}
where
\begin{enumerate}[(a)]
\item holds with $\rho_{\mathrm{def},\mathcal G,\mu}^{VMR}$
defined as the subnormalized state obtained, for
fixed $\mu$, by evaluating Alice's tests in Line~5
and teleportation in Line~6 first, then applying
all of Bob's local maps in their original sequence.
This reordering preserves the state because
Alice's maps commute with Bob's maps, as the
two parties act on disjoint registers;
\item holds because, with Bob's maps deferred,
the tests in Line~5 leave the Bell pairs at
$Q_1\cup Q_2$ unchanged, and therefore applying 
\cite[Ex.~3.7.12]{wilde2013quantum} and
Lemma~\ref{lem:teleportation_properties_ach}
to these unchanged Bell pairs gives for each Bell outcome $T_i$ in Line~6,
the receiving map 
$\sigma^{A_i}\mapsto
4^{-m}\operatorname{Tr}_{Y_{Q_i\setminus\mathcal G}}
[\mathcal N_{V^{(i)}}(
P_{T_i}\sigma^{A_i}P_{T_i}^{\dagger})]$,
after tracing out Alice's unmeasured halves
and the erased   qubits.
Applying these maps to $A_1,A_2$, with
$\operatorname{id}_{ZR}$, supplies the input to
$\mathcal D_{\mathcal G,\mu}\otimes\operatorname{id}_R$,
where
$\mathcal D_{\mathcal G,\mu}:
\mathcal L(Y_{Q_1\cap\mathcal G}Y_{Q_2\cap\mathcal G}Z)
\longrightarrow\mathcal L(VM)$
tensors its input with the fixed subnormalized
state of $Y_{\mathcal G\setminus(Q_1\cup Q_2)}$
and the erasure flags after Alice's tests in
Line~5, then applies Bob's local maps in their
original sequence;

\item holds with
$\mathcal K_{\mathcal G,\mu}:
\mathcal L(Y_{Q_j\cap\mathcal G}A_{3-j}Z)
\longrightarrow\mathcal L(VM)$
defined by applying $P_{T_{3-j}}$ and
$\mathcal N_{V^{(3-j)}}$ to $A_{3-j}$,
tracing out $Y_{Q_{3-j}\setminus\mathcal G}$,
and then applying $\mathcal D_{\mathcal G,\mu}$.
Indeed, since $\mathcal N_{V^{(j)}}$ and
$\mathcal N_{V^{(3-j)}}$ have disjoint input
and output registers, we may encode $A_j$
and trace out $Y_{Q_j\setminus\mathcal G}$
before applying $\mathcal K_{\mathcal G,\mu}$;
\item holds  by the assumed  property
\eqref{eq:mal_depolarizing}, with $A_{3-j}ZR$
as the reference;
\item holds because $P_{T_j}$ acts only on the
traced-out register $A_j$;
\item holds by applying $(a)$--$(d)$ to $\rho'$,
whose marginal on $A_{3-j}ZR$ equals that of $\rho$.
\end{enumerate}
\end{proof}
Fix Bob's strategy $\mathcal D$.
In the ideal experiment, the simulator $\mathsf{Sim}_{\mathcal D}$ receives
$Z$ and runs $\mathcal D$ internally. Specifically, it prepares the Bell pairs and samples erasures
as in Line~1, performs Alice's operations in
Lines~2--5, and chooses $j$ 
as above.
If Line~6 is reached, it samples and announces
the Cliffords, chooses $C'=3-j$ and receives $A_{3-j}$, and completes the protocol with
$\ket{0^m}\!\bra{0^m}$ replacing $A_j$.
It outputs $\mathcal D$'s final register $V$
and the public communication~$M$. Then,
we have
\begin{align}
\sup_\rho\frac12
\left\|
\rho_{\mathrm{real}}^{HVMR}(\rho,\mathcal D)
-\rho_{\mathrm{ideal}}^{HVMR}
(\rho,\mathsf{Sim}_{\mathcal D})
\right\|_1
\nonumber
&\overset{(a)}{\leq}
\sup_\rho\Prb[\mathcal E]
\nonumber\\
&\overset{(b)}{\leq}
\Prb[|\mathcal G|>n(1-\epsilon)+r]
+64e^{-r^2/n}+2^{-2r^2/n}+4^{-r^2/n}
\nonumber\\
&\overset{(c)}{\leq}
e^{-2r^2/n}+64e^{-r^2/n}
+2^{-2r^2/n}+4^{-r^2/n}
\nonumber\\
&\overset{(d)}{\leq}
\frac{67\min\{\delta,\nu\}}{256}
\nonumber\\
&\leq\nu,
\label{eq:mal_bob_simulation}
\end{align}
where
\begin{enumerate}[(a)]
\item holds by Lemma~\ref{lem:mal_bob_branch_replacement}
outside $\mathcal E$ and by the triangle inequality
on $\mathcal E$, where $\mathcal E$ is the event that Line~6 is
reached and either
$|Q_j\setminus\mathcal G|<m+r^2/n$
or \eqref{eq:mal_depolarizing} fails for
$\mathcal N_{V^{(j)}}$ with $Q=Q_j$ and
$\mathcal U=Q_j\setminus\mathcal G$;

\item holds by
Lemma~\ref{lem:mal_remaining_erasure_bound}
and Lemma~\ref{lem:mal_clifford} because when $|Q_j\setminus\mathcal G|\geq m+r^2/n$,
\eqref{eq:mal_depolarizing}   fails
with probability at most
$4^{m-|Q_j\setminus\mathcal G|}
\leq4^{-r^2/n}$, uniformly over all inputs;

\item holds by Hoeffding's inequality for
$|\mathcal G|\sim\operatorname{Bin}(n,1-\epsilon)$;

\item holds because $e^{-2r^2/n}+64e^{-r^2/n}
+2^{-2r^2/n}+4^{-r^2/n}\leq \frac{67\min\{\delta,\nu\}}{256}$, using
$e^{-2r^2/n}\leq e^{-r^2/n}$ and
$2^{-2r^2/n}=4^{-r^2/n}\leq e^{-r^2/n}$, and  by the definition of $r$ in
\eqref{eq:mal_parameter_choice}.
\end{enumerate}
The same procedure $\mathsf{Sim}_{\mathcal D}$
works for every initial state $\rho$.
Thus, \eqref{eq:mal_bob_simulation} proves
\eqref{eq:qot_malicious_simulation}
for a dishonest Bob.

\subsubsection{Simulation against a dishonest Alice}

The ideal experiment requires the simulator to
generate $A'_1,A'_2,V,M$ from $Z$ without using $C$.
We construct the simulator from a modified
real execution that retains all transmitted
qubits $Y^n$ from Line~1 and omits Bob's decoding
in Line~7.
The following lemma shows that this execution
induces the same map from $Z$ to $VMY^n$ for
$C=1$ and $C=2$.

\begin{lemma}
\label{lem:mal_common_virtual_execution}
Fix Alice's strategy.
Modify the real execution by retaining all
transmitted qubits $Y^n$ from Line~1 and omitting
only Bob's decoding unitaries in Line~7.
Erasure sampling, public announcements, and
Alice's operations remain unchanged.
After tracing out Bob's private classical records,
the resulting map from $Z$ to Alice's final
private register $V$, the public communication
$M$, and $Y^n$ is identical for $C=1$ and $C=2$,
including aborts.
\end{lemma}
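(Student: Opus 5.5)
We will prove Lemma~\ref{lem:mal_common_virtual_execution} by tracking the modified execution line by line and showing that, once Bob's decoding is omitted and all of $Y^n$ is retained, Bob's behaviour depends on $C$ only through a relabelling that the uniformity results already established make invisible. The plan is to write the map from $Z$ to $VMY^n$ as a sum over public transcripts $\mu$ of subnormalized maps. Each such map is Alice's fixed strategy, conditioned on the messages in $\mu$, composed with Bob's classical reply probabilities and with his measurements on the tested halves. We will then show that, for each $\mu$, this subnormalized map is the same for $C=1$ and $C=2$.

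The first step treats Lines~1--3. The transmitted register $Y^n$ together with the erasure flags is produced by Alice's operation and the channel, neither of which involves $C$. The announced $k=|\mathcal G|$ depends only on the erasures. By \eqref{eq:mal_list_uniform}, which holds for an arbitrary state transmitted by Alice, the lists $(L_1,L_2)$ are uniform over disjoint ordered $t$-lists given $k$ and $W=w$, whatever the value of $c$. We will use Bayes' rule to show that the posterior on $(\mathcal G,W)$ given $(k,l_1,l_2)$ is obtained from the $C=1$ case by swapping roles. For $C=c$, $\mathcal G=l_c\cup l_{\bar c}^{\mathsf{Enc}_k(w)}$ with $W$ uniform. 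For $C=\bar c$, $\mathcal G=l_{\bar c}\cup l_c^{\mathsf{Enc}_k(w)}$. Since we trace out Bob's private records, including $\mathcal G$, $W$ and $D$, what must be matched is only the joint state of $VMY^n$.

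The second step treats Line~4, and the coupling across the two choices is carried by the announced bit $a=D\oplus(C-1)$. Fix a hashing transcript $\gamma$ with outputs $w_0,w_1$. For $C=1$ with $D=d$, Bob's input is $w_d$, so $J_1=l_1^{\mathsf{Enc}_k(w_{1-a})}$ and $J_2=l_2^{\mathsf{Enc}_k(w_a)}$. For $C=2$ and the same $a$, we have $D=1-a$. In both cases the public sets $J_1,J_2,Q_1,Q_2$ are determined by $(l_1,l_2,\gamma,a)$ alone and are therefore identical. Lemma~\ref{lem:mal_list_hash_symmetry} gives $q_\gamma(w_0)=q_\gamma(w_1)$ and shows that Alice's subnormalized map for $\gamma$ is the same for either Bob input. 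Hence the weight of $(\gamma,a)$ and Alice's post-hashing state coincide for $C=1$ and $C=2$. The step we expect to be the main obstacle is showing that the erasure configuration consistent with $(l_1,l_2,\gamma,a)$ has the same weight under both choices. We will first try to verify that in both cases $\mathcal G$ must equal $Q_c\cup J_1\cup J_2$ with $Q_c\subseteq\mathcal G$ and $Q_{\bar c}\cap\mathcal G=\varnothing$, so that it does differ between the choices. This would leave a genuine obstruction, since the erasure pattern acts on $Y^n$ through the flags. We will resolve it by noting that Bob's retained $Y^n$ in the modified execution includes the erased outputs as flags, and that by the i.i.d. symmetry already used in Lemma~\ref{lem:subset_announcement_symmetry}, the permutation swapping $Q_1$ and $Q_2$ preserves the erasure distribution. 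Under this reading, "retaining $Y^n$" means retaining the transmitted qubits before erasure. The channel's action is then absorbed into the virtual register, so the erasure pattern becomes a purely classical private record of Bob's, uniform over compatible configurations, and is traced out.

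The third step treats Lines~5--6. Bob measures the halves indexed by $J_1\cup J_2$, which are public and identical for both choices, in the computational basis and reports the outcomes. This measurement is a fixed operation on $Y_{J_1\cup J_2}$ that does not depend on $C$. Alice's Clifford and Bell-measurement maps, being her own strategy conditioned on $\mu$, are also independent of $C$. Finally, $F=0$ is sent regardless of $C$ once decoding is omitted, and aborts are triggered by public checks that depend only on $\mu$. Summing over $\mu$ then yields identical maps from $Z$ to $VMY^n$ for the two choices.
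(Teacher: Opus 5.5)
Your proposal is correct and follows essentially the paper's argument. You decompose by public transcript, match the scalar weights using \eqref{eq:mal_list_uniform} and $q_\gamma(w_0)=q_\gamma(w_1)$ from Lemma~\ref{lem:mal_list_hash_symmetry}, read $Y^n$ as the pre-erasure transmitted qubits so that the choice-dependent pattern $\mathcal G=Q_c\cup J_1\cup J_2$ is only a traced-out classical record, and observe that Line~5 measures the public set $J_1\cup J_2$ for both choices. Two small remarks: your separate permutation argument for the erasure weight is redundant, because \eqref{eq:mal_list_uniform} already contains the factor $1/\binom nk$; and you should say explicitly, as the paper does, that aborts before $a$ is announced are covered because Bob's hashing uses $W$, which is uniform given $(k,L_1,L_2)$, and not $C$.
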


\begin{proof}
We first consider executions in which Bob
announces $a$ in Line~4. Fix accepted $k$,  lists $l_1,l_2$,
completed hashing messages $\gamma$ with outputs
$w_0,w_1$, and $a$ announced in Line~4.
Let $\mathcal V^{(c)}_{k,l_1,l_2,\gamma,a}$ denote
the map of the modified execution
through Bob's announcement of $a$ in Line~4,
with the public messages recorded and Bob's
private classical records traced out.
Then, we have
\begin{align}
\mathcal V^{(c)}_{k,l_1,l_2,\gamma,a}
&\overset{(a)}{=}
\frac{\Prb[|\mathcal G|=k]}
{(n)_{2t}\,2^{h_k}}\,
q_\gamma(w_{a\oplus(c-1)})
\mathcal K_{k,l_1,l_2,\gamma,a}
\nonumber\\
&\overset{(b)}{=}
\frac{\Prb[|\mathcal G|=k]}
{(n)_{2t}\,2^{h_k}}\,
q_\gamma(w_a)
\mathcal K_{k,l_1,l_2,\gamma,a}
\nonumber\\
&=
\mathcal V^{(1)}_{k,l_1,l_2,\gamma,a},
\label{eq:mal_virtual_hash_map}
\end{align}
where
\begin{enumerate}[(a)]
\item holds with $\mathcal K_{k,l_1,l_2,\gamma,a}$
defined by Alice's operations in Lines~1--4 for
the fixed messages, including preparation of
$Y^n$ and recording the messages.
Since no transmitted qubits are discarded,
the erasure pattern affects only the scalar
factors multiplying $\mathcal K_{k,l_1,l_2,\gamma,a}$, which are
$\Prb[|\mathcal G|=k]$ for the erasure count,
$1/(n)_{2t}$ for the lists by
\eqref{eq:mal_list_uniform},
$2^{-h_k}$ for Bob's uniform $W$ in Line~3,
and $q_\gamma(W)$ for his hashing replies,
where Line~4 fixes $W=w_{a\oplus(c-1)}$;

\item holds by \eqref{eq:mal_hash_equal_weights},
since $\mathcal K_{k,l_1,l_2,\gamma,a}$ depends
only on Alice's strategy and the fixed messages.
\end{enumerate}
We next extend \eqref{eq:mal_virtual_hash_map}
to the final registers $VMY^n$.
The fixed $k,l_1,l_2,\gamma,a$ determine
$J_1,J_2,Q_1,Q_2$, and Lines~3--4 give
$\mathcal G=J_1\cup J_2\cup Q_c$.
Thus, Bob measures $Y_{J_1\cup J_2}$ in the
computational basis in Line~5 for both
$C=1$ and $C=2$.
For fixed later messages, Alice's local maps,
Bob's message checks, and his announcement
of $F$ also do not use $C$.
Equality therefore holds at the end of every
continuation, including aborts.

Finally, for executions that abort before $a$ is announced,
the erasure count in Line~2 and list sampling
in Line~3 have distributions independent of $C$.
Before Line~4, \eqref{eq:mal_list_uniform}
also makes $W$ uniform conditioned on
$k,L_1,L_2$, and Bob's hashing operations use
$W$ but not $C$.
Hence, each fixed public record of an early
abort induces the same subnormalized map
from $Z$ to $VMY^n$ for $C=1$ and $C=2$.

\end{proof}
For the same initial state and fixed choice $C=c$,
the real final state on $B_{\mathrm{out}}VMR$
is recovered from the modified execution as follows.
If $F=0$, apply Line~7's decoding to $Y_{Q_c}$
and discard all transmitted qubits except Bob's
$m$-qubit output.
If $F=1$, discard all transmitted qubits and
place $\bot$ in $B_{\mathrm{out}}$.
Indeed, $Q_c\subseteq\mathcal G$ on success.
The trace over erased qubits can be postponed
because honest Bob never acts on them, and
his decoding can be postponed because it
commutes with Alice's local operations.

Fix Alice's strategy $\mathcal D$.
In the ideal experiment, $\mathsf{Sim}_{\mathcal D}$
receives $Z$ and internally runs the modified
real execution of
Lemma~\ref{lem:mal_common_virtual_execution},
using $\mathcal D$ for Alice's operations and
the fixed choice $c=1$ for Bob's operations. For $F=0$, it decodes each $Y_{Q_i}$,
$i\in\{1,2\}$, as in Line~7:
it applies $(V^{(i)})^\dagger$, followed by
$P_{T_i}^\dagger$ on the first $m$ qubits,
which it retains as $A'_i$.
It then discards the other transmitted qubits.
For $F=1$, it discards all transmitted qubits
and prepares fixed states in $A'_1,A'_2$. The simulator gives $A'_1,A'_2$ to the genie
and outputs $\mathcal D$'s final private register
$V$ and the generated public communication $M$.
Its operations define a CPTP map
$\mathcal A:\mathcal L(Z)\to
\mathcal L(A'_1A'_2VM)$ that does not use $C$.

For $f\in\{0,1\}$, define
$\mathcal A_f(\sigma)
\triangleq
(I\otimes\ket f\!\bra f^F)\mathcal A(\sigma)
(I\otimes\ket f\!\bra f^F)$, then, for every initial state
$\rho^{CZR}=\frac12\sum_{c=1}^{2}
\ket c\!\bra c^C\otimes\rho_c^{ZR}$,
we have
\begin{align}
\rho_{\mathrm{real}}^{HVMR}(\rho,\mathcal D)
\nonumber
&\overset{(a)}{=}
\frac12\sum_{c=1}^{2}
\ket c\!\bra c^C\otimes
\Bigg[
(\operatorname{id}_{A'_c\to B_{\mathrm{out}}}
\otimes\operatorname{id}_{VMR})
\operatorname{Tr}_{A'_{\bar c}}
\left[
(\mathcal A_0\otimes\operatorname{id}_R)
(\rho_c^{ZR})
\right]
\nonumber\\
&\qquad\qquad+
\ket{\bot}\!\bra{\bot}^{B_{\mathrm{out}}}
\otimes\operatorname{Tr}_{A'_1A'_2}
\left[
(\mathcal A_1\otimes\operatorname{id}_R)
(\rho_c^{ZR})
\right]
\Bigg]
\nonumber\\
&\overset{(b)}{=}
\rho_{\mathrm{ideal}}^{HVMR}
(\rho,\mathsf{Sim}_{\mathcal D}),
\label{eq:mal_alice_output}
\end{align}
where
\begin{enumerate}[(a)]
\item holds because, for the same input $\rho_c^{ZR}$,
Lemma~\ref{lem:mal_common_virtual_execution}
gives identical states on $VMY^nR$ in the modified
execution with the simulator's internal choice $1$
and with choice $c$;

\item holds by the definition of the ideal experiment.
\end{enumerate}

The same simulator works for every allowed
initial state $\rho$, thus \eqref{eq:mal_alice_output} proves
\eqref{eq:qot_malicious_simulation} for a
dishonest Alice.
\section{Concluding Remarks}
\label{sec:concl}

We studied quantum-state oblivious transfer over the quantum erasure channel in honest-but-curious and malicious settings, with no initial shared entanglement. We established  finite-blocklength converse   and achievability bounds. The honest-but-curious bound is exact at finite blocklength, and both settings have capacity $\min\{\epsilon,1-\epsilon\}$. The achievability scheme shows that unerased channel uses can be used to teleport the chosen quantum state, while erased channel uses protect the non-chosen state. The converse shows that these two resources impose complementary limitations: Bob must receive enough quantum systems to reconstruct the chosen state, but enough systems must also be unavailable to Bob to preserve privacy of the non-chosen state.

\appendices

\section{Proof of Corollary \ref{cor:second_order_fixed_delta_achievability}}
\label{app_sec_ac}

First, suppose that $0<\epsilon<1/2$. Fix a constant $K>0$ to be
chosen, and define
\begin{align}
m_n^-
\triangleq
\left\lfloor
n\epsilon
+
\sqrt{n\epsilon(1-\epsilon)}\,\Phi^{-1}(\delta)
-K
\right\rfloor,
 \quad m_n^+
\triangleq
\left\lceil
n\epsilon
+
\sqrt{n\epsilon(1-\epsilon)}\,\Phi^{-1}(\delta)
+K
\right\rceil.
\label{eqmnplus}
\end{align}
We have
\begin{align*}
\Prb[\operatorname{Bin}(n,\epsilon)<m_n^\pm]
&=
\Prb\left[
\frac{\operatorname{Bin}(n,\epsilon)-n\epsilon}
{\sqrt{n\epsilon(1-\epsilon)}}
<
\frac{m_n^\pm-n\epsilon}
{\sqrt{n\epsilon(1-\epsilon)}}
\right]\\
&\overset{(a)}{=}
\Phi\left(
\frac{m_n^\pm-n\epsilon}
{\sqrt{n\epsilon(1-\epsilon)}}
\right)
+
O(n^{-1/2})\\
&\overset{(b)}{=}
\Phi\left(
\Phi^{-1}(\delta)
\pm
\frac{K}{\sqrt{n\epsilon(1-\epsilon)}}
+
O(n^{-1/2})
\right)
+
O(n^{-1/2})\\
&\overset{(c)}{=}
\delta
\pm
\frac{
K\phi(\Phi^{-1}(\delta))
}{
\sqrt{n\epsilon(1-\epsilon)}
}
+
O(n^{-1/2}), \numberthis\label{eqcor1a}
\end{align*}
where $(a)$ holds because  by the Berry--Esseen theorem, e.g., \cite[Th.~2]{shevtsova2011absolute}, for every  $x \in \mathbb{R}$,
\begin{align*}
\left|
\Prb\left[
\frac{\operatorname{Bin}(n,\epsilon)-n\epsilon}
{\sqrt{n\epsilon(1-\epsilon)}}<x
\right]
-\Phi(x)
\right|
\leq
\frac{0.33554}{\sqrt n}
\left(
\frac{\epsilon^2+(1-\epsilon)^2}
{\sqrt{\epsilon(1-\epsilon)}}
+0.415
\right),
\end{align*}
$(b)$ holds by \eqref{eqmnplus}, $(c)$ follows from the first-order Taylor expansion
$
\Phi\big(\Phi^{-1}(\delta)+h\big)
=
\Phi\big(\Phi^{-1}(\delta)\big)
+
\phi\big(\Phi^{-1}(\delta)\big)h
+
O(h^2)
=
\delta
+
\phi\big(\Phi^{-1}(\delta)\big)h
+
O(h^2),$ 
applied with
$
h=\pm K/\sqrt{n\epsilon(1-\epsilon)}+O(n^{-1/2})
$, and where $\phi$ denotes the standard Gaussian probability density function.

Then, choose $K$ sufficiently large and then choose $n$
sufficiently large so that, for some  $c>0$, we have from~\eqref{eqcor1a}
\begin{align}
\Prb[\operatorname{Bin}(n,\epsilon)<m_n^-]
\leq
\delta-\frac{c}{\sqrt n}, \quad
\Prb[\operatorname{Bin}(n,\epsilon)<m_n^+]
\geq
\delta+\frac{c}{\sqrt n}.\label{eqcor1c}
\end{align}

Then, for all sufficiently large $n$,
\begin{align*}
\Prb[
m_n^-\leq\operatorname{Bin}(n,\epsilon)\leq n-m_n^-
]
&=
1-
\Prb[\operatorname{Bin}(n,\epsilon)<m_n^-]
-
\Prb[\operatorname{Bin}(n,\epsilon)>n-m_n^-]\\
&\overset{(a)}{\geq}
1-\delta+\frac{c}{\sqrt n}-\Prb[\operatorname{Bin}(n,\epsilon)>n-m_n^-]\\
&\overset{(b)}{\geq}
1-\delta+\frac{c}{\sqrt n}-\Prb[\operatorname{Bin}(n,\epsilon)\geq n/2]\\
&\overset{(c)}{\geq}
1-\delta+\frac{c}{\sqrt n}-\exp\left\{
-n\left(
\frac12\ln\frac{1/2}{\epsilon}
+
\frac12\ln\frac{1/2}{1-\epsilon}
\right)
\right\}\\
&\overset{(d)}{\geq}
1-\delta+\frac{c}{\sqrt n}-e^{-c'n}\\
&\overset{(e)}{\geq}
1-\delta, \numberthis\label{eqcor1f1}
\end{align*}
where
$(a)$ holds by \eqref{eqcor1c}, $(b)$ holds because for all sufficiently large $n$, $
n-m_n^->n/2
$ since $\epsilon<1/2$ and $m_n^-=n\epsilon+O(\sqrt n)$, $(c)$ holds by a Chernoff
 bound, e.g., \cite[Th.~2.1]{mulzer2018five}, $(d)$ holds with
$c'\triangleq \frac12\ln\frac{1/2}{\epsilon}
+
\frac12\ln\frac{1/2}{1-\epsilon}
=
-\frac12\ln\big(4\epsilon(1-\epsilon)\big)
>
0
$ 
since $\epsilon<1/2$, $(e)$ holds
because $e^{-c'n}=o(n^{-1/2})$.

On the other hand, for all sufficiently large $n$,
\begin{align*}
\Prb[
m_n^+\leq\operatorname{Bin}(n,\epsilon)\leq n-m_n^+
]
&\leq
1-
\Prb[\operatorname{Bin}(n,\epsilon)<m_n^+]\\
&\overset{(a)}{\leq}
1-\delta-\frac{c}{\sqrt n}\\
&\overset{(b)}{<}
1-\delta, \numberthis\label{eqcor1f2}
\end{align*}
where $(a)$ holds by \eqref{eqcor1c}, and $(b)$ holds because $c>0$.

Then, by \eqref{eqcor1f1}, \eqref{eqcor1f2} and Theorem~\ref{thm:finite_length_achievability},
$m_n^-
\leq
M_{\mathrm{QOT}}(\epsilon,n,\delta)
<
m_n^+$, 
where  \eqref{eqmnplus} shows that $m_n^-$ and $m_n^+$
differ from
$
n\epsilon+
\sqrt{n\epsilon(1-\epsilon)}\,\Phi^{-1}(\delta)
$
by at most a constant. Hence,
$
M_{\mathrm{QOT}}(\epsilon,n,\delta)
=
n\epsilon
+
\sqrt{n\epsilon(1-\epsilon)}\,\Phi^{-1}(\delta)
+
O(1).$

Next, suppose that $1/2<\epsilon<1$. Since
$n- \operatorname{Bin}(n,\epsilon)
$
 has the same distribution as
$
 \operatorname{Bin}(n,1-\epsilon) ,
$
applying the previous argument to $\operatorname{Bin}(n,1-\epsilon)$,
where $1-\epsilon<1/2$, gives
 $M_{\mathrm{QOT}}(\epsilon,n,\delta)
=
n(1-\epsilon)
+
\sqrt{n\epsilon(1-\epsilon)}\,\Phi^{-1}(\delta)
+
O(1).
$

Finally, consider $\epsilon=1/2$. Fix a constant $K>0$ to be chosen,
and define
\begin{align}
m_{n,1/2}^-
\triangleq
\left\lfloor
\frac n2
+
\frac{\sqrt n}{2}
\Phi^{-1}\left(\frac{\delta}{2}\right)
-K
\right\rfloor,
\quad 
m_{n,1/2}^+
\triangleq
\left\lceil
\frac n2
+
\frac{\sqrt n}{2}
\Phi^{-1}\left(\frac{\delta}{2}\right)
+K
\right\rceil.
\label{eq:mnhalfplus}
\end{align}
Applying the  argument in
\eqref{eqcor1a} with $\epsilon=1/2$ and $\delta$ replaced by
$\delta/2$, and then choosing $K$ and $n$ sufficiently large as in
  \eqref{eqcor1c}, give, for some constant $c>0$,
\begin{align}
\Prb\left[
\operatorname{Bin}\left(n,\frac12\right)<m_{n,1/2}^-
\right]
\leq
\frac{\delta}{2}-\frac{c}{\sqrt n},
\label{eq:half_lower_quantile}\quad 
\Prb\left[
\operatorname{Bin}\left(n,\frac12\right)<m_{n,1/2}^+
\right]
\geq
\frac{\delta}{2}+\frac{c}{\sqrt n}.
\end{align}
Then, for all sufficiently large
$n$, we have
\begin{align*}
\Prb\left[
m_{n,1/2}^-
\leq
\operatorname{Bin}\left(n,\frac12\right)
\leq
n-m_{n,1/2}^-
\right]
& = 1-
\Prb\left[
\operatorname{Bin}\left(n,\frac12\right)<m_{n,1/2}^-
\right]
-
\Prb\left[
\operatorname{Bin}\left(n,\frac12\right)>n-m_{n,1/2}^-
\right]\\
& = 1-
\Prb\left[
\operatorname{Bin}\left(n,\frac12\right)<m_{n,1/2}^-
\right]
-
\Prb\left[
n-\operatorname{Bin}\left(n,\frac12\right)<m_{n,1/2}^-
\right]\\
&\overset{(a)}{=}
1-
2\Prb\left[
\operatorname{Bin}\left(n,\frac12\right)<m_{n,1/2}^-
\right]\\
&\overset{(b)}{\geq}
1-\delta+\frac{2c}{\sqrt n}\\
&\geq
1-\delta,
\numberthis
\label{eq:half_achievable}
\end{align*}
where $(a)$ holds because
$
n-\operatorname{Bin}(n,1/2)
$
has the same distribution as
$
\operatorname{Bin}(n,1/2)
$, $(b)$
follows from \eqref{eq:half_lower_quantile}. Similarly, 
\begin{align}
\Prb\left[
m_{n,1/2}^+
\leq
\operatorname{Bin}\left(n,\frac12\right)
\leq
n-m_{n,1/2}^+
\right]
<
1-\delta.
\numberthis
\label{eq:half_converse}
\end{align}

Hence, by \eqref{eq:half_achievable},
\eqref{eq:half_converse}, and
Theorem~\ref{thm:finite_length_achievability}, we have
$m_{n,1/2}^-
\leq
M_{\mathrm{QOT}}\left(\frac12,n,\delta\right)
<
m_{n,1/2}^+$,
and by \eqref{eq:mnhalfplus}, we conclude that
$
M_{\mathrm{QOT}}\left(\frac12,n,\delta\right)
=
\frac n2
+
\frac{\sqrt n}{2}
\Phi^{-1}\left(\frac{\delta}{2}\right)
+
O(1).$

\section{Proof of Lemma \ref{lem:schmidt_rank_growth}} \label{app_B}

\eqref{eq:sr_initial_one} holds because Alice and Bob have no
initial shared state. For each $i$, we condition only on the erasure outcomes and public
messages produced up to that time. For fixed values of these public
messages, every local operation or public-message exchange is represented
by a local linear map together with a fixed classical basis state at the
receiving party and therefore cannot increase Schmidt rank.  
Hence, the only operation that can increase the Schmidt rank across
Bob's laboratory versus the remaining systems is an unerased quantum
transmission.

Suppose immediately before an unerased transmission the pure state is on
$\mathcal L X\mathcal B$, where $X$ is the qubit to be delivered to Bob,
$\mathcal B$ is Bob's current register, and $\mathcal L$ is the current collection of registers outside Bob’s laboratory. If
$
\operatorname{SR}_{\mathcal L X:\mathcal B}(\psi)=r,
$
then
\begin{align*}
|\psi\rangle^{\mathcal L X\mathcal B}
=
\sum_{j=1}^{r}
|\alpha_j\rangle^{\mathcal L X}
|\beta_j\rangle^{\mathcal B}
=
\sum_{j=1}^{r}
\sum_{x\in\{0,1\}}
|\alpha_{j,x}\rangle^{\mathcal L}
\left(
|x\rangle^X|\beta_j\rangle^{\mathcal B}
\right),
\end{align*}
where  the last equality holds using the computational basis of the qubit $X$,
$
|\alpha_j\rangle^{\mathcal L X}
=
|\alpha_{j,0}\rangle^{\mathcal L}|0\rangle^X
+
|\alpha_{j,1}\rangle^{\mathcal L}|1\rangle^X.
$
This is a decomposition into at most $2r$ product terms across
$\mathcal L:X\mathcal B$. Hence, an unerased qubit transmission increases
Schmidt rank by at most a factor of $2$.

If the $i$-th channel use is erased, Bob receives only the erasure flag,
while the quantum input is placed in the environment $W_i$, which is part of $\mathcal L_i$, and the Schmidt rank does
not increase. Hence,
$
R_i
\leq
\begin{cases}
2R_{i-1},&z_i=1,\\
R_{i-1},&z_i=0,
\end{cases}
$
which gives \eqref{eq:sr_one_step_growth}.

\bibliographystyle{IEEEtran}
\bibliography{bib}

\end{document}